\pdfoutput=1
\documentclass[11pt,letterpaper]{article}
\usepackage[lmargin=1.0in,rmargin=1.0in,bottom=1.0in,top=1.0in,twoside=False]{geometry}

\usepackage[utf8]{inputenc}
\usepackage[T1]{fontenc}
\usepackage{lmodern}

\usepackage{fullpage,amssymb,amsmath}
\usepackage{graphicx}%,algorithmic,algorithm}%, verbatim}
\usepackage{tikz}
\usetikzlibrary{shapes}

\usepackage{xcolor}
\usepackage{mathtools}
\usepackage{microtype}
\usepackage{amsfonts}
\usepackage{comment}
\usepackage[english]{babel}
\usepackage{mathrsfs}
\usepackage[ruled,vlined]{algorithm2e}
\usepackage{blindtext}
\usepackage{thmtools}
\usepackage{thm-restate}
\usepackage{booktabs}
\usepackage[sort&compress,numbers]{natbib}

\usepackage{trimspaces}
\usepackage{nccfoots}
\usepackage{setspace}
\usepackage{inconsolata}
\usepackage{libertine}
\usepackage[absolute]{textpos}

\usepackage[backgroundcolor = blue!50]{todonotes}
\usepackage{longtable}

\definecolor{blue}{rgb}{0.1,0.2,0.5}
\definecolor{brown}{rgb}{0.6,0.6,0.2}
\usepackage[ocgcolorlinks, linkcolor={blue}, citecolor={brown}]{hyperref}
\usepackage[amsmath,amsthm,thmmarks,hyperref]{ntheorem}
\usepackage{enumerate}
\usepackage{latexsym}
\usepackage{changepage}

% CREF configuration

\usepackage[nameinlink]{cleveref}

\crefformat{page}{#2page~#1#3}%
\Crefformat{page}{#2Page~#1#3}%
\crefformat{equation}{#2(#1)#3}%
\Crefformat{equation}{#2(#1)#3}%
\crefformat{figure}{#2Figure~#1#3}%
\Crefformat{figure}{#2Figure~#1#3}%
\crefformat{section}{#2Section~#1#3}
\Crefformat{section}{#2Section~#1#3}
\crefformat{chapter}{#2Chapter~#1#3}
\Crefformat{chapter}{#2Chapter~#1#3}
\crefformat{chapter*}{#2Chapter~#1#3}
\Crefformat{chapter*}{#2Chapter~#1#3}
\crefformat{part}{#2Part~#1#3}
\Crefformat{part}{#2Part~#1#3}
\crefformat{enumi}{#2(#1)#3}
\Crefformat{enumi}{#2(#1)#3}

% end CREF configuration

% Line numbers
\usepackage[mathlines]{lineno}

\newcommand*\patchAmsMathEnvironmentForLineno[1]{%
  \expandafter\let\csname old#1\expandafter\endcsname\csname #1\endcsname
  \expandafter\let\csname oldend#1\expandafter\endcsname\csname end#1\endcsname
  \renewenvironment{#1}%
     {\linenomath\csname old#1\endcsname}%
     {\csname oldend#1\endcsname\endlinenomath}}%
\newcommand*\patchBothAmsMathEnvironmentsForLineno[1]{%
  \patchAmsMathEnvironmentForLineno{#1}%
  \patchAmsMathEnvironmentForLineno{#1*}}%
\AtBeginDocument{%
  \patchBothAmsMathEnvironmentsForLineno{equation}%
  \patchBothAmsMathEnvironmentsForLineno{align}%
  \patchBothAmsMathEnvironmentsForLineno{flalign}%
  \patchBothAmsMathEnvironmentsForLineno{alignat}%
  \patchBothAmsMathEnvironmentsForLineno{gather}%
  \patchAmsMathEnvironmentForLineno{split}%%
  \patchBothAmsMathEnvironmentsForLineno{multline}}
% End line numbers

% BEGIN ntheorem configuration

\theoremnumbering{arabic}
\theoremstyle{plain}
\theoremsymbol{}
\theorembodyfont{\itshape}
\theoremheaderfont{\normalfont\bfseries}
\theoremseparator{.}

\newtheorem{theorem}{Theorem}
\crefformat{theorem}{#2Theorem~#1#3}
\Crefformat{theorem}{#2Theorem~#1#3}
\crefformat{cthmin}{#2Theorem~#1#3}
\Crefformat{cthmin}{#2Theorem~#1#3}

\crefformat{definition}{#2Definition~#1#3}
\Crefformat{definition}{#2Definition~#1#3}

\newcommand{\newtheoremwithcrefformat}[2]{%
  \newtheorem{#1}[theorem]{#2}%
  \crefformat{#1}{##2\MakeUppercase#1~##1##3}%
  \Crefformat{#1}{##2\MakeUppercase#1~##1##3}%
}
\newcommand{\newseptheoremwithcrefformat}[2]{%
  \newtheorem{#1}{#2}%
  \crefformat{#1}{##2\MakeUppercase#1~##1##3}%
  \Crefformat{#1}{##2\MakeUppercase#1~##1##3}%
}
\newcommand{\newclaimwithcrefformat}[2]{%
  \newtheorem{#1}{#2}[theorem]%
  \crefformat{#1}{##2\MakeUppercase#1~##1##3}%
  \Crefformat{#1}{##2\MakeUppercase#1~##1##3}%
}

\newtheoremwithcrefformat{lemma}{Lemma}

\newtheoremwithcrefformat{proposition}{Proposition}
\newtheoremwithcrefformat{observation}{Observation}
\newseptheoremwithcrefformat{conjecture}{Conjecture}
\newtheoremwithcrefformat{corollary}{Corollary}
\newclaimwithcrefformat{claim}{Claim}

\newenvironment{cthm}[1]
  {\cthmin}
  {\endcthmin}

\theoremstyle{definition}
\newtheorem*{question*}{Question}

\theorembodyfont{\upshape}
%\newseptheoremwithcrefformat{example}{Example}
\newtheoremwithcrefformat{remark}{Remark}
%\newseptheoremwithcrefformat{definition}{Definition}
%\newtheoremwithcrefformat{definition}{Definition}

\theoremstyle{nonumberplain}
\theoremheaderfont{\scshape}
\theorembodyfont{\normalfont}
\theoremsymbol{\ensuremath{\square}}

\theoremsymbol{\ensuremath{\lrcorner}}

%\def\cqedsymbol{\ifmmode$\lrcorner$\else{\unskip\nobreak\hfil
%\penalty50\hskip1em\null\nobreak\hfil$\lrcorner$
%\parfillskip=0pt\finalhyphendemerits=0\endgraf}\fi} 
%
%\newcommand{\cqed}{\renewcommand{\qed}{\cqedsymbol}}

% END ntheorem configuration

\renewcommand{\P}{\textsf{P}}
\newcommand{\NP}{\textsf{NP}}

\newcommand{\cC}{\mathcal{C}}

\newcommand{\cH}{\mathcal{H}}

\newcommand{\cX}{\mathcal{X}}
\newcommand{\cY}{\mathcal{Y}}

\newcommand{\ork}[1]{\mathrm{OR}_{#1}}

\newcommand{\mrc}[1]{\mathrm{mrc}(#1)}

\newcommand{\N}{\mathbb{N}}

\newcommand{\R}{\mathbb{R}}

\renewcommand{\epsilon}{\varepsilon}
\newcommand{\Oh}{\mathcal{O}}

%\newcounter{aux}
\renewcommand{\leq}{\leqslant}
\renewcommand{\geq}{\geqslant}

\renewcommand{\setminus}{-}

\newcommand{\lhomo}[1]{\textsc{LHom}(\ensuremath{#1})\xspace}
\newcommand{\lhomorc}[1]{\textsc{LHom}_{\mathrm{rc}}(\ensuremath{#1})\xspace}
\newcommand{\mchomo}[1]{\textsc{MCHom}(\ensuremath{#1})\xspace}
\newcommand{\whomo}[1]{\textsc{WHom}(\ensuremath{#1})\xspace}
\newcommand{\sat}[1]{\ensuremath{#1}\textsc{-Sat}\xspace}

\newcommand{\coloring}[1]{\ensuremath{#1}\textsc{-Coloring}\xspace}

\newcommand{\wei}{\mathfrak{w}}
\newcommand{\QQ}{\mathbb{Q}}
\newcommand{\diam}{\mathrm{diam}}
\newcommand{\len}{\mathrm{len}}
\newcommand{\IG}{\mathsf{IG}}

\newcommand{\bb}{\mathrm{bb}}
\newcommand{\area}{\mathrm{area}}

\declaretheorem[sibling=theorem]{lemma}

%\linenumbers
%\pagestyle{plain}

\begin{document}
\title{Computing list homomorphisms\\
in geometric intersection graphs}

%\title{APPENDIX: Computing list homomorphisms\\
%in geometric intersection graphs (full version of the paper)}

\date{}
\author{S\'andor Kisfaludi-Bak%
\thanks{
Department of Computer Science, Aalto University, Finland
\newline
E-mail: \texttt{sandor.kisfaludi-bak@aalto.fi}.
}
\and Karolina Okrasa% 
\thanks{
Faculty of Mathematics and Information Science, Warsaw University of Technology and Institute of Informatics, University of Warsaw
\newline
E-mail: \texttt{k.okrasa@mini.pw.edu.pl}.\newline
Supported by the European Research Council (ERC) under the European
Union’s Horizon 2020 research and innovation programme Grant Agreement no. 714704.}
\and Paweł Rzążewski%
\thanks{
 Faculty of Mathematics and Information Science, Warsaw University of Technology and Institute of
Informatics, University of Warsaw
\newline
 E-mail: \texttt{p.rzazewski@mini.pw.edu.pl}.\newline
Supported by Polish National Science Centre grant no. 2018/31/D/ST6/00062.
}}

\begin{titlepage}
\def\thepage{}
\thispagestyle{empty}
\maketitle

\begin{abstract}
A homomorphism from a graph $G$ to a graph $H$ is an edge-preserving mapping from $V(G)$ to $V(H)$.
Let $H$ be a fixed graph with possible loops.
In the list homomorphism problem, denoted by \textsc{LHom}($H$), the instance is a graph $G$,
whose every vertex is equipped with a subset of $V(H)$, called list.
We ask whether there exists a homomorphism from $G$ to $H$, such that every vertex from $G$ is mapped to a vertex from its list.

We study the complexity of the \textsc{LHom}($H$) problem in intersection graphs of various geometric objects.
In particular, we are interested in answering the question for what graphs $H$ and for what types of geometric objects,
the \textsc{LHom}($H$) problem can be solved in time subexponential in the number of vertices of the instance.

We fully resolve this question for string graphs, i.e., intersection graphs of continuous curves in the plane.
Quite surprisingly, it turns out that the dichotomy exactly coincides with the analogous dichotomy for graphs
excluding a fixed path as an induced subgraph [Okrasa, Rz\k{a}\.zewski, STACS 2021].

Then we turn our attention to subclasses of string graphs, defined as intersections of fat objects.
We observe that the (non)existence of subexponential-time algorithms in such classes is closely related to the size
$\mathrm{mrc}(H)$ of a maximum reflexive clique in $H$, i.e., maximum number of pairwise adjacent vertices, each of which has a loop.
We study the maximum value of $\mathrm{mrc}(H)$ that guarantees the existence of a subexponential-time 
algorithm for \textsc{LHom}($H$) in intersection graphs of (i) convex fat objects, (ii) fat similarly-sized objects, and (iii) disks.
In the first two cases we obtain optimal results, by giving matching algorithms and lower bounds.

Finally, we discuss possible extensions of our results to weighted generalizations of \textsc{LHom}($H$).

\end{abstract}
\end{titlepage}

\section{Introduction}
For a family $\mathcal{S}$ of sets, its intersection graph is the graph whose vertex set is  $\mathcal{S}$, and two sets are adjacent if and only if they have a nonempty intersection.
It is straightforward to observe that every graph is an intersection graph of some family of sets: each vertex can be represented by the set of incident edges. More efficient intersection representations were studied by Erd\H{o}s, Goodman, and P\'osa~\cite{erdos1966}.

A prominent role is played by \emph{geometric intersection graphs}, i.e., intersection graphs of some geometrically defined object (usually subsets of the plane). Some best studied families of this type are interval graphs~\cite{Lekkeikerker1962,GOLUMBIC2004171} (intersection graphs of segments on a line), disk graphs~\cite{DBLP:journals/dm/ClarkCJ90,DBLP:conf/waoa/Fishkin03} (intersection graphs of disks in the plane), segment graphs~\cite{KRATOCHVIL1994289} (intersection graphs of segments), or string graphs~\cite{DBLP:journals/jct/Kratochvil91,DBLP:journals/jct/Kratochvil91a} (intersection graphs of continuous curves).
Geometric intersection graphs are studied not only for their elegant structural properties, but also for potential applications.
Indeed, many real-life graphs have some underlying geometry~\cite{DBLP:conf/soda/KaufmannKLS06,JUNGCK20151,483546}. 
Thus the complexity of graph problems restricted to various classes of geometric intersection graphs has been an active research topic~\cite{Marx07a,Marx05,FominLS12,MarxP15,DBLP:conf/compgeom/FominLP0Z20,DBLP:journals/dcg/FominLPSZ19,DBLP:journals/jacm/BonamyBBCGKRST21,DBLP:journals/algorithmica/BonnetR19,BergBKMZ20}.

The underlying geometric structure can sometimes be exploited to obtain much faster algorithms than for general graphs.
For example, for each fixed $k$, the \coloring{k} problem is polynomial-time solvable in interval graphs, while for $k\geq 3$ the problem is \NP-hard and thus unlikely to be solvable in polynomial time in general graphs.
For disk graphs, the \coloring{k} problem remains \NP-hard for $k \geq 3$, but still it is in some sense more tractable than for general graphs.
Indeed, for every fixed $k \geq 3$, then \coloring{k} problem can be solved in \emph{subexponential time} $2^{\Oh(\sqrt{n})}$ in $n$-vertex disk graphs, while assuming the Exponential-Time Hypothesis (ETH)~\cite{ImpagliazzoPaturi,DBLP:journals/jcss/ImpagliazzoPZ01} no such algorithm can exist for general graphs. Furthermore, the running time of the above algorithm is optimal under the ETH~\cite{DBLP:conf/ciac/Kisfaludi-BakZ17}. 
Bir\'o et al.~\cite{DBLP:journals/jocg/BiroBMMR18} studied the problem for superconstant number of colors and showed that if $k = o(n)$, then \coloring{k} admits a subexponential-time algorithm in disk graphs, and proved almost tight complexity bounds conditioned on the ETH.

As a stark contrast, they showed that \coloring{6} does not admit a subexponential-time algorithm in segment graphs.
This was later improved by Bonnet and Rz\k{a}\.zewski~\cite{DBLP:journals/algorithmica/BonnetR19} who showed that already \coloring{4} cannot be solved in subexponential time in segment graphs, but \coloring{3} admits a $2^{\Oh(n^{2/3} \log n)}$-algorithm in all string graphs.
They also showed several positive and negative results concerning subexponential-time algorithms for segment and string graphs.

This line of research was continued in a more general setting by Okrasa and Rz\k{a}\.zewski~\cite{DBLP:journals/jcss/OkrasaR20} who considered variants of the graph homomorphism problem in string graphs. For graphs $G$ and $H$, a homomorphism from $G$ to $H$ is an edge-preserving mapping from $V(G)$ to $V(H)$. Note that a homomorphism to $K_k$ is precisely a proper $k$-coloring, so graph homomorphisms generalize colorings.
Among other results, Okrasa and Rz\k{a}\.zewski~\cite{DBLP:journals/jcss/OkrasaR20} fully classified the graphs $H$ for which a weighted variant of the homomorphism problem admits a subexponential-time algorithm in string graphs (assuming the ETH). It turns out that the substructure of $H$ that makes the problem hard is an induced 4-cycle.

\paragraph{Separators in geometric intersection graphs.}
Almost all subexponential-time algorithms for geometric intersection graphs rely on the existence of balanced separators that are small or simple in some other way. This is very convenient for a divide-\&-conquer approach -- due to the simplicity of the separator we can guess 
how the solution looks on the separator, and then recurse into connected components of the graph with the separator removed.

For example it is known that $n$-vertex disk graphs, where each point is contained in at most $k$ disks, admit a balanced separator of size $\Oh(\sqrt{nk})$~\cite{MillerTTV97,SmithW98}. Note that this result implies the celebrated planar separator theorem by Lipton and Tarjan~\cite{LiptonT79}, since by the famous Circle Packing Theroem of Koebe~\cite{koebe1936kontaktprobleme}, every planar graph is the intersection graph of internally disjoint disks (which translates to $k \leq 2$).

This separator theorem was recently significantly extended by De Berg et al.~\cite{BergBKMZ20} who introduced the notion of \emph{clique-based separators}. Roughly speaking, a clique-based separator consists of cliques, instead of measuring its size (i.e., the number of vertices), we measure its \emph{weight} defined as the sum of logarithms of sizes of the cliques, see \cref{sec:cliquebased}.
This approach shifts the focus from ``small'' separators to separators with ``simple'' structure, and proved helpful in obtaining ETH-tight algorithms for various combinatorial problems in intersection graphs of similarly sized fat or convex fat objects.
The direction was followed by De Berg et al.~\cite{BergKMT21} who proved that some other classes of intersection graphs admit balanced clique-based separators of small weight.

For general string graphs we also know a separator theorem: Lee~\cite{DBLP:conf/innovations/Lee17} proved that every string graph with $m$ edges admits a balanced separator of size $\Oh(\sqrt{m})$ (see also Matou\v{s}ek~\cite{Matousek14}), and this bound is known to be optimal. Note that each planar graph is a string graph
and has linear number of edges, so this result also implies the planar separator theorem.

In all above approaches the size (or the weight) of the separator, as well as the balance factor, were measured in purely combinatorial terms. However, some alternative approaches, with more geometric flavor, were also used.
For example Alber and Fiala~\cite{AlberF04} showed a separator theorem for intersection graphs of disks with diameter bounded from below and from above, where both the size of the separator and the size of each component of the remaining part of the graph is measured in terms of the \emph{area} occupied by the geometric representation.

\paragraph{Our contribution.} In this paper we study the complexity of the list variant of the graph homomorphism problem in intersection graphs of geometric objects. For a fixed graph $H$ (with possible loops), by \lhomo{H} we denote the computational problem,
where every vertex of the input graph $G$ is equipped with the subset of $V(H)$ called \emph{list},
and we need to determine whether there exists a homomorphism from $G$ to $H$, such that every vertex from $G$ is mapped to a vertex from its list.

First, in \cref{sec:string}, we study the complexity of \lhomo{H} in string graphs and exhibit the full complexity dichotomy,
i.e., we fully characterize graphs $H$ for which the \lhomo{H} problem can be solved in subexponential time.
It turns out that the positive cases are precisely the graphs $H$ that not \emph{predacious}. 
The class of predacious graphs was defined by Okrasa and Rz\k{a}\.zewski~\cite{DBLP:journals/jcss/OkrasaR20} who studied the complexity of \lhomo{H} in $P_t$-free graphs (i.e., graphs excluding a $t$-vertex path as an induced subgraph).
It is quite surprising that the complexity dichotomies for \lhomo{H} in $P_t$-free graphs and in string graphs coincide;
note that the classes are incomparable.
%Indeed, every path is a string graph, and the 1-subdivision of $K_5$ is $P_{10}$-free, but is not a string graph.

Our approach closely follows the one by Okrasa and Rz\k{a}\.zewski~\cite{DBLP:journals/jcss/OkrasaR20}.
First we show that if $H$ does not belong to the class of predacious graphs, then a combination of branching on a high-degree vertex and divide-\&-conquer approach using the string separator theorem yields a subexponential-time algorithm.
For the hardness counterpart, we observe that the graphs constructed in~\cite{DBLP:journals/jcss/OkrasaR20} are actually string graphs.
Summing up, we obtain the following result.

\begin{theorem}\label{thm:strings}
Let $H$ be a fixed graph.
\begin{enumerate}[(a)]
\item If $H$ is not predacious, then $\lhomo{H}$ can be solved in time $2^{\Oh( n^{2/3} \log n)}$ in $n$-vertex string graphs, even if a geometric representation is not given.
\item Otherwise, assuming the ETH, there is no algorithm for $\lhomo{H}$ working in time $2^{o(n)}$ in string graphs, even if they are given along with a geometric representation.
\end{enumerate}
\end{theorem}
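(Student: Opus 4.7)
The plan is to prove the two parts separately, adapting the approach of Okrasa--Rz\k{a}\.zewski~\cite{DBLP:journals/jcss/OkrasaR20} from $P_t$-free graphs to string graphs. For part (a), I would combine a branching procedure on high-degree vertices with Lee's separator theorem, which yields a balanced separator of size $\Oh(\sqrt{m})$ in string graphs and can be computed without a geometric representation. Set a degree threshold $\Delta := n^{1/3}$. If $G$ has maximum degree at most $\Delta$, then $G$ has $\Oh(n^{4/3})$ edges, so Lee's theorem gives a balanced separator $S$ with $|S| = \Oh(n^{2/3})$. We enumerate all list homomorphisms of $G[S]$ into $H$ (at most $|V(H)|^{|S|} = 2^{\Oh(n^{2/3})}$ options), restrict the lists in $G - S$ accordingly, and recurse on each connected component, which has at most $2n/3$ vertices. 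Otherwise some vertex $v$ has degree larger than $n^{1/3}$; we branch on the image of $v$ from its list, so that a single branching step simultaneously shrinks the lists of more than $n^{1/3}$ neighbors.

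The main obstacle is to ensure that this branching step makes enough progress when $H$ is not predacious. I would first preprocess the instance to eliminate trivial lists and to normalize the remaining ones into an incomparable form, exactly as in~\cite{DBLP:journals/jcss/OkrasaR20}; the structural property of non-predacious $H$ then guarantees that fixing the image of $v$ strictly shrinks the list of each of its neighbors. Hence each branching step decreases an appropriate potential (for instance $\Phi := \sum_u \log_2 |L(u)|$) by $\Omega(n^{1/3})$, while one separator step reduces $\Phi$ to at most $2\Phi/3 + \Oh(n^{2/3} \log |V(H)|)$ at the cost of $|V(H)|^{\Oh(n^{2/3})}$ subproblems. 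A standard unrolling of this recurrence yields the target running time $2^{\Oh(n^{2/3} \log n)}$.

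For part (b), I would invoke the ETH-hardness reduction of Okrasa--Rz\k{a}\.zewski showing that $\lhomo{H}$ has no $2^{o(n)}$-time algorithm for predacious $H$ on arbitrary graphs, and verify that the graphs produced by that reduction admit a string representation that can be output in polynomial time. Since the reduction is a gadget composition whose interconnection pattern has bounded-degree structure on both sides and can be laid out on a grid, each vertex of the constructed graph can be realized as a curve routed through suitable corridors; whenever two vertices should remain non-adjacent, the corresponding curves can be separated or routed to cross harmlessly, exploiting the key flexibility of string graphs that curve crossings do not enforce edges. Producing an explicit drawing of the gadgets once and composing them along the reduction then completes the proof.
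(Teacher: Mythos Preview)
Your overall plan matches the paper's, but both parts have genuine gaps.

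For part~(a), the claim that ``fixing the image of $v$ strictly shrinks the list of each of its neighbors'' is false. Non-predaciousness only says that no two incomparable $2$-element sets are complete to each other; it does \emph{not} say that every single colour $a\in L(v)$ has a non-neighbour in every $L(u)$. Concretely, if $L(v)=\{a_1,a_2\}$ and some neighbour $u$ has $L(u)=\{b_1,b_2\}$ with $a_2b_2$ the only missing edge, then the branch $v\mapsto a_1$ does not shrink $L(u)$ at all; a different neighbour may have the missing edge at $a_1$ instead. Hence branching over all of $L(v)$ gives only $F(N)\le |V(H)|\cdot F(N-1)$, which is useless. The paper repairs this with a pigeonhole step: among the $>n^{1/3}$ neighbours of $v$, at least $\ell:=n^{1/3}/2^{|V(H)|}$ share the \emph{same} list $L'$, and non-predaciousness then yields one pair $a\in L(v)$, $b\in L'$ with $ab\notin E(H)$. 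The branch is a $2$-way split: either delete $a$ from $L(v)$ (progress $1$), or set $L(v)=\{a\}$ (progress $\ge\ell$, since $b$ leaves $\ell$ lists). The recursion $F(N)\le F(N-1)+F(N-\ell)$ now gives $2^{\Oh(n^{2/3}\log n)}$. (A minor side issue: Lee's separator is only known to be polynomially computable \emph{given} a representation; without one the paper brute-forces over all size-$\Oh(n^{2/3})$ subsets, which still fits the budget.)

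For part~(b), the sentence ``curve crossings do not enforce edges'' is simply wrong: in an intersection graph, two crossing curves \emph{are} adjacent by definition, so you cannot route curves to ``cross harmlessly'' and stay non-adjacent. What the paper does instead is build instances in which every unavoidable intersection is an edge that imposes no constraint because the two endpoints' lists are complete in $H$ --- precisely the place where the predator is exploited. Concretely, the paper goes through the factorization of $H$: if the offending factor is a strong split graph, the existing reduction already outputs split graphs (which are string graphs); otherwise the associated bipartite graph is undecomposable, and the paper gives an explicit \textsc{3-Sat} reduction producing a \emph{grid graph} of axis-parallel segments, with the $\ork{3}$- and $(a/b\to c/d)$-gadgets drawn as short segment paths. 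Your proposal needs to supply such a concrete string realization rather than a routing hand-wave.
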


Then in \cref{sec:algo} we turn our attention to subclasses of string graphs defined by intersections of fat objects.
We observe that in this case the parameter of the graph $H$ that seems to have an influence on the (non)existence of subexponential-time algorithms is the size of the maximum reflexive clique, denoted by $\mrc{H}$. Here, by a reflexive clique we mean the set of pairwise adjacent vertices, each of which has a loop.
We focus on the following question. 
\begin{adjustwidth}{1cm}{1cm}
\begin{question*}
For a class $\cC$ of geometric objects, what is the maximum $k$ (if any), such that for every graph $H$ with $\mrc{H} \leq k$,
the \lhomo{H} problem admits a subexponential-time algorithm in intersection graphs of objects from $\cC$?
\end{question*}
\end{adjustwidth}

Note that $k$ from the question might not exist, as for example \coloring{4} (and thus \lhomo{K_4}) does not admit a subexponential-time algorithm in segment graphs~\cite{DBLP:journals/algorithmica/BonnetR19}, while $\mrc{K_4}=0$.

First, we show that the existence of clique-based separators of sublinear weight is sufficient 
to provide subexponential-time algorithms for the case $\mrc{H} \leq 1$.
In particular, this gives the following result.

\begin{restatable}{theorem}{thmcliquebased}
%\begin{theorem}
\label{thm:cliquebased}
Let $H$ be a graph with $\mrc{H} \leq 1$. Then $\lhomo{H}$ can be solved in time:
\begin{enumerate}[(a)]
\item $2^{\Oh(\sqrt{n})}$ in $n$-vertex intersection graphs of fat convex objects,
\item $2^{\Oh(n^{2/3}\log n)}$ in $n$-vertex pseudodisk intersection graphs.
\end{enumerate}
provided that the instance graph is given along with a geometric representation.
%\end{theorem}
\end{restatable}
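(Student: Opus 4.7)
The plan is to combine the clique-based separator framework of De Berg et al.~\cite{BergBKMZ20,BergKMT21} with a divide-and-conquer strategy. The new ingredient is a combinatorial lemma that exploits $\mrc{H}\le 1$: for any fixed $H$, the number of list homomorphisms from an $s$-vertex clique of $G$ to $H$ is polynomial in $s$.

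I would first prove the counting lemma. For a list homomorphism $\phi \colon K \to H$ from an $s$-clique $K$, the image $\phi(K)$ is a clique of $H$, and since $\mrc{H}\le 1$, at most one vertex of $\phi(K)$ is reflexive. Consequently, at most one target vertex $r\in V(H)$ can receive more than one preimage; every other image vertex is hit exactly once and lies in an irreflexive clique of $H$ adjacent to $r$ (when $r$ exists). The number of such maps is bounded by choosing $r$ (at most $\Oh(1)$ options), the subset $R\subseteq V(K)$ sent to $r$, and an injection from $V(K)\setminus R$ into the appropriate irreflexive clique of $H$. Because $|V(K)\setminus R|$ is at most the irreflexive clique number of $H$, i.e.\ $\Oh(1)$, the number of admissible $R$ is $\sum_{j=\Oh(1)}\binom{s}{j}=\Oh(s^{|V(H)|})$, and the injection contributes another $\Oh(1)$ factor, giving the desired polynomial bound.

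Given the lemma, I would invoke the appropriate separator theorem. By~\cite{BergBKMZ20}, every $n$-vertex intersection graph of fat convex objects admits a balanced clique-based separator $S$ of weight $w(S)=\Oh(\sqrt n)$, where $w(S)$ is the sum of $\log|C|$ over the cliques $C$ that form $S$; by~\cite{BergKMT21}, pseudodisk intersection graphs admit one of weight $\Oh(n^{2/3}\log n)$. Each connected component of $G-S$ contains at most $2n/3$ vertices and inherits a geometric representation. Applying the counting lemma to each clique of the separator, the number of valid partial list homomorphisms on $S$ is at most $\prod_{C}\Oh(|C|^{|V(H)|})=2^{\Oh(w(S))}$, which we enumerate exhaustively. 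For each such partial assignment, I would restrict the lists of neighbors of $S$ accordingly and recursively solve $\lhomo{H}$ on every component.

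This yields a recurrence of the form $T(n)\le 2^{\Oh(w(S))}\cdot \sum_i T(n_i)$ with $\sum_i n_i \le n$ and $\max_i n_i \le 2n/3$. Since $w(S)=\Oh(n^c)$ with $c<1$ in both cases, a routine induction with the converging geometric sum $\sum_i (2/3)^{ic}$ absorbs the per-level overhead and delivers the claimed bounds $2^{\Oh(\sqrt n)}$ and $2^{\Oh(n^{2/3}\log n)}$. The main obstacle, and the only place where the hypothesis $\mrc{H}\le 1$ is used, is the counting lemma: a single reflexive edge in $H$ would produce $\Theta(|V(H)|^{s})$ clique homomorphisms, obliterating the separator-weight budget and, as discussed in the introduction, is known to actually preclude subexponential algorithms in some geometric settings.
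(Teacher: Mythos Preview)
Your proposal is correct and follows essentially the same route as the paper: the paper first proves a general statement (\cref{thm:cliquebased-general}) that for any hereditary class with balanced clique-based separators of weight $\Oh(n^\alpha)$ the problem is solvable in time $2^{\Oh(n^\alpha)}$, using exactly your counting lemma (via \cref{lem:mapclique}: at most $|I(H)|$ vertices of a clique go to irreflexive vertices, and the rest to a single reflexive vertex since $\mrc{H}\le 1$) combined with the same divide-and-conquer recursion, and then instantiates it with the separator bounds of~\cite{BergBKMZ20,BergKMT21}. The only cosmetic difference is that you phrase the clique-coloring count via choosing the fiber of the reflexive vertex, whereas the paper phrases it via choosing the $\Oh(1)$ vertices mapped to $I(H)$; both yield the same $|C|^{\Oh(|V(H)|)}$ bound.
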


Next, we study the intersection graphs of fat, similarly-sized objects.
The exact definition of these families is given in \cref{sec:prelim},
but, intuitively, each object should contain a disk of constant diameter, 
and be contained in a disk of constant diameter.

We show that for such graphs subexponential-time algorithms exist even for the case $\mrc{H} \leq 2$.
Our proof is based on a new geometric separator theorem, which measures the size of the separator in terms of the number of vertices,
and the size of the components of the remaining graph in terms of the area.

\begin{restatable}{theorem}{thmfatssized}
%\begin{theorem}
\label{thm:fatssized}
Let $H$ be a graph with $\mrc{H} \leq 2$. Then $\lhomo{H}$ can be solved in time
$2^{\Oh( n ^{2/3} \log n)}$ in $n$-vertex intersection graphs of fat, similarly-sized objects,
provided that the instance graph is given along with a geometric representation.
%\end{theorem}
\end{restatable}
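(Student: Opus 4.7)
The plan is to prove a new geometric separator theorem for fat similarly-sized objects that simultaneously controls the separator size in the graph-theoretic sense and the component sizes in the geometric sense, then combine it with a recursive branching algorithm that exploits the constraint $\mrc{H} \leq 2$.

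First, I would establish the following separator theorem: for an intersection graph $G$ of fat similarly-sized objects whose representation lies in a planar region $R$ of area $A$, there exists a balanced vertex separator $S \subseteq V(G)$ with $|S| = \Oh(A/\ell)$ such that every connected component of $G-S$ is contained geometrically in a sub-region of area at most $\ell^2$. The argument uses a uniformly random shift of an axis-parallel grid of cell side length $\ell$: because each object has diameter $\Oh(1)$ by the similarly-sized-fat property, it crosses any fixed grid line with probability $\Oh(1/\ell)$, so by linearity of expectation the number of cut objects is $\Oh(n/\ell)$, and standard derandomization fixes a good shift. Setting $\ell \approx A^{1/3}$ balances both quantities at $\Oh(A^{2/3})$; since each object has area $\Omega(1)$ we have $A = \Oh(n)$ (up to trimming empty space), so both measures are $\Oh(n^{2/3})$.

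Next, the algorithm recurses on the area parameter. Given the separator $S$, we enumerate all list homomorphisms $\varphi \colon S \to H$ consistent with the lists, propagate each $\varphi$ to restrict the lists of vertices in $N(S) \setminus S$ inside each component, and recurse on each component. At the top level the enumeration produces at most $|V(H)|^{|S|} \le 2^{\Oh(n^{2/3})}$ candidates, since $|V(H)|$ is a constant. Unwinding the recurrence $T(A) \le 2^{\Oh(A^{2/3})} \cdot \sum_i T(A_i)$ with $\sum_i A_i \le A$ and each $A_i \le A/2$ gives $T(n) \le 2^{\Oh(n^{2/3} \log n)}$, as claimed.

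The main obstacle, and the place where $\mrc{H} \leq 2$ becomes essential, is the base of the recursion: a component whose area is already $\Oh(1)$ but which may still contain many fat objects (all pairwise overlapping and thus forming a large clique). In such a component we tile the constant-area region by a constant number of constant-size cells, and within each cell the objects pairwise intersect by the similarly-sized-fat geometry, yielding a constant-size clique cover. For a clique $K$ of size $k$, the image under a list homomorphism must be a clique in $H$; the worst case under $\mrc{H} \le 2$ is that the image is a reflexive edge $\{u,v\}$, in which case each vertex of $K$ independently chooses $u$ or $v$. Constraints between overlapping cliques in the cover then encode as a 2-SAT instance, solvable in polynomial time. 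If one had $\mrc{H} \geq 3$, a reflexive triangle would introduce genuine $3$-valued constraints and defeat subexponential solvability, which is consistent with the matching lower bound.

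Combining the outer separator-based divide-and-conquer with the inner 2-SAT subroutine at the base case delivers the claimed $2^{\Oh(n^{2/3} \log n)}$ running time for $\lhomo{H}$ on $n$-vertex intersection graphs of fat similarly-sized objects whenever $\mrc{H} \leq 2$.
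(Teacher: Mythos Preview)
Your approach is essentially the paper's: an area-based geometric separator, divide-and-conquer with brute-force coloring of the separator, and a 2-SAT (Edwards) base case exploiting $\mrc{H}\le 2$. Two points need repair, however.

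First, your recurrence $T(A)\le 2^{\Oh(A^{2/3})}\sum_i T(A_i)$ presumes the separator has size $\Oh(A^{2/3})$, but your own derivation gives $|S|=\Oh(n/\ell)$, and inside a subproblem $n$ need not be $\Oh(A)$: an arbitrarily large clique can sit in area $\Oh(1)$. The claim ``$A=\Oh(n)$ up to trimming empty space'' does not rescue this, since the needed direction is $n=\Oh(A)$. The paper sidesteps the issue by stopping the area recursion as soon as the bounding-box area drops to $\Oh(n^{2/3})$ (for the \emph{current} $n$), rather than recursing to constant area; the depth is then $\Oh(\log \mu)=\Oh(\log n)$ with separators of size $\Oh(n^{2/3})$ throughout, yielding the $2^{\Oh(n^{2/3}\log n)}$ bound.

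Second, your base case asserts that the image of a clique is at worst a reflexive edge, but this ignores the irreflexive part of $H$: a clique in $G$ may map onto a clique of $H$ containing irreflexive vertices (one preimage each), so lists need not have size $\le 2$ and the 2-SAT reduction does not apply directly. The paper's fix is to note that at most $|I(H)|=\Oh(1)$ vertices of each cell-clique map to $I(H)$; one branches, per cell-clique, on these vertices and their colors together with the choice of reflexive clique (at most $n^{\Oh(1)}$ options per cell-clique, hence $n^{\Oh(\area)}=n^{\Oh(n^{2/3})}$ total), after which every remaining list is contained in a reflexive clique of size $\le 2$ and Edwards' theorem applies.
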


In \cref{sec:lower} we complement these results by showing that both \cref{thm:cliquebased}~(a) and \cref{thm:fatssized} are optimal in terms of the value of $\mrc{H}$. More precisely, 
we prove that there are graphs $H_1$ and $H_2$ with $\mrc{H_1}=2$ and $\mrc{H_2}=3$,
such that \lhomo{H_1} does not admit a subexponential-time algorithm in intersection graphs of equilateral triangles,
and \lhomo{H_2} does not admit a subexponential-time algorithm in intersection graphs of fat similarly-sized triangles.

A very natural question is to find the minimum value of $\mrc{H}$ that guarantees the existence of subexponential-time algorithms for \lhomo{H} in \emph{disk graphs}. By \cref{thm:cliquebased}~(a) we know that it is at least 1. However, disk graphs admit many nice structural properties that proved very useful in the construction of algorithms.
Unfortunately, we were not able to obtain any stronger algorithmic results for disk graphs.

For the lower bounds, we note that the constructions in our hardness reductions essentially used that triangles can ``pierce each other'', which cannot be done with disks (actually, even with pseudodisks, see \cref{sec:prelim} for the definition).
The best lower bound we could provide for disk graphs is the following theorem.

\begin{restatable}{theorem}{thmdiskslower}
%\begin{theorem}
\label{thm:disks-lower}
Assume the ETH.
There is a graph $H$ with $\mrc{H}=4$, such that $\lhomo{H}$ cannot be solved in time $2^{o(n/\log n)}$ in $n$-vertex disk intersection graphs, even if they are given along with a geometric representation.
%\end{theorem}
\end{restatable}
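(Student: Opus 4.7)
The plan is to reduce from 3-SAT to $\lhomo{H}$ on disk graphs. Under the ETH, 3-SAT with $n$ variables and $m = O(n)$ clauses cannot be solved in $2^{o(n)}$ time. We aim to produce a disk graph on $N = O(n \log n)$ vertices, together with an explicit disk representation and a list assignment to the vertices, such that a valid list homomorphism corresponds exactly to a satisfying truth assignment. An algorithm running in time $2^{o(N/\log N)}$ on this instance would then solve 3-SAT in $2^{o(n)}$ time, contradicting the ETH.

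The first ingredient is a carefully chosen fixed target graph $H$ with $\mrc{H}=4$. The reflexive $K_4$ inside $H$ provides the ``degrees of freedom'' that we will need to simulate wire crossings with disks (which cannot topologically pierce each other, unlike the equilateral and fat similarly-sized triangles used for the earlier $\mrc{H_1}=2$ and $\mrc{H_2}=3$ lower bounds in \cref{sec:lower}). In addition to the reflexive $K_4$, $H$ contains a few non-reflexive ``selector'' vertices used to encode binary truth values and to obstruct trivial satisfying homomorphisms. The reduction uses a small repertoire of lists, namely subsets of $V(H)$ that force precisely the desired behaviour at variables, clauses, and along wires.

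The disk graph $G$ is built in three layers. A variable layer contains one constant-size variable gadget per 3-SAT variable, with lists that force one of two colourings (truth values). A clause layer contains one constant-size clause gadget per clause, with lists that forbid exactly the all-false combination of its three inputs. The two layers are connected by wires: each wire is a path of disks whose lists propagate a truth value, broken up by occasional crossing gadgets where two wires must cross. Each crossing is realised by a bounded number of additional disks whose lists use the reflexive $K_4$ of $H$ to independently pass two signals through a shared planar region. By placing variable and clause gadgets on an $O(\sqrt{n}) \times O(\sqrt{n})$ grid and routing the $O(n)$ wires in an orthogonal-like pattern, the total number of wire-length units and crossings is $O((n+m) \log(n+m))$, so the instance uses $N = O(n \log n)$ disks in total.

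The main obstacle, and where $\mrc{H}=4$ is truly exploited, is the crossing gadget. Because any two disks are pseudodisks, the reduction cannot rely on the topological piercings that trivialise crossings for triangles, so the crossing must be simulated combinatorially via vertices of $H$; the four reflexive vertices in the maximum reflexive clique supply enough independent mapping options to carry two unrelated truth values through one physical crossing without creating spurious equalities between them. Once the crossing gadget is established, correctness of the reduction is a routine verification: a satisfying assignment yields a valid list homomorphism by colouring each wire along its propagated value, and conversely any list homomorphism reads off a truth assignment on which every clause gadget is satisfied. Combined with the size estimate $N = O(n \log n)$, this gives the claimed $2^{o(n/\log n)}$ lower bound under the ETH.
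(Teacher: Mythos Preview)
Your proposal has a genuine gap in the size analysis. You claim that by placing gadgets on an $O(\sqrt{n}) \times O(\sqrt{n})$ grid and routing the $O(n)$ wires orthogonally, the total number of wire-length units and crossings is $O((n+m)\log(n+m))$. This does not follow: each of the $O(n)$ wires must traverse a distance of $\Theta(\sqrt{n})$ on average in such a grid, so already the total wire length is $\Theta(n^{3/2})$, and the number of pairwise crossings among $O(n)$ wires in a grid layout can be $\Theta(n^{2})$ in the worst case. With the construction as you describe it the instance has $N = \Omega(n^{3/2})$ vertices (or worse), which yields only a $2^{o(n^{2/3})}$-type lower bound, far from $2^{o(n/\log n)}$. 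Nothing in your outline explains how the logarithmic factor arises.

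The paper achieves the $O(n\log n)$ size by an entirely different mechanism and does not route individual variable-to-clause wires at all. All $t = 3M$ literals start together in a single large clique at the top (with the variable gadgets attached there), and a binary tree of \emph{divider gadgets} of depth $O(\log t)$ repeatedly splits the current literal set according to successive bits of the literal index, so that at the leaves each literal sits alone and the leaves come out sorted by clause, where the clause gadgets are attached. Each literal thus participates in only $O(\log t)$ gadgets, giving $O(t\log t)$ disks. Crossings are not handled by a separate per-crossing gadget; they occur en masse inside each divider, where the ``left'' and ``right'' turning disks may overlap arbitrarily but carry lists from two disjoint colour sets ($\{3,4,5,6\}$ and $\{3',4',5',6'\}$) that induce a complete bipartite graph on reflexive vertices of $H$, so these overlaps impose no constraint. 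It is this complete bipartite block of reflexive vertices, rather than a reflexive $K_4$ used at isolated crossings, that forces $\mrc{H}=4$; your proposal does not identify this idea, and the vague statement that ``the four reflexive vertices supply enough independent mapping options'' is not a construction.
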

Let us point out that the construction in \cref{thm:disks-lower} is much more technically involved than the previous ones.

Finally, in \cref{sec:weighted} we study two weighted generalizations of the list homomorphism problem, called \emph{min cost homomorphism}~\cite{DBLP:journals/ejc/GutinHRY08} and \emph{weighted homomorphism}~\cite{DBLP:journals/jcss/OkrasaR20}, and show that the clique-based separator approach from \cref{thm:cliquebased} works for both of them.

The paper is concluded with several open questions in \cref{sec:conclusion}.

\section{Preliminaries} \label{sec:prelim}
All logarithms in the paper are of base 2. For a positive  integer $k$, by $[k]$ we denote $\{1,2,\ldots,k\}$. 

\paragraph{Graph theory.}
For a graph $G$ and a vertex $v \in V(G)$, by $N_G(v)$ we denote the set of neighbors of $v$.
If the graph is clear from the context, we simply write $N(v)$.
We say that two sets $A,B$ of vertices of $G$ are \emph{complete} to each other, if for every $a \in A,b \in B$ vertices $a$ and $b$ are adjacent.
For $\delta \in (0,1)$, a set a $S \subseteq V(G)$ is a \emph{$\delta$-balanced separator} if every component of $G - S$ has at most $\delta \cdot |V(G)|$ vertices.

Let $H$ be a graph with possible loops. Let $R(H)$ be the set of \emph{reflexive} vertices, i.e., the vertices with a loop, 
and let $I(H)$ be the set of \emph{irreflexive vertices}, i.e., vertices without loops. Clearly $R(H)$ and $I(H)$ form a partition of $V(H)$.
By $\mrc{H}$ we denote the size of a maximum reflexive clique in $H$.
We call $H$ a \emph{strong split graph}, if $R(H)$ is a reflexive clique and $I(H)$ is an independent set.

\paragraph{String graphs and their subclasses.}
For a set $V$ of subsets of the plane $\R^2$, by $\IG(V)$ we denote their \emph{intersection graph},
i.e., the graph with vertex set $V$ where two elements are adjacent if and only if they have nonempty intersection.
To avoid confusion, the elements of $V$ will be called \emph{objects}. 

\emph{String graphs} are intersection graphs of sets of continuous curves in the plane.
\emph{Grid graphs} are intersection graphs of axis-parallel segments such that segments in one direction are pairwise disjoint.
Note that grid graphs are always bipartite.

Two objects $a,b \subseteq \R^2$ with Jordan curve boundaries are in a \emph{pseudodisk relation} if their boundaries intersect at most twice. 
%If $a$ and $b$ are convex, then this is equivalent to $a \setminus b$ and $b \setminus a$ being arc-connected. \skb{I think they are equivalent, but I couldn't find a citation and we certainly don't want to prove it here
A collection $V$ of objects in $\R^2$ with Jordan curve boundaries is a family of \emph{pseudodisks} if all their elements are pairwise in a pseudodisk relation.
\emph{Pseudodisk} intersection graphs are intersection graphs of families of pseudodisks.

A collection $V$ of objects in $\R^2$ is \emph{fat} if there exists a constant $\alpha>0$, such that each $v \in V$ satisfies $r_{\mathrm{in},v}/r_{\mathrm{out},v}\geq \alpha$, where $r_{\mathrm{in},v}$ and $r_{\mathrm{out},v}$ denote, respectively, the radius of the largest inscribed and smallest circumscribed disk of $v$.
A collection $V$ of objects in $\R^2$ is \emph{similarly-sized} if there is some constant $\beta >0$, such that $\max_{v\in V} \diam(v) / \min_{v\in V} \diam(v)\leq \beta$, where $\diam(v)$ denotes the diameter of $v$.
If a collection of objects is fat and similarly-sized, then we can set the unit to be the smallest diameter among the maximum inscribed disks of the objects, and as a consequence of the properties each object can be covered by some disk of radius $R=\Oh(1)$.

\paragraph{List homomorphisms.}
A \emph{homomorphism} from a graph $G$ to a graph $H$ is a mapping $f : V(G) \to V(H)$, such that for every $uv \in E(G)$ it holds that $f(u)f(v) \in E(H)$. If $f$ is a homomorphism from $G$ to $H$, we denote it shortly by $f : G \to H$.
Note that homomorphisms to the complete graph on $k$ vertices are precisely proper $k$-colorings.
Thus we will often refer to vertices of $H$ as \emph{colors}.

In this paper we consider the \lhomo{H} problem, which asks for the existence of \emph{list} homomorphisms.
Formally, for a fixed graph $H$ (with possible loops), an instance of $\lhomo{H}$ is a pair $(G,L)$, where $G$ is a graph and $L : V(G) \to 2^{V(H)}$ is a list function.
We ask whether there exists a homomorphism $f: G \to H$, which respects the lists $L$, i.e., for every $v \in V(G)$ it holds that $f(v) \in L(v)$. If $f$ is such a list homomorphism, we denote it shortly by $f : (G,L) \to H$.
We also write $(G,L) \to H$ to denote that some such $f$ exists.

Two vertices $a,b$ of $H$ are called \emph{comparable} if $N(a) \subseteq N(b)$ or $N(b) \subseteq N(a)$; otherwise $a$ and $b$ are \emph{incomparable}. 
A subset $A$ of vertices of $H$ is \emph{incomparable} if it contains pairwise incomparable vertices.

Let $a,b \in V(H)$ be such that $N(a) \subseteq N(b)$. 
Note that in any homomorphism $f : G \to H$ and any $v \in V(G)$ such that $f(v)=a$,
recoloring $v$ to the color $b$ yields another homomorphism from $G$ to $H$.
Thus for any instance $(G,L)$  of \lhomo{H}, if some list $L(v)$ contains two vertices $a,b$ as above, we can safely remove $a$ from the list. 
Consequently, we can always assume that each list is incomparable.

The following straightforward observation will be used several times in the paper.

\begin{observation}\label{lem:mapclique}
Let $G$ be an irreflexive graph and let $H$ be a graph with possible loops and let $f : G \to H$.
For every clique $C$ of $G$ we have the following:
\begin{itemize}
\item at most $|I(H)|$ vertices from $C$ are mapped to vertices of $I(H)$ (each to a distinct vertex of $I(H)$),
\item the remaining vertices of $C$ are mapped to some reflexive clique of $H$.
\end{itemize}
\end{observation}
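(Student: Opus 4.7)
The plan is to prove both bullets by directly invoking the homomorphism property on pairs of vertices of $C$, using that $G$ being irreflexive forces every pair of distinct vertices in a clique to be joined by a genuine (non-loop) edge.

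For the first bullet, I would argue by contradiction: suppose two distinct vertices $u,v\in C$ satisfy $f(u)=f(v)=a$ with $a\in I(H)$. Since $C$ is a clique of $G$ and $G$ has no loops, $u\neq v$ implies $uv\in E(G)$; then the homomorphism condition forces $f(u)f(v)=aa\in E(H)$, contradicting $a\in I(H)$. Hence the restriction of $f$ to $f^{-1}(I(H))\cap C$ is injective, giving the bound $|f^{-1}(I(H))\cap C|\le|I(H)|$ with all images distinct.

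For the second bullet, let $C':=f^{-1}(R(H))\cap C$ and consider its image $f(C')\subseteq R(H)$. For any two vertices $u,v\in C'$, if $u=v$ then $f(u)\in R(H)$ already carries a loop, and if $u\neq v$ then (as above) $uv\in E(G)$, so $f(u)f(v)\in E(H)$. Thus every pair of (not necessarily distinct) elements of $f(C')$ is joined by an edge or a loop in $H$, which is exactly the definition of a reflexive clique in $H$; together with $f(C')\subseteq R(H)$, this yields the claim.

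I do not anticipate any genuine obstacle here: the statement is essentially unpacking the definition of a homomorphism on a clique of an irreflexive graph, and the only subtlety worth flagging is the handling of the case $f(u)=f(v)$ within $C'$, where the loop on the common image provides the needed ``edge'' in $H$.
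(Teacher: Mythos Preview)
Your proof is correct. The paper states this as a ``straightforward observation'' and does not supply a proof at all, so your argument simply spells out the routine verification the authors chose to omit; the reasoning you give (injectivity on $f^{-1}(I(H))\cap C$ via the absence of loops, and pairwise adjacency of the images in $R(H)$ via the clique edges) is exactly the intended unpacking.
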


\newpage
\section{Dichotomy for string  graphs}\label{sec:string}
The complexity dichotomy for the list homomorphism problem (in general graphs) was shown by Feder, Hell, and Huang~\cite{DBLP:journals/jgt/FederHH03} (see also~\cite{FEDER1998236,DBLP:journals/combinatorica/FederHH99}),
who showed that if $H$ is a so-called \emph{bi-arc graph}, then \lhomo{H} is polynomial-time solvable, and otherwise it is \NP-complete. Furthermore their hardness reductions imply that the problem cannot be solved in subexponential time, assuming the ETH.

One of the ways to define bi-arc graphs is via their \emph{associate bipartite graphs}.
For a graph $H$ with possible loops, is bipartite associated graph is the graph $H^*$ constructed as follows.
For each vertex $a \in V(H)$ we introduce two vertices, $a',a''$ to $V(H^*)$, and $a'b'' \in E(H^*)$ if $ab \in E(H)$.
Now $H$ is a bi-arc graph if and only if $H^*$ is bipartite and its complement is a circular-arc graph, i.e., it is an intersection graph of arcs of a circle.

The key idea of the \NP-hardness proof by Feder at al.~\cite{DBLP:journals/combinatorica/FederHH99} was to reduce showing hardness for \lhomo{H}, where $H$ is not a bi-arc graph, to showing hardness to \lhomo{H^*}, where $H^*$ is not the complement of a co-bipartite circular-arc graph.
We will use a similar approach, so let us explain it in more detail. We use the terminology of Okrasa {et al.}~{\cite{FullerComplexity}.

Let $(G, L^*)$ be an instance of \lhomo{H^*} problem, and let $X,Y$ be bipartition classes of $H^*$. 
We say that $(G, L^*)$ is \emph{consistent} if $G$ is bipartite with bipartition classes $G_X, G_Y$, and either for every $x \in G_X$ (resp. $y \in G_Y$) we have $L(x) \subseteq X$ (resp. $L(y) \subseteq Y$), or  for every $x \in G_X$ (resp. $y \in G_Y$) we have $L(x) \subseteq Y$ (resp. $L(y) \subseteq X$).

\begin{lemma}[Okrasa {et al.}~{\cite{FullerComplexity}}]\label{lem:homo-star}
Let $H$ be a graph and let $(G,L^*)$ be a consistent instance of \lhomo{H^*}
Define $L: V(G) \to 2^{V(H)}$ as follows: for every $v \in V(G)$ we have $L(v):= \{a ~|~ \{a',a''\} \cap L^*(v) \neq \emptyset\}$.
Then $(G,L^*) \to H^*$ if and only if $(G,L) \to H$.
\end{lemma}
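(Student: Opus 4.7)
The plan is to prove the equivalence by exhibiting an explicit bijection between list homomorphisms $f^* : (G,L^*) \to H^*$ and list homomorphisms $f : (G,L) \to H$, obtained by ``collapsing'' the two copies $a',a''$ back to the original vertex $a$. Since $(G,L^*)$ is consistent, we may assume without loss of generality that $L^*(x) \subseteq X$ for every $x \in G_X$ and $L^*(y) \subseteq Y$ for every $y \in G_Y$ (the other case is symmetric: swap the roles of $X$ and $Y$). Observe first that under this assumption, for each $v \in V(G)$ and each $a \in L(v)$, exactly one of $a',a''$ lies in $L^*(v)$: if $v \in G_X$ then only $a' \in X$ is eligible, and if $v \in G_Y$ then only $a'' \in Y$ is.

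For the forward direction, given $f^* : (G,L^*) \to H^*$, define $f$ by setting $f(v)=a$ whenever $f^*(v) \in \{a',a''\}$. The observation above ensures this is well-defined and that $f(v) \in L(v)$. To check edge preservation, take any edge $uv \in E(G)$; since $G$ is bipartite we may assume $u \in G_X, v \in G_Y$, so $f^*(u) = a'$ and $f^*(v) = b''$ for some $a,b$. The edge $a'b'' \in E(H^*)$ translates directly into $ab \in E(H)$ by the definition of $H^*$, which gives $f(u)f(v) \in E(H)$.

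For the converse, given $f : (G,L) \to H$, define $f^*(v) = f(v)'$ if $v \in G_X$ and $f^*(v) = f(v)''$ if $v \in G_Y$. The uniqueness observation guarantees that $f^*(v) \in L^*(v)$, and the same translation of edges (now read in the opposite direction, from $ab \in E(H)$ to $a'b'' \in E(H^*)$) shows that $f^*$ is a homomorphism to $H^*$. The two constructions are manifestly inverses of each other, which completes the proof.

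There is no real obstacle here: the only subtlety is bookkeeping about which side of the bipartition a vertex lies on, which is why the consistency hypothesis is needed in the statement; without it, a vertex $v$ could a priori be ``assigned'' either $a'$ or $a''$ from $L^*(v)$, and the translation back to $H$ would not be canonical. The reflexive vertices of $H$ (which correspond to edges $a'a'' \in E(H^*)$) are handled uniformly by the edge-translation rule and require no separate argument.
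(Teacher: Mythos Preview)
Your proof is correct. The paper does not give its own proof of this lemma---it is quoted from Okrasa et~al.\ as a known fact---so there is nothing to compare against; the argument you wrote is the standard and essentially canonical one for this translation between $H$ and its bipartite double $H^*$.
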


Let us recall one more result of Okrasa {et al.}~{\cite{FullComplexity,FullerComplexity}.
We note that it uses a certain notion of ``undecomposability'' of a bipartite graph. 
Since the property of undecomposability will only be used to invoke another known result (\cref{thm:gadgets}),
we omit the technically involved definition (an inquisitive reader will find it in \cite{FullComplexity}).

\begin{theorem}[Okrasa et al.~\cite{FullComplexity,FullerComplexity}]\label{thm:factorization}
Let $H$ be a graph.
In time $|V(H)|^{\Oh(1)}$ we can construct a family $\cH$ of $\Oh(|V(H)|)$ connected graphs, called \emph{factors of $H$}, such that:
\begin{enumerate}[(1)]
\item $H$ is a bi-arc graph if and only if every $H' \in \cH$ is a bi-arc graph, \label{factors:hard}
%\item if $H$ is bipartite, then each $H' \in \cH$ is an induced subgraph of $H$, and is either the complement of a circular-arc graphs or is undecomposable, \label{factors:bipartite} \ko{czy wywalic}
%\item otherwise, 
\item for each $H' \in \cH$, the graph $H'^*$ is an induced subgraph of $H^*$ and at least one of the following holds: \label{factors:types}
\begin{enumerate}
\item $H'$ is a bi-arc graph, or
\item $H'$ a strong split graph, and has an induced subgraph $H''$, which is not a bi-arc graph and is an induced subgraph of $H$, or \label{factors:types:strongsplit}
\item $(H')^*$ is undecomposable, \label{factors:types:undecomposable}
\end{enumerate}
\item for every instance $(G,L)$ of \lhomo{H}, the following implication holds: \label{factors:bottomup}

If there exists a non-decreasing, convex function $f \colon \N \to \R$,
such that for every $H' \in \cH$, for every induced subgraph $G'$ of $G$, and for every $L': V(G') \to 2^{|V(H')|}$,
we can decide whether $(G',L')\to H'$ in time $f(|V(G')|)$, then
we can solve the instance $(G,L)$  in time
\[
\Oh \left (|V(H)| f(n) + n^2 \cdot |V(H)|^3 \right).
\]
\end{enumerate}
\end{theorem}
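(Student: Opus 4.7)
The plan is to obtain $\cH$ via a recursive structural decomposition of the bipartite associate $H^*$, together with a bookkeeping argument that shows how to assemble sub-instance solutions. Concretely, I would look for a structural ``decomposability'' feature inside $H^*$—most naturally a bipartite module (a set of vertices with identical external neighborhoods across the bipartition) or an analogous join-type splitting. If the current graph admits such a feature, we split it into two strictly smaller bipartite graphs, each of which is itself the bipartite associate of some graph; each such piece becomes a tentative factor and is processed recursively. Strict size decrease guarantees that the recursion terminates after $\Oh(|V(H)|)$ steps, which yields $|\cH| = \Oh(|V(H)|)$ and a polynomial construction time. The resulting tree of splits is the object we then carry through all three parts.

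For Property \eqref{factors:hard}, I would argue that each decomposition step preserves bi-arc-ness in both directions: the circular-arc representation of the complement of $H^*$ can be assembled from the representations of the pieces, and conversely it restricts to one on each piece. Hence $H$ is a bi-arc graph iff every factor is. For Property \eqref{factors:types}, the three cases are read off from the terminal states of the recursion: either the factor is already a bi-arc graph and there is nothing to do; or it is a strong split graph, in which case one extracts, via a direct structural analysis of reflexive cliques versus the independent set, an irreducible induced subgraph $H''$ that is not a bi-arc graph and still embeds into $H$, giving \eqref{factors:types:strongsplit}; or no further split is possible, which is exactly the undecomposable base case \eqref{factors:types:undecomposable}.

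For Property \eqref{factors:bottomup}, I would build the algorithm bottom-up along the decomposition tree. Given $(G,L)$, we first use \cref{lem:homo-star} to pass to instances over $H^*$; because the bipartition of $H^*$ gives $\Oh(|V(H)|)$ possible consistent ``orientations'' of $G$, this introduces the linear $|V(H)|$ factor in front of $f(n)$. At each internal node of the decomposition tree, partial solutions over the two children are glued using the module/join structure—this step is essentially a projection onto the quotient plus a lookup into the module, costing only polynomial overhead that absorbs into the $n^2 \cdot |V(H)|^3$ term. The convexity and monotonicity of $f$ are used precisely to bound $\sum_{H' \in \cH} f(n_{H'})$ by $f(n)$ up to the constant factor $|V(H)|$, once the recursion is unrolled.

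The hardest part, in my view, is pinning down the correct notion of decomposability so that all three properties hold simultaneously. It is comparatively easy to design a decomposition that preserves bi-arc-ness, and it is also easy to design one that is algorithmically compositional; the technical content of \cite{FullComplexity,FullerComplexity} is that these two requirements can be reconciled, and moreover that the strong split case appears genuinely as a separate terminal type rather than being absorbed into the undecomposable case. Verifying that the strong split factor really exposes a non-bi-arc induced subgraph of $H$ (and not merely of $H'$) is the subtlety that ultimately makes the whole framework usable for lifting lower bounds from factors to $H$ itself.
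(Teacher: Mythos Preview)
This theorem is not proved in the present paper at all: it is quoted verbatim as a black box from \cite{FullComplexity,FullerComplexity}, and the paper immediately uses it (together with \cref{lem:homo-star}, \cref{thm:splits}, and \cref{thm:gadgets}) without reproducing any part of its proof. Consequently there is nothing here to compare your proposal against. Your sketch is a plausible high-level narrative of the kind of argument that the cited works carry out, but for the purposes of this paper the theorem is treated as an imported result, and you should do the same rather than attempt to re-derive it.
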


Let $H$ be a graph and let $\cH$ be as in \cref{thm:factorization}.
We say that $H$ is \emph{predacious}, if there exists a factor $H' \in \cH$ that is not bi-arc, and two incomparable two-element sets $\{a_1,a_2\},\{b_1,b_2\} \subseteq V(H')$, such that $\{a_1,a_2\}$ is complete to $\{b_1,b_2\}$.
We say that the tuple $(a_1,a_2,b_1,b_2)$ is a \emph{predator}.

\subsection{Algorithm}

We start by proving part (a) of~\cref{thm:strings}.
We present an algorithm that combines the approach of exploiting the absence of predators in the target graph that was used in~\cite{DBLP:conf/stacs/OkrasaR21} with the following string separator theorem.

\begin{theorem}[Lee~\cite{DBLP:conf/innovations/Lee17}, Matou\v{s}ek~\cite{Matousek14}]\label{thm:string-separator}
Every string graph with $m$ edges has a $\frac{2}{3}$-balanced separator of size $\Oh(\sqrt{m})$. It can be found in polynomial time, if the geometric representation is given. 
\end{theorem}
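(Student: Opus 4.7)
The plan is to follow the region-intersection-graph approach shared (in different forms) by Matou\v{s}ek and Lee. Given the string graph $G$ with $m$ edges and its representation as a family $\cS$ of continuous curves in the plane, I would first perturb the curves so that no three share a common point and every intersecting pair meets transversally in a single crossing. The resulting planar arrangement $\cA$ then has $\Theta(m)$ crossing vertices and $\Oh(m)$ edges. Each $v \in V(G)$ is identified with the connected subgraph of $\cA$ traced by its curve, so $G$ is realised as the ``region intersection graph'' over the planar graph $\cA$, where two regions are adjacent iff they meet at a common vertex of $\cA$.

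Next I would apply the Lipton--Tarjan planar separator theorem to $\cA$ in weighted form, putting unit weight on each arrangement vertex so that the total weight reflects the $n$ regions. This produces a balanced planar vertex separator of size $\Oh(\sqrt{m})$ in $\cA$. Since each arrangement vertex is the crossing of exactly two curves, I can lift this set to a vertex subset $S\subseteq V(G)$ of size $\Oh(\sqrt{m})$ by picking one of the two curves for each selected crossing. The key observation is that a connected component of $G-S$ corresponds to a set of regions whose supports live in one of the pieces into which removing the planar separator splits $\cA$; the planar balance therefore transfers to a $\tfrac{2}{3}$-balance in terms of $|V(G)|$.

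The main obstacle is controlling the trade-off between the size of $S$ and the resulting balance. A naive recursive application of Lipton--Tarjan with a geometric charging argument yields only the weaker bound $\Oh(\sqrt{m}\log m)$ (Matou\v{s}ek). To achieve the sharper $\Oh(\sqrt{m})$ bound (Lee) one has to replace planar separators by a multi-commodity flow relaxation on $\cA$, invoke the $\Oh(1)$ flow-cut gap for vertex cuts on planar graphs, and round the resulting fractional separator into an integral one of the claimed size. In both approaches polynomial-time computability follows from the constructive nature of the planar separator theorem and of the underlying flow algorithms, provided the curve representation $\cS$ is part of the input so that the arrangement $\cA$ can be built explicitly.
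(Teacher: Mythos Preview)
The paper does not give its own proof of this theorem; it is quoted as a black-box result of Lee and Matou\v{s}ek and used only via its statement. So there is no ``paper's proof'' to compare against, and your proposal should be read as an attempt to reconstruct the arguments from those references.

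There is, however, a genuine gap in your sketch. You claim that after perturbation ``every intersecting pair meets transversally in a single crossing'', so that the arrangement $\cA$ has $\Theta(m)$ vertices. This is false: it is a classical result of Kratochv\'{\i}l and Matou\v{s}ek that there exist string graphs on $n$ vertices for which \emph{every} string representation requires $2^{\Omega(n)}$ crossings. You can always achieve general position (no three curves through a point, all crossings transversal), but you cannot force one crossing per intersecting pair, and hence you cannot bound $|V(\cA)|$ by $\Oh(m)$. Consequently, applying Lipton--Tarjan to $\cA$ only gives a separator of size $\Oh(\sqrt{|V(\cA)|})$, which is useless without the bound on $|V(\cA)|$.

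The actual proofs of Matou\v{s}ek and Lee avoid this by working with region intersection graphs over a planar host $\cA$ of \emph{arbitrary} size: the crucial (and non-trivial) point is that the separator bound depends only on the number $m$ of edges of $G$, not on $|V(\cA)|$. Matou\v{s}ek achieves $\Oh(\sqrt{m}\log m)$ by a recursive application of planar separators together with a careful charging of regions to separator vertices; Lee removes the $\log m$ via a vertex-congestion/flow argument and the planar flow--cut gap. Your later paragraph gestures at this, but your setup never gets there because the $\Theta(m)$-size arrangement on which you want to run Lipton--Tarjan does not exist in general.
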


Now we are ready to show the algorithmic part of \cref{thm:strings}.

\begin{cthm}{1 (a)}%\begin{theorem}
\label{thm:string-non-predacious}
Let $H$ be a connected graph which does not contain a predator. Then the \lhomo{H} problem can be solved in time $2^{O(n^{2/3}\log n)}$.
%\end{theorem}
\end{cthm}
\begin{proof}
Let $(G,L)$ be in instance of \lhomo{H} with $n$ vertices.
We present a recursive algorithm; clearly we can assume that the statement holds for constant-size instances.

Each step of our algorithm starts with preprocessing the instance by repeating steps 1.-3. exhaustively, in the given order.
\begin{enumerate}
\item We make every list an incomparable set, i.e., if there exists $v \in V(G)$ and $a,b \in L(v)$ such that $N_H(a) \subseteq N_H(b)$, we remove $a$ from $L(v)$.
\item For every pair of vertices $u,v \in V(G)$, if there exist $a \in L(u)$ such that for every $b \in L(v)$ it holds that $ab \notin E(H)$, we remove $a$ from $L(u)$. 
The correctness of this step follows from the fact that there is no $h: (G,L) \to H$ such that $h(u)=a$, as otherwise $h(u)h(v) \notin E(H)$.
\item If there exists $v \in V(G)$ such that $|L(v)|=1$, we remove $v$ from $G$.
This step is correct, because in step 2. we already adjusted the lists of the neighbors of $v$ to contain only neighbors of $a$.  
\end{enumerate} 
If at any moment during the preprocessing phase a list of any vertex becomes empty, we terminate the current branch and report that $(G,L)$ is a no-instance. Clearly the rules above can be applied in polynomial time.

Suppose now that none of the preprocessing rules can be applied.
For simplicity let us keep calling the obtained instance $(G,L)$ and denoting its number of vertices by $n$.
We distinguish two cases.

Assume that there exists a vertex $v \in V(G)$ such that $\deg(v)\geq n^{1/3}$.
Observe that there are at most $2^{|V(H)|}$ distinct lists assigned to the vertices of $G$ and thus there exists a list $L'$ that is assigned to at least $\ell := n^{1/3}/2^{|V(H)|}$ neighbors of $v$.
Since the third preprocessing rule cannot be applied, we know that each of $L(v)$ and $L'$ has at least two elements, and these elements are incomparable, as the first rule cannot be applied.

We observe that there exist $a \in L(v)$ and $b \in L'$ such that $ab \notin E(H)$.
Indeed, otherwise the tuple $(a_1,a_2,b_1,b_2)$ such that $a_1,a_2 \in L(v)$ and $b_1,b_2 \in L'$ is a predator in $H$, a contradiction. 
We branch on assigning $a$ to $v$: we call the algorithm twice, in one branch removing $a$ from $L(v)$, and in the other setting $L(v) = \{a\}$.
Note that in the preprocessing phase in the second call $b$ gets removed from at least $\ell$ lists of neighbors of $v$.

Denoting $N:=\sum_{v \in V(G)}|L(v)|$, the complexity of this step is given by the recursive inequality
\[F(N) \leq F(N-1) + F(N-\ell) \leq \ell^{O(N/\ell)} = 2^{O((N\log\ell)  /\ell)}.\]  
As $N \leq |V(H)| \cdot n$, we obtain that the running time in this case is $2^{\Oh(n^{2/3}\log n)}$.

In the second case, if no vertex of $G$ has degree larger than $n^{1/3}$, then the number of edges in $G$ is bounded by $n^{4/3}$.
By \cref{thm:string-separator} there exists a balanced separator of size $\Oh(\sqrt{n^{4/3}})=\Oh(n^{2/3})$ in $G$.
We find $S$ in time $n^{O(n^{2/3})}$ by exhaustive guessing.
Then we consider all possible list $H$-colorings of $S$, for each we update the list of neighbors of colored vertices and run the algorithm recursively for every connected component of $G-S$.
The complexity of this step is also $2^{\Oh(n^{2/3}\log n)}$, and so is the overall complexity of the algorithm.
\end{proof}

Let us point out that the running time can be slightly improved if the graph is given along with its geometric representation.
as then in the second case the balanced separator can be found in polynomial time.
Setting the degree threshold to $n^{1/2} \log^{2/3}n$ yields the running time $2^{\Oh(n^{2/3} \log^{1/3} n)}$.

\subsection{Lower bound}

In this section we complete~\cref{thm:strings} by showing the lower bound for predacious target graphs $H$.
Let $H$ be a predacious graph, let $\cH$ be as in \cref{thm:factorization}, and let $H' \in \cH$ be a factor of $H$ that is non-bi-arc, and contains a predator.
Observe that by \cref{thm:factorization} two cases may happen: either (i) $H'$ is a strong split graph, and has an induced subgraph $H''$,
which is not a bi-arc graph and is an induced subgraph of $H$,
or (ii) $H'^*$ is non-bi-arc and undecomposable. We show that in both of these cases \cref{thm:strings}~(b) holds.

First, assume that (i) is satisfied. 
Observe that since $H''$ is an induced subgraph of $H'$, $H''$ is a strong split graph. 
Therefore, \cref{thm:strings}~(b) is implied by the following theorem.

\begin{theorem}[\cite{DBLP:conf/stacs/OkrasaR21}]\label{thm:splits}
Let $H''$ be a fixed non-bi-arc strong split graph.
Then the \lhomo{H''} problem cannot be solved in time $2^{o(n)}$ in $n$-vertex split graphs, unless the ETH fails.
\end{theorem}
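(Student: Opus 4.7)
The plan is to give a polynomial-time reduction from $3$-\textsc{SAT} producing a split graph $G$ of size linear in the input, such that the source formula $\phi$ is satisfiable if and only if $(G,L)\to H''$. Combined with the ETH, this rules out $2^{o(n)}$ algorithms for $\lhomo{H''}$ on split graphs.

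First, I would exploit the strong split structure of $H''$. Since $V(H'')=R(H'')\sqcup I(H'')$ with $R(H'')$ a reflexive clique and $I(H'')$ an independent set, \cref{lem:mapclique} tells us that in any list homomorphism from an irreflexive split graph $G=K\sqcup J$ to $H''$, each vertex of $K$ is either mapped to a reflexive vertex or to a distinct irreflexive vertex, while vertices of $J$ may map anywhere in $V(H'')$. I would place the ``choice'' variables of the reduction on $K$ with lists drawn from $R(H'')$, and the ``constraint-checking'' vertices on $J$ with lists drawn from $I(H'')$, so that the split partition of $G$ is respected by construction.

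Second, I would extract from the non-bi-arc assumption a small hard configuration in $H''$. By the Feder--Hell--Huang characterization, $(H'')^*$ contains an obstruction to being a circular-arc graph complement; using the strong split structure, this can be distilled into an incomparable pair $a_1,a_2\in R(H'')$ together with a witness $c\in I(H'')$ that distinguishes them. From this I build (a) a constant-size variable gadget whose only list homomorphisms correspond to the two truth assignments, with the state encoded by which of $a_1,a_2$ a designated clique vertex receives, and (b) a constant-size clause gadget whose independent-set vertex has a list that excludes exactly the all-false assignment on its three incident variable gadgets. All connections between variable and clause gadgets go through $K$, so that no edges are ever introduced inside the independent side $J$, preserving the split property globally.

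The hardest step will be designing the clause gadget: it must be a constant-size split graph that encodes a $3$-ary OR-constraint using only edges inside $K$ and between $K$ and a single new vertex in $J$, despite the fact that $H''$ is only guaranteed to contain the rather mild obstruction coming from its non-bi-arc-ness. This requires a careful, case-based argument on $H''$ that leverages the incomparable pair $(a_1,a_2)$ together with the irreflexive witness $c$ to realize the OR-gate, and it is precisely here that the non-bi-arc structure of $H''$ is used in a non-trivial way. Once both gadgets have size $\Oh(1)$, the resulting graph has $\Oh(n+m)$ vertices, so by the sparsification lemma a hypothetical $2^{o(|V(G)|)}$-time algorithm for $\lhomo{H''}$ would yield a $2^{o(n+m)}$-time algorithm for $3$-\textsc{SAT}, contradicting the ETH.
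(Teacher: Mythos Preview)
The paper does not prove this statement: \cref{thm:splits} is quoted verbatim from \cite{DBLP:conf/stacs/OkrasaR21} and used as a black box in the lower-bound argument for string graphs. There is therefore no ``paper's proof'' to compare your proposal against.

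As for the proposal itself, the overall architecture (linear-size reduction from \textsc{3-Sat}, variable side on the clique part, clause side on the independent part, ETH via sparsification) is the natural shape of such a proof, but two steps are genuine gaps rather than routine details. First, you assume that non-bi-arc-ness of a strong split graph yields an incomparable pair $a_1,a_2\in R(H'')$ distinguished by a single $c\in I(H'')$; this is not obvious and need not hold in that exact form --- the obstruction to being bi-arc may manifest differently, and the actual proof in \cite{DBLP:conf/stacs/OkrasaR21} requires a careful case analysis of the possible obstructions in strong split graphs. Second, you explicitly flag the clause gadget as ``the hardest step'' but do not construct it; encoding a ternary $\mathrm{OR}$ using only clique edges and a single independent vertex, with lists drawn from a graph where the only guaranteed structure is ``some non-bi-arc obstruction'', is precisely where the work lies, and a sketch that stops short of this is not yet a proof.
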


Indeed, since split graphs form a subclass of string graphs (see \cref{fig:string-split}), \cref{thm:splits} implies that \lhomo{H''} cannot be solved in time $2^{o(n)}$ in $n$-vertex string graphs.
Since $H''$ is an induced subgraph of $H$, each instance of \lhomo{H''} is also an instance of \lhomo{H}.
Hence, if we could solve \lhomo{H} in time $2^{o(n)}$, this would contradict the ETH by~\cref{thm:splits}.

\begin{figure}[htb]
\centering
\includegraphics[scale=1,page=4]{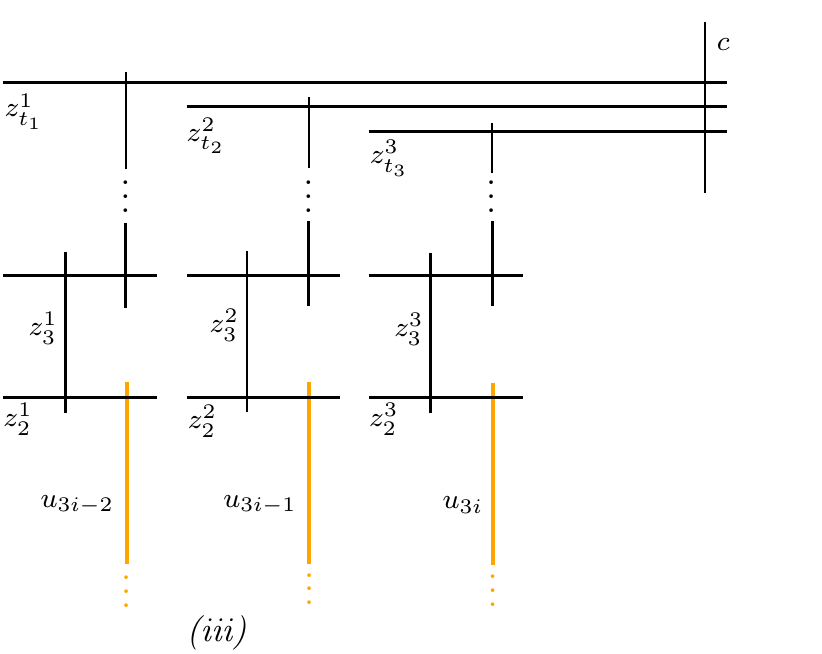}
\caption{(i) An example of a split graph $G$ and (ii) a string representation of $G$. It is straightforward to see that an analogous construction as in (ii) can be used to represent any split graph.}
\label{fig:string-split}
\end{figure}

Therefore, we can assume that (ii) holds, i.e., $H'^*$ is undecomposable. 
Observe that with \cref{lem:homo-star} at hand it is enough to prove the following.

\begin{theorem}\label{thm:homo-string-bipartite}
Let $H'$ be a graph that contains a predator, and such that $H'^*$ is non-bi-arc and undecomposable.
Assuming the ETH, the \lhomo{H'^*} problem cannot be solved in time $2^{o(n)}$ in $n$-vertex grid graphs that are consistent instances of $\lhomo{H'^*}$, even if they are given along with a geometric representation.
\end{theorem}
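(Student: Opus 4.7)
The plan is to follow the scheme of the analogous hardness reduction for $P_t$-free graphs due to Okrasa and Rz\k{a}\.zewski, and to verify that every auxiliary graph appearing in their construction can be realized as a grid graph. Since grid graphs are string graphs, a $2^{o(n)}$ algorithm for $\lhomo{H'^*}$ on grid graphs would, via the reduction, yield a $2^{o(N+M)}$ algorithm for $3$-SAT on $N$ variables and $M=O(N)$ clauses, contradicting the ETH.

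Concretely, I would proceed as follows. Reduce from $3$-SAT. Using that $H'^*$ is undecomposable and non-bi-arc, apply the gadget machinery of \cref{thm:gadgets} to obtain, inside $H'^*$, the following primitives of constant size: (i) a \emph{variable gadget} with exactly two faithful list-colorings, corresponding to the two elements of $\{a_1', a_2'\} \subseteq X$ extracted from the predator $(a_1,a_2,b_1,b_2)$ in $H'$; (ii) a \emph{wire gadget} that transmits one of these two colorings between two endpoints; and (iii) a \emph{clause gadget} that forbids the single assignment of its three incident wires in which all literals are false. The complete-bipartite pattern $\{a_1',a_2'\} \times \{b_1'',b_2''\}$ induced by the predator in $H'^*$ is precisely the algebraic ingredient needed so that the two chosen colors of a wire can always be distinguished by its neighbors, and so that (i) and (iii) can be realized without violating the bipartition of $H'^*$.

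Next, I would embed the whole instance into a planar grid of polynomial area: place the $N$ variable gadgets along a horizontal axis and the $M$ clause gadgets along a vertical axis, and route each variable-to-clause wire along an axis-parallel monotone trajectory through this grid. Whenever two wires must cross, insert a constant-size \emph{crossover gadget} that permutes the two admissible colors independently on each strand; again, its existence follows from the predator together with undecomposability, as in \cite{DBLP:journals/jcss/OkrasaR20,DBLP:conf/stacs/OkrasaR21}. A straightforward inspection shows that the constructed graph $G$ is then the intersection graph of a family of axis-parallel segments in which no two horizontal segments cross and no two vertical segments cross, so $G$ is a grid graph and the geometric representation is produced during the construction. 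The total size of $G$ is $O(N+M)$. To obtain consistency, assign lists contained in $X$ to the horizontal segments and lists contained in $Y$ to the vertical segments; this is compatible with $H'^*$ because every edge of a grid graph joins a horizontal and a vertical segment, while every edge of $H'^*$ joins $X$ to $Y$.

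The main obstacle is the engineering of wire and crossover gadgets in this rigid orthogonal setting: one must ensure that bending a wire by ninety degrees, and allowing two wires to cross, can be performed without introducing any parallel-segment intersections, without breaking the consistent placement of lists on horizontal versus vertical segments, and without losing the faithful transmission of the two admissible colorings. This is where the undecomposability of $H'^*$ and the structural richness supplied by the predator are used in full force, since they together yield a sufficiently large palette of homomorphism patterns from which a constant-size orthogonal crossover can be assembled, keeping the final instance linear in $N+M$, as required for the ETH lower bound.
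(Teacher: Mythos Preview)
Your proposal contains a genuine gap that breaks the claimed $O(N+M)$ bound on the instance size. You route $3M$ wires through an orthogonal grid, with variables on one axis and clauses on the other, and insert a crossover gadget at every crossing. But the number of crossings in such a layout is in general $\Theta(M^2)$: two wires, one from variable $i$ to clause $j$ and one from variable $i'$ to clause $j'$, must cross whenever $i<i'$ and $j>j'$ (or vice versa), and a generic $3$-CNF has quadratically many such pairs. The resulting instance therefore has $\Theta((N+M)^2)$ vertices, and the reduction only rules out a $2^{o(\sqrt{n})}$ algorithm, not $2^{o(n)}$.

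The paper avoids crossovers altogether, and this is exactly where the predator is used. Since $\{a_1',a_2'\}$ is \emph{complete} to $\{b_1'',b_2''\}$ in $H'^*$, one can draw a single horizontal segment $v_j$ for each variable with list $\{a_1',a_2'\}$ and a single vertical segment $u_i$ for each literal occurrence with list $\{b_1'',b_2''\}$, arranged so that $U\cup V$ is a full biclique. All $N\cdot 3M$ intersections in this biclique are harmless, because every pair of colors across the two lists is an edge of $H'^*$; these edges impose no constraint whatsoever on the homomorphism. One then attaches, near the intersection of $v_j$ and the relevant $u_i$, a constant-size $(a_1'/a_2'\to b_1''/b_2'')$-gadget (a cycle, by \cref{thm:gadgets}) to enforce the correct correspondence for that occurrence, and an $\ork{3}(b_1'',b_2'')$-gadget (an $S_{t_1,t_2,t_3}$) on each triple $U_i$ to encode the clause. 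This yields $O(N+M)$ segments in total with no crossover gadgets at all. Your reading of the predator as a tool ``so that the two chosen colors of a wire can always be distinguished'' is essentially the opposite of its actual role: it is used to make unwanted intersections \emph{indistinguishable}, i.e., unconstraining.
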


Let us show that \cref{thm:homo-string-bipartite} implies \cref{thm:strings}.
\medskip

\noindent\textbf{(\cref{thm:homo-string-bipartite} $\to$ \cref{thm:strings})} 
Assume the ETH, and suppose that \cref{thm:homo-string-bipartite} holds and \cref{thm:strings} (b) does not, i.e., there is an algorithm $A$ that solves \lhomo{H} in $n$-vertex string graphs in time $2^{o(n)}$.
Let $(G,L^*)$ be a consistent instance of \lhomo{H'^*}.
Since $H'^*$ is an induced subgraph of $H^*$, the instance $(G,L^*)$ is also an instance of \lhomo{H^*}.
We create an instance $(G,L)$ of \lhomo{H} as in \cref{lem:homo-star}; note that $G$ is a string graph, since grid graphs form a subclass of string graphs.
Since $(G,L^*) \to H^*$ if and only if $(G,L) \to H$, we can use $A$ to decide whether $(G,L^*) \to H'^*$ in time $2^{o(n)}$, a contradiction.

\medskip

Therefore, it remains to prove \cref{thm:homo-string-bipartite}.
Before we proceed to the reduction, we introduce two types of gadgets that will be used in our construction.

Let $a,b$ be distinct vertices of $V(H)$.
An \emph{$\ork{3}(a,b)$-gadget} is a triple $(F,L,\{o_1,o_2,o_3\})$, such that $(F,L)$ is an instance of \lhomo{H} and $o_1,o_2,o_3$ are vertices of $F$ such that $L(o_1)=L(o_2)=L(o_3)=\{a,b\}$,~and
\[
\{f(o_1)f(o_2)f(o_3) ~|~  f: (F,L) \to H \}  =\{aaa,aab,aba,baa,abb,bab,bba\}.
\]
In other words, if we consider all mappings $\{o_1,o_2,o_3\} \to \{a,b\}$, the mapping that assigns $b$ to each vertex is the only one that cannot be extended to a list homomorphism of $(F,L)$.

Let $a,b,c,d \in V(H)$.
An \emph{$(a/b \to c/d)$-gadget} is a tuple $(F,L,\{p,q\})$ such that $(F,L)$ is an instance of \lhomo{H} and $p,q \in V(F)$ such that $L(p)=\{a,b\}$, $L(q)=\{c,d\}$,~and
\[
\{f(p)f(q) ~|~  f: (F,L) \to H \}  =\{ac,bd\}.
\]

In both cases we call $F$ the \emph{underlying graph} of the gadget and vertices $o_1,o_2,o_3,p,q$ the \emph{interface vertices}.

For $t_1,t_2,t_3 > 0$ by $S_{t_1,t_2,t_3}$ we denote the graph that consists of three induced paths $P_{t_1}$, $P_{t_2}$, and $P_{t_3}$, and an  additional vertex $c$ that is adjacent to one endvertex of each path. We call $c$ the \emph{central vertex}.

We use~\cite{DBLP:conf/stacs/OkrasaR21} to obtain the gadgets which will be needed.

\begin{lemma}[\cite{DBLP:conf/stacs/OkrasaR21}]
\label{thm:gadgets} 
Let $H$ be a connected, bipartite, non-bi-arc, undecomposable graph.
Let $\{a,b\},\{c,d\} \subseteq V(H)$ be incomparable sets, each contained in one bipartition class. 
Then the following conditions hold.
\begin{enumerate}[(1)]
\item \label{lab:d-gadget} There exist an $(a/b \to c/d)$-gadget whose underlying graph is an even cycle.
\item \label{or-gadget} There exist an $\ork{3}(a,b)$-gadget whose underlying graph is $S_{t_1,t_2,t_3}$ such that $t_1,t_2,t_3>0$ are even and the interface vertices are of degree 1.
%\item \label{neq-gadget} There exists an $\neq(a,b)$-gadget whose underlying graph is a cycle.
\end{enumerate}
Furthermore, both gadgets are consistent instances of $\lhomo{H}$.
\end{lemma}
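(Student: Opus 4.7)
The plan is to rely on the structural machinery developed in the series of papers on the list homomorphism dichotomy for bipartite targets, and adapt existing gadget constructions so that the underlying graphs have the prescribed shapes (an even cycle for (1), and a spider $S_{t_1,t_2,t_3}$ with even branches for (2)). Two ingredients are essential. First, the classical obstruction for bi-arc graphs: a connected bipartite graph is non-bi-arc iff it admits a certain asymmetric pair of walks between incomparable pairs of vertices in the same bipartition class, which can be used to lock a binary choice on the endpoints. Second, undecomposability of $H^*$, which rules out pathological direct-product-like splittings and thus guarantees that the walks built inside $H$ actually use every incomparable pair we need, so that the gadgets are both correct and consistent instances of $\lhomo{H^*}$.

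\textbf{Implication gadget (1).} Given the incomparable pairs $\{a,b\}$ and $\{c,d\}$ I would exhibit two walks $W_1, W_2$ in $H$ of the same even length $k$, with $W_1$ going from $a$ to $c$ and $W_2$ from $b$ to $d$, and such that no length-$k$ walk in $H$ compatibly pairs $a$ with $d$ or $b$ with $c$. The existence of such parallel incomparable walks is exactly what fails in bi-arc graphs and is guaranteed by the non-bi-arc condition, with undecomposability providing the required connectivity in the auxiliary ``pair graph'' on same-part pairs of $V(H)$ (where edges join compatible pairs). The instance $F$ is then an even cycle of length $2k$ obtained by gluing two copies of this arrangement at the interface vertices $p$ and $q$: the lists along each half of the cycle enforce the implication $p=a \Leftrightarrow q=c$, so the cycle as a whole is list-colorable exactly when $(p,q)\in\{(a,c),(b,d)\}$, and it is consistent because both half-walks respect the bipartition of $H$.

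\textbf{OR gadget (2) and main obstacle.} For the OR-gadget I would take three implication gadgets produced by (1), of the form $(a/b \to x/y)$, where $\{x,y\} \subseteq N_H(\{a,b\})$ is a pair at which the three branches can be merged at a single central vertex $c$; each gadget contributes an even path of length $t_i$ from a degree-$1$ leaf $o_i$ (the interface) to $c$, yielding the spider $S_{t_1,t_2,t_3}$ with even parameters. The list $L(c)$ is then chosen to admit every combination in which at least one of the three incoming branches carries the ``$a$'' value, but to reject the all-$b$ combination. The main obstacle is precisely this choice: it requires a finer analysis of the walks from $\{a,b\}$ available in $H$ so that every possible extension at $c$ lines up with exactly the seven desired patterns on $(o_1,o_2,o_3) \in \{a,b\}^3$. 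This is where non-bi-arc-ness plus undecomposability are used most nontrivially: together they supply enough walk-flexibility to realize the seven allowed patterns while ruling out the eighth, and they ensure the final instance is a consistent instance of $\lhomo{H^*}$ with all bipartitions aligned and the interface vertices remaining of degree $1$ as required.
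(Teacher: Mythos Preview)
This lemma is not proved in the present paper; it is quoted verbatim from \cite{DBLP:conf/stacs/OkrasaR21} and used as a black box. There is therefore no ``paper's own proof'' to compare your attempt against, and you should treat the lemma as an imported result rather than something to be (re)proved here.

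That said, as a standalone sketch your proposal has one genuine gap. Part~(1) is essentially on the right track: the standard way to build an $(a/b\to c/d)$-gadget on a path is via a pair of parallel ``distinguishing'' walks in $H$ between the two incomparable pairs, and closing two such paths into an even cycle is fine. Part~(2), however, is where the real content of the cited lemma lies, and you do not supply it. You propose to merge three implication paths at a central vertex $c$ and then ``choose $L(c)$'' so that exactly the seven non-all-$b$ patterns survive. But nothing you wrote explains why such a list exists: getting a single central vertex whose list simultaneously admits all seven combinations coming in along three independent branches while killing the eighth is precisely the delicate step, and it does not follow from generic ``walk-flexibility''. In the cited work this is obtained from a specific structural obstruction present in every non-bi-arc undecomposable bipartite $H$ (an appropriate asteroidal-type configuration), and the three branch lengths $t_1,t_2,t_3$ and the intermediate lists are dictated by that obstruction, not chosen freely. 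Your sketch names the obstacle correctly but does not overcome it.
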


We proceed to the proof of \cref{thm:homo-string-bipartite}.
\begin{proof}[Proof of \cref{thm:homo-string-bipartite}]
First, we observe that if there exists a predator $(a_1, a_2, b_1, b_2)$ in $H'$, then $(a_1',a_2',b_1'',b_2'')$ must be a predator in $H'^*$ (we use $'$ and $''$ as in the definition of associate bipartite graph).
Let $X$ and $Y$ be the bipartition classes of $H'^*$, so that $a'_1$, $a'_2 \in X$, and $b''_1, b''_2 \in Y$.

We reduce from \sat{3}. 
Let $\Phi=(X,C)$ be an instance of \sat{3} with variables $X=\{x_1,\ldots,x_N\}$ and clauses $C=\{C_1,\ldots,C_M\}$.
The ETH with the Sparsification Lemma imply that there is no algorithm solving every such an instance in time $2^{o(N+M)}$~\cite{ImpagliazzoPaturi,DBLP:journals/jcss/ImpagliazzoPZ01}.
We can assume that each clause contains exactly three literals, since if some $C_i$ is shorter, we can duplicate some literal that already belongs to $C_i$. In what follows we assume that the literals within each clause are ordered.
 
We need to construct an instance $(G_\Phi,L)$ of \lhomo{H'^*} such that $(G_\Phi,L) \to H'^*$ if and only if there exists an assignment $\gamma: X \to \{0,1\}$ that satisfies all the clauses. 
We describe the construction of $G_\Phi$ by its segment representation and assign the lists $L$ to the segments.

First, we introduce a set $V=\{v_1,\ldots,v_N\}$ of pairwise disjoint horizontal segments and a set $U=\{u_1,\ldots,u_{3M}\}$ of pairwise disjoint vertical segments, placed so that segments from $V \cup U$ form a grid (see \cref{fig:string-general} (left)). Note that the set $U \cup V$ corresponds to a biclique in $G_\Phi$. 

Intuitively, each segment $v_i$ corresponds to the variable $x_i$ and each segment from $U$ corresponds to one literal occurring in a clause.
We define the partition of $U$ into three-element subsets $U_1,U_2,\ldots,U_{3M}$ as follows: $U_1$ consists of the three leftmost segments of $U$, $U_2$ consists of the next three segments and so on. For each $i \in [M]$, the segments from $U_i$ correspond to the (ordered) literals occurring in the clause $C_i$.

\begin{figure}[htb]
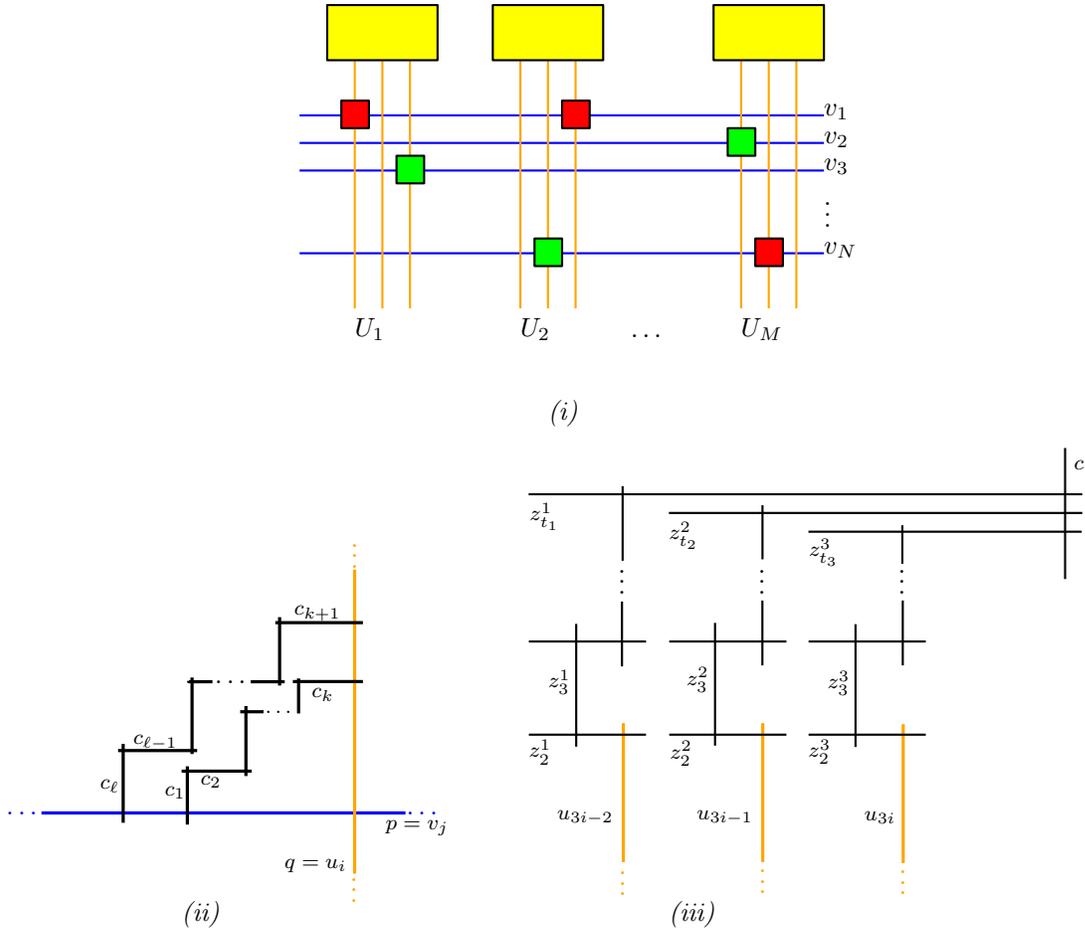

\centering
\includegraphics[scale=1,page=3]{figs-ko}

\includegraphics[scale=1,page=2]{figs-ko}\hskip 1cm
\includegraphics[scale=1,page=1]{figs-ko}
\caption{(i) A general scheme of the construction of geometric representation of $G_\Phi$. 
Green and red boxes correspond to the places where we insert occurrence gadgets (positive or negative), yellow boxes correspond to the places where we insert $\ork{3}(b''_1,b''_2)$-gadgets.
(ii) a segment representation of the occurrence gadget. (iii) a segment representation of an $\ork{3}(b''_1,b''_2)$-gadget.}
\label{fig:string-general}
\end{figure}

For every $v_j \in V$ we set $L(v_j)=\{a'_1,a'_2\}$, and for every $u_i \in U$ we set $L(u_i)=\{b''_1,b''_2\}$.
Mapping $v_j$ to $a'_1$ (resp. to $a'_2$) will correspond to setting $x_j$ to be true (resp. false).
Mapping $u_i$ to $b''_1$ (resp. to $b''_2$) will correspond to the appropriate literal being true (resp. false).

Now we need to ensure that
\begin{enumerate}[(P1)]
\item the mapping of literals agrees with the mapping of corresponding variables, and
\item each clause contains a true literal.
\end{enumerate} 

We ensure property (P1) by introducing \emph{occurrence gadgets}.
For every $v \in V$ and $u \in U$, such that $u$ corresponds to a positive (resp. negative) occurence of $v$,
we construct a positive (resp., negative) occurence gadget, which is the $(a'_1/a'_2 \to b''_1/b''_2)$-gadget (resp. the $(a'_1/a'_2 \to b''_2/b''_1)$-gadget) given by \cref{thm:gadgets} with interface vertices $p,q$.
Consider one such gadget $(F,L,\{p,q\})$.
Recall from \cref{thm:gadgets} that $F$ is a cycle; denote its consecutive vertices $p,c_1,\ldots,c_k,q,c_{k+1},\ldots,c_\ell$ (with the possibility that the part $c_{k+1},\ldots,c_\ell$ is empty, if $p$ and $q$ are adjacent).
Observe that $L(p)=\{a'_1,a'_2\}=L(v)$ and $L(q)=\{b''_1,b''_2\}=L(u)$, so we can identify $p$ with $v$ and $q$ with $u$.
Note that $\{a'_1,a'_2\}$ is complete (in $H$) to $\{b''_1,b''_2\}$, so the edge $vu$ does not impose any further restriction on list homomorphisms of the graph we are constructing.

We represent the  parts of the cycle between $p$ and $q$ as sequences of segments,
and insert them near the intersection of $v$ and $u$, as shown in \cref{fig:string-general} (middle).

Finally, let us show how to ensure property (P2).
Consider a clause $C_i$ and the set $U_i = \{u_{3i-2}, u_{3i-1}, u_{3i}\}$ corresponding to the literals of $C_i$.
We use \cref{thm:gadgets} again, to deduce that we can construct an $\ork{3}(b''_1,b''_2)$-gadget whose underlying graph is $S_{t_1,t_2,t_3}$, where $t_1,t_2,t_2$ are even.
Let $c$ be the central vertex of $S_{t_1,t_2,t_3}$, and for every $j \in [3]$ let $z^j_1,\ldots,z^j_{t_j}$ be consecutive vertices of the $j$-th path of $S_{t_1,t_2,t_3}$ (so $z^j_1=u^j$ and $z^j_{t_j}$ is adjacent to $c$). Note that since $t_1,t_2$, and $t_3$ are even, $c$ belongs to the same bipartition class as $z^1_1,z^2_1$, and $z^3_1$.
We represent the graph $S_{t_1,t_2,t_3}$ as three sequences of segments intersecting a horizontal line $c$, as shown in \cref{fig:string-general} (right), and insert them above the three segments $u_{3i-2}, u_{3i-1}, u_{3i}$.
That concludes the construction of $G_\Phi$.
Note that all vertices of $G_\Phi$ that are not in $U \cup V$ come from the gadgets, so they are already equipped with lists. 

From the above considerations it is straightforward to verify that $(G_\Phi,L) \to H'^*$ if and only if there exists a satisfying assignment for $\Phi$. Moreover, note that by the properties of gadgets given by \cref{thm:gadgets}, the constructed instance is consistent.
To conclude the proof, observe that the gadgets we introduced are of constant size, and their number is linear in $N+M$, so the lower bound follows.
\end{proof}

\newpage
\section{Algorithms for intersection graphs of fat objects}\label{sec:algo}
\subsection{Graph classes admitting clique-based separators} \label{sec:cliquebased}
For a constant $\delta <1$, a \emph{$\delta$-balanced clique-based separator} in  graph $G$ is a family $C=\{C_1,C_2,\ldots,C_p\}$, of subsets of $V(G)$, such that:
\begin{itemize}
\item $\bigcup_{i=1}^p C_i$ is a $\delta$-balanced separator in $G$,
\item for each $i \in [p]$, the set $C_i$ induces a clique of $G$.
\end{itemize}
The \emph{weight} $w(C)$ of a clique-based separator $C=\{C_1,C_2,\ldots,C_p\}$ is defined as $\sum_{i=1}^p \log (|C_i|+1)$.
For a function $f$, we say that a class $\cC$ of graphs \emph{admits balanced clique-based separators of weight $f$},
if there is some $\delta < 1$, such that every $G \in \cC$ with $|V(G)|=n$ admits a $\delta$-balanced clique-based separator of weight at most $f(n)$.

\begin{theorem} \label{thm:cliquebased-general}
Let $H$ be a graph with $\mrc{H} \leq 1$. Let $\alpha <1$ be a constant.
Let $\cC$ be a hereditary class that admits balanced clique-based separators of weight $\Oh(n^{\alpha})$, which can be computed in time $2^{\Oh(n^\alpha)}$.
Then $\lhomo{H}$ in $n$-vertex graphs in $\cC$ can be solved in time $2^{\Oh(n^\alpha)}$.
\end{theorem}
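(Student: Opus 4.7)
The plan is a standard divide-and-conquer on the clique-based separator, where the key ingredient is a bound on the number of partial list homomorphisms from a clique of $G$ into $H$. Let me sketch the two steps separately.

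Step 1 (the coloring-of-a-clique bound). The reason $\mrc{H}\leq 1$ is the ``right'' assumption is the following consequence of \cref{lem:mapclique}. Let $C$ be a clique in $G$, and consider any list homomorphism $f\colon G\to H$ (respecting some list function $L$). Then all but at most $|I(H)|$ vertices of $C$ are mapped to a single reflexive vertex of $H$ (since any reflexive clique in $H$ has at most one vertex), and the remaining at most $|I(H)|$ vertices are mapped injectively into $I(H)$. Thus the number of distinct mappings $C\to V(H)$ arising as restrictions of list homomorphisms is at most
\[
(|R(H)|+1)\cdot \sum_{k=0}^{|I(H)|}\binom{|C|}{k}\cdot k! \;\leq\; c_H\cdot (|C|+1)^{|I(H)|},
\]
i.e.\ polynomial in $|C|$. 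Equivalently, the $\log$ of the number of such mappings is $\Oh(\log(|C|+1))$.

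Step 2 (recursion on the separator). On input $(G,L)$, invoke the assumed algorithm to compute in time $2^{\Oh(n^\alpha)}$ a $\delta$-balanced clique-based separator $\{C_1,\ldots,C_p\}$ of weight $\sum_i\log(|C_i|+1)=\Oh(n^\alpha)$. By Step 1 (applied to each $C_i$), the total number of list-respecting colorings of $S:=\bigcup_i C_i$ compatible with some list homomorphism is at most
\[
\prod_{i=1}^{p} c_H\cdot (|C_i|+1)^{|I(H)|} \;=\; 2^{\Oh(w(C))}\;=\;2^{\Oh(n^\alpha)}.
\]
Enumerate all these colorings of $S$ (they can be listed by first guessing, for each $C_i$, the at most $|I(H)|$ vertices sent into $I(H)$ and their images, and the single reflexive image used by the rest). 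For each such guess, propagate the choice by shrinking the lists of all neighbors of $S$ to the common neighborhood in $H$ of their assigned colors, then recurse on each connected component of $G-S$; since $\cC$ is hereditary, each component lies in $\cC$, and every component has at most $\delta n$ vertices. Return yes iff some branch returns yes on all components.

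Step 3 (complexity). Writing $T(n)$ for the worst-case running time, we obtain
\[
T(n) \;\leq\; 2^{\Oh(n^\alpha)} + 2^{\Oh(n^\alpha)}\cdot \max\Bigl\{\sum_i T(n_i)\;:\;\sum_i n_i \leq n,\; n_i\leq \delta n\Bigr\}.
\]
A standard induction (using that $\alpha<1$ and $\delta<1$, so $\sum_i n_i^\alpha \leq n^\alpha$ up to a factor absorbed in the geometric series $\sum_j \delta^{j\alpha}$) gives $T(n)=2^{\Oh(n^\alpha)}$, as claimed.

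The only non-routine piece is Step 1; the rest is a textbook separator recursion. The main obstacle, if any, is to argue crisply that the \emph{number} of colorings of a single clique $C_i$ is polynomial in $|C_i|$ (not exponential), because this is precisely what converts separator \emph{weight} into branching \emph{factor}. Once Step 1 is in hand, the rest follows because $\alpha<1$ makes the sub-additivity of $x\mapsto x^\alpha$ tame the recursion.
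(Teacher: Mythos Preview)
Your proposal is correct and essentially identical to the paper's proof: both use \cref{lem:mapclique} together with $\mrc{H}\leq 1$ to bound the number of colorings of each clique $C_i$ by a polynomial in $|C_i|$, multiply these bounds to get $2^{\Oh(w(C))}=2^{\Oh(n^\alpha)}$ branches, and then run the standard divide-and-conquer on the balanced separator. Your recursion in Step~3 is written slightly more carefully (summing over components rather than collapsing to a single $F(\delta n)$ term), but the argument and the conclusion are the same.
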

\begin{proof}
Consider an $n$-vertex graph $G \in \cC$. We can assume that $n$ is large (in particular, $n \geq |V(H)|$), as otherwise we can solve the problem by brute-force.
Let $C=\{C_1,C_2,\ldots,C_p\}$ be a balanced clique-based separator of $G$ with weight $w(C) = \Oh(n^\alpha)$.
By our assumption, we can find it in time $2^{\Oh(n^\alpha)}$.

%Consider one clique $C_i$ of $C$.
%Recall from \cref{lem:mapclique} that at most $|I(H)|$ vertices from $C_i$ are mapped to $I(H)$,
%and the remaining vertices are mapped to some reflexive clique of $H$. However, since $\mrc{H} \leq 1$, this means that the remaining vertices (if any) are mapped to a single vertex from $R(H)$.
%Thus the total number of possible colorings of $C_i$ is at most $|C_i|^{|I(H)|} \cdot |R(H)| \leq |C_i|^{|V(H)|}$.\ko{first bound does not work for mrc=0}
%
%Consequently, the total number of colorings of all cliques in $C$ is at most
%\[
%\prod_{i=1}^p  |C_i|^{|V(H)|} = \prod_{i=1}^p  2^{|V(H)| \cdot \log (|C_i|)} = 2^{|V(H)| \cdot \sum_{i=1}^p \log |C_i|} \leq 2^{|V(H)| \cdot w(C)} = 2^{\Oh(n^\alpha)}.
%\]

Consider one clique $C_i$ of $C$.
If $|C_i| \leq |I(H)|$, then there are at most $|V(H)|^{|C_i|} \leq 2^{|V(H)| \log |V(H)|} \leq 2^{|V(H)| \log |V(H)| \log (|C_i|+1)}$ ways to map the vertices from $C_i$ to the vertices of $H$. Suppose now that $|C_i| > |I(H)|$.
Recall from \cref{lem:mapclique} that at most $|I(H)|$ vertices from $C_i$ are mapped to $I(H)$,
and the remaining vertices are mapped to some reflexive clique of $H$. However, since $\mrc{H} \leq 1$, this means that the remaining vertices (if any) are mapped to a single vertex from $R(H)$.
Thus the total number of possible colorings of $C_i$ is at most $|C_i|^{|I(H)|} \cdot |R(H)| \leq |C_i|^{|V(H)|} \leq 2^{|V(H)| \log |V(H)| \log |C_i|} \leq 2^{|V(H)| \log |V(H)| \log (|C_i|+1)}$.

Consequently, the total number of colorings of all cliques in $C$ is at most
\[
\prod_{i=1}^p  2^{|V(H)| \log |V(H)| \cdot \log (|C_i|+1)} = 2^{|V(H)|
\log|V(H)| \cdot \sum_{i=1}^p \log (|C_i|+1)} \leq 2^{|V(H)|\log |V(H)| \cdot w(C)} = 2^{\Oh(n^\alpha)}.
\]

Now we proceed using a standard divide-and-conquer approach. We exhaustively guess the coloring of the separator,
update the lists of neighbors of the vertices whose colors were guessed, and solve the subproblem in each connected component of $G - \bigcup_{i=1}^p C_i$ independently. The total running time is given by the recursive inequality for some $\delta < 1$.
\[
F(n) \leq 2^{\Oh(n^\alpha)} + 2^{\Oh(n^\alpha)} F(\delta \cdot n).
\]
This is solved by $2^{\Oh(n^\alpha)}$, which completes the proof.
\end{proof}

Now \cref{thm:cliquebased} follows directly from \cref{thm:cliquebased-general}.

\thmcliquebased*
\begin{proof}
The first statement follows from the fact that intersection graphs of fat convex objects admit balanced clique-based separators of weight $\Oh(\sqrt{n})$, which can be found in polynomial time, if the geometric representation is given~\cite{BergBKMZ20}.

Similarly, for the second statement, pseudodisk intersection graphs of fat convex objects admit balanced clique-based separators of weight $\Oh(n^{2/3}\log n)$, which can be found in polynomial time, if the geometric representation is given~\cite{BergKMT21}.
\end{proof}

\subsection{Fat, similarly-sized objects} \label{sec:algo-fatssized}
In this section we consider intersection graphs of fat, similarly-sized objects.
The algorithm presented in this section uses the area occupied by the geometric representation as the measure of the instance.
Let us start with introducing some notions.

Let $V$ be a set of $n$ fat, similarly-sized objects in $\R^2$.
Recall that there is a constant $R$, such that each object in $V$ contains a unit diameter disk and is contained in a disk of radius $R$.
In what follows we hide the factors depending on $R$ in the $\Oh(\cdot)$ notation.

Let us imagine a fine grid partitioning of $\R^2$ into square cells of unit diameter, i.e., of side length $1/\sqrt{2}$.
This lets us to use discretized notion of a bounding box and of the area.

For an object $v \subseteq \R^2$, by $\bb(v)$ we denote the minimum grid rectangle (i.e., rectangle whose sides are contained in grid lines) containing $v$, and by $\area(v)$ we denote the area of $\bb(v)$.
For a set $V$ of objects, we define $\bb(V) = \bb(\bigcup_{v \in V} v)$ and $\area(V) = \area(\bb(V))$.

Let us point out that in general $\area(V)$ can be arbitrarily large (unbounded in terms of $n$). 
However, it is straightforward to observe that this is not the case if $\IG(V)$ is connected.

\begin{observation}\label{obs:connectedarea}
Let $V$ be a set of $n$ fat, similarly-sized objects in $\R^2$, such that $\IG(V)$ is connected.
Then $\area(V) = \Oh(n^2)$.
\end{observation}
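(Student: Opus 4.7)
The plan is to bound the geometric diameter of $\bigcup_{v \in V} v$ using connectivity, and then square it to obtain an area bound for the grid bounding rectangle. The fatness and similar-size assumptions guarantee that every object has diameter $\Oh(1)$ (since each is contained in a disk of radius $R=\Oh(1)$); connectivity is what prevents the representation from spreading over an unbounded region.

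First, I would fix an arbitrary root object $v_0 \in V$ and a witness point $p_0 \in v_0$. For every other $v \in V$, using connectivity of $\IG(V)$, I would select a path $v_0 = u_0, u_1, \ldots, u_k = v$ in $\IG(V)$ of length $k \leq n-1$. For each $i \in [k]$, since $u_{i-1}$ and $u_i$ are adjacent in $\IG(V)$, I can pick a witness $q_i \in u_{i-1} \cap u_i$; set $q_0 := p_0$. Because $u_{i-1}$ has diameter $\Oh(1)$ and contains both $q_{i-1}$ and $q_i$, consecutive witness points satisfy $\|q_{i-1} - q_i\| = \Oh(1)$. By the triangle inequality, $\|p_0 - q_k\| = \Oh(n)$, and since $v$ also has diameter $\Oh(1)$ and contains $q_k$, every point of $v$ lies within Euclidean distance $\Oh(n)$ of $p_0$.

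Applying this to all $v \in V$, the union $\bigcup_{v \in V} v$ is contained in a disk of radius $\Oh(n)$ centred at $p_0$, and hence in an axis-aligned square of side length $\Oh(n)$. The minimum grid rectangle $\bb(V)$ containing this square differs from it by at most one grid cell (of constant side length) on each side, so $\bb(V)$ also has side lengths $\Oh(n)$. Therefore $\area(V) = \Oh(n^2)$, as claimed.

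I do not anticipate any real obstacle: the argument is essentially a routine triangle-inequality hop along a path in $\IG(V)$, and the only mild care needed is in absorbing the constant-sized grid rounding into the $\Oh(\cdot)$ notation.
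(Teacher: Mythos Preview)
Your argument is correct and is exactly the straightforward triangle-inequality hop the paper has in mind; the paper does not spell out a proof for this observation at all (it merely calls it ``straightforward''), so there is nothing further to compare.
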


Recall that each object $v \in V$ contains a unit-diameter disk with the center $c_v$.
We assign $v$ to the grid cell containing $c_v$ (if $c_v$ is on the boundary of two or more cells, we choose one arbitrarily).
Now note that all sets assigned to a single cell form a clique in $\IG(V)$.
Consequently, the vertex set of $\IG(V)$ can be partitioned into $\Oh(\area(V))$ subsets, each inducing a clique; we call these subsets \emph{cell-cliques}.

\paragraph*{Reduction to small-area instances.}
First, we show that intersection graphs of fat, similarly-sized objects admit balanced separators, where the size of instances is measured in terms of the \emph{area} occupied by the geometric representation.

\begin{lemma} \label{lem:trueseparator}
Let $G=\IG(V)$, where $V$ is a set of $n$ fat, similarly-sized objects in $\R^2$ and $G$ is connected.
Then either $\area(V) = \Oh(n^{2/3})$, or there exists a horizontal or vertical separating line $\ell$ such that:
\begin{itemize}
\item the number of objects whose convex hull intersects $\ell$ is $\Oh(n^{2/3})$, and
\item the sets $V_1,V_2$ of objects on each side of $\ell$ (whose convex hulls are disjoint from $\ell$) satisfy
\[
\area(V_1) \leq \frac{3}{4} \area(V) \text{ and }\area(V_2) \leq \frac{3}{4} \area(V).
\]
\end{itemize}
Furthermore $\ell$ can be found in time polynomial in $\area(V)$ and $n$.
\end{lemma}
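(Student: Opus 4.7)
Let $A = \area(V)$ and let $\bb(V) = [x_0, x_0+w] \times [y_0, y_0+h]$, so $A = wh$. If $A = \Oh(n^{2/3})$ (with an appropriate constant absorbing the $R$-dependence), the first alternative of the lemma holds, so I suppose $A = \Omega(n^{2/3})$. By symmetry assume $w \ge h$, hence $w \ge \sqrt{A} = \Omega(n^{1/3})$, and I search for a vertical line; the case $h > w$ is handled by a symmetric horizontal cut.

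Let $\mathcal{L}$ be the family of vertical grid lines whose $x$-coordinate lies in the middle half $[x_0 + w/4, \, x_0 + 3w/4]$ of the horizontal span of $\bb(V)$. Since grid cells have side $1/\sqrt{2}$, we have $|\mathcal{L}| = \Theta(w)$. For any $\ell \in \mathcal{L}$ at $x = x_\ell$, let $V_1, V_2 \subseteq V$ be the objects whose convex hulls lie strictly to the left or right of $\ell$. Then $\bb(V_1) \subseteq [x_0, x_\ell] \times [y_0, y_0+h]$, so
\[
\area(V_1) \le (x_\ell - x_0) \cdot h \le \tfrac{3}{4} w \cdot h = \tfrac{3}{4} A,
\]
and symmetrically $\area(V_2) \le \tfrac{3}{4} A$. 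Thus every candidate line satisfies the area-balance condition.

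For the crossing count, note that each object is contained in a disk of radius $R = \Oh(1)$, so its convex hull has diameter at most $2R$ and intersects at most $\lceil 2R\sqrt{2} \rceil + 1 = \Oh(1)$ vertical grid lines. Summing over the $n$ objects, the total number of (convex hull, $\ell$)-incidences with $\ell \in \mathcal{L}$ is at most $\Oh(n)$. By an averaging argument, some $\ell^* \in \mathcal{L}$ is crossed by at most
\[
\Oh(n / |\mathcal{L}|) = \Oh(n/w) \le \Oh(n/\sqrt{A}) = \Oh(n^{2/3})
\]
convex hulls, where the last equality uses $A = \Omega(n^{2/3})$. We identify $\ell^*$ by enumerating all $\Oh(w) = \Oh(\sqrt{A})$ candidates in $\mathcal{L}$ and computing the crossing count of each, which takes time polynomial in $n$ and $\area(V)$.

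The argument is a fairly standard area-averaging separator, so no step is a genuine obstacle. The only care needed is to match constants cleanly: the $R$-dependent constant in ``each convex hull crosses $\Oh(1)$ vertical grid lines'' must be absorbed into the final $\Oh(n^{2/3})$ bound, and the middle-half restriction $x_\ell \in [x_0 + w/4,\, x_0 + 3w/4]$ must yield exactly the stated $\tfrac{3}{4} A$ area bound, which it does because $\bb(V_i)$ is contained in a rectangle with one side equal to $h$ and the other of length at most $\tfrac{3}{4} w$.
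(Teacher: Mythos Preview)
Your proof is correct and follows essentially the same approach as the paper: pick the longer side of $\bb(V)$, consider the vertical (or horizontal) grid lines in a central band, use the $\Oh(1)$ bound on how many grid lines a single convex hull can cross to average and find a lightly-crossed line, and bound $\area(V_i)$ by restricting the width. The only cosmetic difference is that the paper restricts to the middle \emph{third} of grid lines (obtaining the slightly stronger $\area(V_i)\le \tfrac{2}{3}\area(V)$), whereas you restrict to the middle \emph{half}, which matches the stated $\tfrac{3}{4}$ bound exactly.
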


\begin{proof}
Suppose that $\area(V)$ is $\Omega(n^{2/3})$, and let $a$ (resp., $b$) be the number of vertical (resp., horizontal) gird lines intersecting $\bb(V)$.

Notice that $\area(V) = \Theta((a-1)(b-1))$, thus $ab=\Omega(n^{2/3})$.
Consequently, $\max(a,b)=\Omega(n^{1/3})$; assume without loss of generality that $a=\Omega(n^{1/3})$.
Let $\ell_1,\ell_2,\ldots,\ell_a$ be the vertical grid lines intersecting $\bb(V)$, ordered from left to right.
Observe that the convex hull of an object that can be covered by a disk of radius $R$ can intersect at most $2R\cdot\sqrt{2}$ vertical grid lines. Let $t=\lceil 2R\cdot\sqrt{2} \rceil$.
Then we have that the convex hull of each object in $V$ contributes to at most $t$ distinct vertical lines.
Hence, the total number of intersections between the vertical lines $\ell_{\lceil (a-1)/3 \rceil},\dots,\ell_{ \lfloor 2(a-1)/3 \rfloor}$ and the convex hulls of objects is at most $nt=\Oh(n)$.
Thus there exists a vertical line $\ell_i$ with $\lceil (a-1)/3 \rceil \leq i \leq \lfloor 2(a-1)/3 \rfloor$ that intersects at most $nt/(a/3+1)=\Oh(n^{2/3})$ convex hulls.

Consider now the set $V_1$ of objects to the left of $\ell_i$ whose convex hulls are disjoint from $\ell_i$; we call the intersection graph of these objects the ``left'' instance.
Similarly, the set $V_2$ of objects to the right of $\ell_i$ form the ``right'' instance.
Notice that the distance between $\ell_1$ and $\ell_a$, i.e., the width of $\bb(V)$, is at least $(a-1)/\sqrt{2}$, while the width of $\bb(V_1)$ is at most $i/\sqrt{2} \leq \frac{2}{3}(a-1)/\sqrt{2}$.
At the same time, the height of $\bb(V_1)$ is less or equal to the height of $\bb(V)$.
Consequently, $\area(V_1) \leq \frac{2}{3} \area(V)$.
Applying an analogous reasoning to $V_2$, we obtain that $\area(V_2) \leq \frac{2}{3} \area(V)$.
\end{proof}

\paragraph*{Solving small-area instances.}
Let us introduce an auxiliary problem.
The $\lhomorc{H}$ problem is a restriction of $\lhomo{H}$, where for every instance $(G,L)$, and for every $v \in V(G)$ the set $L(v)$ induced a reflexive clique in $H$.
Note that in this problem we can always focus on the subgraph induced by reflexive vertices of $H$, as irreflexive vertices do not appear in any lists.
Thus, $\lhomorc{H}$ is equivalent to $\lhomorc{H_R}$, where $H_R := H[R(H)]$.

For an instance $(G,L)$ of $\lhomo{H}$, we say that a family $\cX$ of instances of $\lhomo{H}$ is \emph{equivalent} to $(G,L)$ if the following holds: $(G,L)$ is a yes-instance if and only if $\cX$ contains at least one yes-instance.

\begin{lemma} \label{lem:reducetorc}
Let $V$ be a set of $n$ similarly-sized fat objects in $\R^2$.
Let $(G,L)$ be an instance of $\lhomo{H}$, where $G= \IG(V)$.
Then in time $n^{\Oh(\area(V))}$ we can build a family $\cY$ of instances of $\lhomorc{H}$, such that:
\begin{itemize}
\item $|\cY| = n^{\Oh(\area(V))}$,
\item each instance in $\cY$ is an induced subgraph of $G$,
\item $\cY$ is equivalent to $(G,L)$.
\end{itemize} 
\end{lemma}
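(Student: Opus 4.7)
The plan is to exploit the cell-clique decomposition together with \cref{lem:mapclique}. Since $V$ consists of fat, similarly-sized objects, the vertex set of $G$ partitions into $k = \Oh(\area(V))$ cell-cliques $C_1, \ldots, C_k$. By \cref{lem:mapclique}, every list homomorphism $f \colon (G, L) \to H$ sends at most $|I(H)|$ vertices of each $C_i$ to (distinct) vertices of $I(H)$, while the remaining vertices of $C_i$ are all mapped into some reflexive clique of $H$. This cleanly splits the behaviour inside each cell-clique into a ``small irreflexive part'' that is cheap to brute-force and a ``reflexive part'' that can be handed off to $\lhomorc{H}$.

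Guided by this, for each cell-clique $C_i$ I will enumerate all triples $(S_i, \sigma_i, K_i)$ where $S_i \subseteq C_i$ has $|S_i| \le |I(H)|$, $\sigma_i \colon S_i \to I(H)$ is an injection with $\sigma_i(v) \in L(v)$ for every $v \in S_i$, and $K_i \subseteq R(H)$ is a reflexive clique that is intended to contain the images of $C_i \setminus S_i$. The number of such triples per cell-clique is at most $n^{|I(H)|} \cdot |I(H)|! \cdot 2^{|R(H)|} = n^{\Oh(1)}$, so the number of global guesses is $n^{\Oh(k)} = n^{\Oh(\area(V))}$ and they can be listed in the same time. I discard in polynomial time any guess for which there exist $u,v \in S := \bigcup_i S_i$ with $uv \in E(G)$ but $\sigma(u)\sigma(v) \notin E(H)$, where $\sigma$ is the aggregate of the $\sigma_i$'s.

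For every surviving global guess I produce one member of $\cY$: set $G' := G - S$ and, for every $v \in V(G')$ lying in some $C_i$,
\[
L'(v) := L(v) \cap K_i \cap \bigcap_{u \in N_G(v) \cap S} N_H(\sigma(u)).
\]
Then $G'$ is an induced subgraph of $G$, and $L'(v)$ is a subset of the reflexive clique $K_i$, so it induces a reflexive clique in $H$. Hence $(G', L')$ is a valid $\lhomorc{H}$-instance. The bounds on $|\cY|$ and the construction time match the claim.

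For equivalence, if $f \colon (G,L) \to H$ is a list homomorphism, then \cref{lem:mapclique} applied to each $C_i$ yields an internally consistent guess $(S_i, \sigma_i, K_i)$ by taking $S_i = f^{-1}(I(H)) \cap C_i$, $\sigma_i = f|_{S_i}$, and $K_i$ the reflexive clique covering $f(C_i \setminus S_i)$; the restriction of $f$ to $V(G')$ is then a list homomorphism of the corresponding $(G', L') \in \cY$. Conversely, any $f' \colon (G', L') \to H$ with $(G', L') \in \cY$ extends to a list homomorphism of $(G,L)$ by setting $f|_S = \sigma$: membership in lists follows from the choice of $\sigma$ and from $L'(v) \subseteq L(v)$; edges inside $S$ are preserved by the consistency filter, edges between $S$ and $V(G')$ by the definition of $L'$, and edges inside $V(G')$ by $f'$. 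The main subtlety is exactly this consistency check across cell-clique boundaries (edges of $G[S]$ that connect different $S_i$'s), but since $|I(H)|$ is a constant the check is routine and does not affect the count.
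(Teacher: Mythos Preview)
Your proof is correct and follows essentially the same approach as the paper: both use the cell-clique partition, invoke \cref{lem:mapclique} to bound the irreflexive part of each clique, enumerate the $n^{\Oh(1)}$ local guesses per clique, and restrict the remaining lists to the guessed reflexive clique. You are simply more explicit than the paper about the consistency check on edges inside $S$ and about the two directions of equivalence, but the underlying argument is the same.
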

\begin{proof}
Recall that $V$ can be partitioned into $\Oh(\area(V))$ cell-cliques, and consider one such cell-clique $C$.
By \cref{lem:mapclique} at most $|I(H)|$ vertices from $C$ receive colors from $I(H)$ and the remaining vertices of $C$ must be mapped to some reflexive clique of $H$.
We guess the vertices mapped to $I(H)$ along with their colors and the reflexive clique to which the remaining vertices are mapped.
As $H$ is a constant, the total number of branches created for $C$ is $|C|^{\Oh(|V(H)|)}=n^{\Oh(1)}$.
Repeating this for every clique, we result in $n^{\Oh(\area(V))}$ branches.

Consider one such a branch.
For each vertex $v$ whose color was guessed (i.e., this color is in $I(H)$), we update the lists of neighbors of $v$.
More precisely, if the color guessed for $v$ is $a$,
then we remove every nonneighbor of $a$ from the lists of all neighbors of $v$.
After that we remove $v$ from the graph.
Similarly, we update the lists of vertices $v$ that are supposed to be mapped to vertices of $R(H)$:
we remove from $L(v)$ every vertex that does not belong to the guessed reflexive clique.

Note that this way we obtained an instance of $\lhomorc{H}$, where the instance graph is an induced subgraph of $G$.
We include such an instance into $\cY$.

As the number of branches is $n^{\Oh(\area(V))}$, we obtain that $|\cY|=n^{\Oh(\area(V))}$.
Furthermore, from the way how $\cY$ was constructed, it is clear that $\cY$ is equivalent to $(G,L)$.
\end{proof}

\paragraph*{Wrapping up the proof.}
Pipelining \cref{lem:trueseparator} with \cref{lem:reducetorc} we obtain the following.

\begin{lemma}\label{lem:wrappedfatssized}
Let $H$ be a fixed graph.
Suppose that $\lhomorc{H}$ can be solved in time $2^{\Oh(n^{2/3} \log n)}$ in $n$-vertex intersection graphs of fat, similarly-sized objects, given along with a geometric representation.

\noindent Then $\lhomo{H}$ can be solved in time $2^{\Oh(n^{2/3} \log n)}$ in $n$-vertex intersection graphs of fat, similarly-sized objects, given along with a geometric representation.
\end{lemma}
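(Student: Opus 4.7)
The plan is a divide-and-conquer approach that either recurses on a geometric separator supplied by~\cref{lem:trueseparator}, or, once the area becomes small, reduces to $\lhomorc{H}$ via~\cref{lem:reducetorc} and invokes the assumed $\lhomorc{H}$ algorithm on each resulting instance. First I would handle each connected component of the input graph independently, so that by~\cref{obs:connectedarea} we may assume $\area(V) = \Oh(n^2)$ at the top of the recursion.

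The recursive step on an instance $(G', L')$ with $G' = \IG(V')$ and $n' = |V'|$ proceeds as follows. Apply~\cref{lem:trueseparator}. If $\area(V') = \Oh((n')^{2/3})$, use~\cref{lem:reducetorc} to construct an equivalent family $\cY$ of $(n')^{\Oh((n')^{2/3})} = 2^{\Oh((n')^{2/3} \log n')}$ instances of $\lhomorc{H}$; each one is an induced subgraph of $G'$, hence still an intersection graph of a fat, similarly-sized subfamily of $V'$ with an inherited representation, so the assumed $2^{\Oh(m^{2/3} \log m)}$-time algorithm applies to each and we return the OR of the answers. Otherwise, the lemma hands us a separator $S$ of size $\Oh((n')^{2/3})$ whose removal splits $V'$ into $V'_1, V'_2$ with $\area(V'_j) \le \tfrac{3}{4}\area(V')$; I would enumerate the $|V(H)|^{|S|} = 2^{\Oh((n')^{2/3})}$ list colorings of $S$, push each coloring into the lists of neighbors of $S$ in the usual way, and recurse independently on the two sides.

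For the analysis, $n'$ is non-increasing along every branch of the recursion, so $n' \leq n$ at every node and the branching factor is at most $2 \cdot 2^{\Oh(n^{2/3})} = 2^{\Oh(n^{2/3})}$. The area shrinks by a constant factor whenever we branch on a separator and is bounded by $\Oh(n^2)$ at the root, hence the recursion depth is $\Oh(\log n)$. Consequently the recursion tree has at most $2^{\Oh(n^{2/3}\log n)}$ leaves, and each leaf spends $2^{\Oh(n^{2/3}\log n)}$ work via the small-area base case, for a total running time of $2^{\Oh(n^{2/3}\log n)}$.

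The main obstacle I expect is coordinating the two measures in the analysis: the branching factor and the base-case cost are expressed in the vertex count $n$, while termination and depth control rely on the area. The connection provided by~\cref{obs:connectedarea} (namely $\area(V) = \Oh(n^2)$ once the graph is connected) together with the area-shrinking guarantee of~\cref{lem:trueseparator} is precisely what forces the depth to be $\Oh(\log n)$, just enough for the $2^{\Oh(n^{2/3}\log n)}$ bound to survive accumulation over all levels.
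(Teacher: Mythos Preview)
Your proposal is correct and follows essentially the same approach as the paper's proof: reduce to connected components so that $\area(V)=\Oh(n^2)$, then recurse on the line separator from \cref{lem:trueseparator} (branching over all $2^{\Oh(n^{2/3})}$ colorings of $S$) until the area drops to $\Oh(n^{2/3})$, at which point invoke \cref{lem:reducetorc} and the assumed $\lhomorc{H}$ algorithm. Your depth-times-branching analysis and the paper's recurrence $F(\mu)\le 2^{\Oh(n^{2/3})}F(\tfrac34\mu)$ are just two phrasings of the same bound.
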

\begin{proof}
Let $V$ be a set of $n$ fat, similarly-sized objects in $\R^2$ and let $G=\IG(V)$.
Let $(G,L)$ be an instance of $\lhomo{H}$.
Notice that if $G$ is disconnected, then we can solve the problem for each connected component separately.
Thus let us assume that $G$ is connected. 
We do induction on $\area(V)$; by \cref{obs:connectedarea} we have $\area(V) = \Oh(n^2)$.

If $\area(V) = \Oh(n^{2/3})$ (the actual constant in $\Oh(\cdot)$ is the constant from \cref{lem:trueseparator}),
we call \cref{lem:reducetorc} to obtain a family $\cY$ of instances of $\lhomorc{H}$, such that $|\cY|=n^{\Oh(n^{2/3})}$.
Each instance in $\cY$ is an induced subgraph of $G$, and $\cY$ is equivalent to $(G,L)$.
By our assumption, each instance in $\cY$ can be solved in time $2^{\Oh(n^{2/3} \log n)}$, and thus we can solve the problem in time
\[
2^{\Oh(n^{2/3} \log n)} + 2^{\Oh(n^{2/3} \log n)} \cdot  2^{\Oh(n^{2/3} \log n)} = 2^{\Oh(n^{2/3} \log n)},
\]
as claimed.

In the other case, we apply \cref{lem:trueseparator}, let $\ell$ be the obtained separating line.
Let $S \subseteq V$ be the set of objects whose convex hull intersects $\ell$;
by \cref{lem:trueseparator} the size of $S$ is $\Oh(n^{2/3})$.

Let $V_1,V_2$ be the partition of $V \setminus S$ into instances of each side of $\ell$, as in \cref{lem:trueseparator}.
Recall that $\area(V_1) \leq \frac{3}{4} \area(V)$ and $\area(V_2) \leq \frac{3}{4} \area(V)$.

We exhaustively guess the coloring of $S$, this results in $|V(H)|^{|S|} = 2^{\Oh(n^{2/3})}$ branches.
For each such branch we update the lists of neighbors of vertices whose color was guessed.
Now observe that the subinstances induced by $V_1$ and $V_2$ can be solved independently.
Our initial instance is a yes-instance if and only if for some guess both subinstances are yes-instances.

Denoting by $\mu$ the measure of our instance, i.e., $\area(V)$, we obtain the following recursion for the running time.
\[
F(\mu) \leq 2^{\Oh(n^{2/3})} \cdot F\left( \frac{3}{4} \mu \right),
\]
which solves to $F(\mu) \leq 2^{\Oh(n^{2/3} \log \mu)}$.
As $\mu = \Oh(n^2)$, we conclude that the total running time is $2^{\Oh(n^{2/3} \log n)}$.
\end{proof}

Before we proceed to the proof of \cref{thm:fatssized}, let us recall the following classic result by Edwards~\cite{Ed86}.

\begin{theorem}[Edwards~\cite{Ed86}]\label{thm:edwards}
For every graph $H$, every instance $(G,L)$ of $\lhomo{H}$, where every list is of size at most 2,
can be solved in polynomial time.
\end{theorem}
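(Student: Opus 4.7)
The plan is to reduce $\lhomo{H}$ with lists of size at most 2 to the 2-SAT problem, which is known to be solvable in polynomial time (Aspvall--Plass--Tarjan). The reduction exploits the fact that a list of size two gives a binary choice, and edge constraints become constraints over pairs of Boolean variables, each of which is a conjunction of at most four forbidden pair-assignments.

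First I would handle the degenerate cases. If some $L(v)$ is empty, reject immediately. If $L(v) = \{a_v\}$ is a singleton, $f(v)$ is forced, so for every neighbor $u$ of $v$ we may delete from $L(u)$ every vertex not adjacent to $a_v$ in $H$; repeating this exhaustively is a polynomial-time propagation step. After this, we may assume every remaining vertex $v$ has $L(v) = \{a_v, b_v\}$ with $a_v \neq b_v$.

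Next I would introduce, for each remaining $v \in V(G)$, a Boolean variable $x_v$ with the interpretation that $x_v$ is true iff $f(v) = a_v$ (and false iff $f(v) = b_v$). For each edge $uv \in E(G)$ and each pair $(c,d) \in L(u) \times L(v)$ with $cd \notin E(H)$, I would add a 2-clause forbidding $f(u) = c \wedge f(v) = d$; depending on whether $c$ equals $a_u$ or $b_u$ (and likewise on the $v$-side) this is one of $\neg x_u \vee \neg x_v$, $\neg x_u \vee x_v$, $x_u \vee \neg x_v$, $x_u \vee x_v$. For each edge this produces at most four clauses, so the total number of clauses is $\Oh(|E(G)|)$ and the construction runs in polynomial time. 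A satisfying truth assignment then defines a mapping $f: V(G) \to V(H)$ that respects the lists and maps every edge of $G$ to an edge of $H$ (allowing loops), and conversely every list homomorphism yields a satisfying assignment. Applying the polynomial-time 2-SAT algorithm completes the proof.

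There is no genuine obstacle here: the whole argument is a direct encoding, and the correctness is immediate from the definition of the variables and clauses. The only point one needs to be slightly careful about is treating loops in $H$ correctly (so that when $c = d$ and $c$ has a loop, the pair $(c,c)$ is \emph{not} forbidden), but this is automatic once the condition $cd \notin E(H)$ is interpreted using the edge set of $H$ that includes its loops.
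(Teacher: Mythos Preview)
Your proposal is correct and matches the approach the paper attributes to this theorem: the paper does not spell out a proof of \cref{thm:edwards} (it is cited from Edwards), but it explicitly remarks later that the theorem ``solves every instance of $\lhomo{H}$, where each list is of size at most 2, by a reduction to \textsc{2-Sat}'', which is precisely your argument. The preprocessing of empty and singleton lists, the Boolean encoding, and the clause-per-forbidden-pair construction are all standard and correct; nothing is missing.
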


Now, combining \cref{lem:wrappedfatssized} with \cref{thm:edwards}, we obtain \cref{thm:fatssized}.
\thmfatssized*
\begin{proof}
Observe that in every instance of $\lhomorc{H}$, each list is of size at most 2
and thus every such instance can be solved in polynomial time by \cref{thm:edwards}.
So the result follows by \cref{lem:wrappedfatssized}.
\end{proof}

Let us mention one more family of graphs $H$, where $\lhomorc{H}$ is polynomial-time solvable. 
Feder and Hell~\cite{FEDER1998236} studied a variant of $\lhomo{H}$ called \textsc{CL-LHom}($H$), where each list is restricted to form a connected subset of $H$ (CL stands for ``connected list'').
They proved that if $H$ is reflexive, then the above problem is polynomial-time solvable for chordal graphs $H$, and \NP-complete otherwise.
We observe that for reflexive graphs $H$, the $\lhomorc{H}$ problem is a restriction of \textsc{CL-LHom}($H$) and thus algorithmic results for \textsc{CL-LHom}($H$) carry over to $\lhomorc{H}$.
Consequently, by \cref{lem:wrappedfatssized}, we obtain the following corollary.

\begin{corollary}
Let $H$ be a graph such that $H_R$ is chordal. Then $\lhomo{H}$ can be solved in time
$2^{\Oh( n ^{2/3} \log n)}$ in $n$-vertex intersection graphs of fat, similarly-sized objects,
provided that the instance graph is given along with a geometric representation.
\end{corollary}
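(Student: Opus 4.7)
The proof is essentially a direct combination of \cref{lem:wrappedfatssized} with the Feder--Hell polynomial-time algorithm for \textsc{CL-LHom}($H$) on chordal reflexive targets, and the discussion immediately preceding the corollary already sketches the reduction. My plan is to formalize this in three short steps.

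First, I would invoke \cref{lem:wrappedfatssized}, which says that in order to obtain the claimed $2^{\Oh(n^{2/3}\log n)}$ running time for $\lhomo{H}$ on $n$-vertex intersection graphs of fat, similarly-sized objects (given with representation), it suffices to solve $\lhomorc{H}$ within the same time bound on the same class of graphs. In particular, a polynomial-time algorithm for $\lhomorc{H}$ on all graphs would be more than enough.

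Next, I would reduce $\lhomorc{H}$ to $\lhomorc{H_R}$, using the observation made just after the definition of $\lhomorc{H}$: since every list in an instance of $\lhomorc{H}$ is contained in a reflexive clique of $H$, the irreflexive vertices never appear, and the problem depends only on $H_R = H[R(H)]$. Thus every instance of $\lhomorc{H}$ is, formally, an instance of $\lhomorc{H_R}$ on the same graph $G$.

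Finally, I would observe that $H_R$ is reflexive and (by hypothesis) chordal, and that every list in an $\lhomorc{H_R}$ instance is a clique of $H_R$, hence in particular a connected subset of $V(H_R)$. Therefore every instance of $\lhomorc{H_R}$ is also a valid instance of the $\textsc{CL-LHom}(H_R)$ problem of Feder and Hell~\cite{FEDER1998236}, whose result states that $\textsc{CL-LHom}(H_R)$ is polynomial-time solvable whenever $H_R$ is a reflexive chordal graph. Consequently $\lhomorc{H}$ is polynomial-time solvable in general, and plugging this into \cref{lem:wrappedfatssized} yields the claimed $2^{\Oh(n^{2/3}\log n)}$ bound for $\lhomo{H}$. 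There is no real obstacle here; the only mild care needed is in verifying that membership in $\lhomorc{H_R}$ really implies connectivity of every list (so that the Feder--Hell hypothesis is met), which is immediate because cliques in reflexive graphs are connected.
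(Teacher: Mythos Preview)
Your proposal is correct and follows exactly the same approach as the paper: reduce via \cref{lem:wrappedfatssized} to $\lhomorc{H}$, pass to $\lhomorc{H_R}$, observe that clique lists are connected so that every instance is a \textsc{CL-LHom}($H_R$) instance, and apply the Feder--Hell polynomial-time algorithm for reflexive chordal targets. The paper's own justification is precisely the paragraph preceding the corollary, which you have faithfully expanded.
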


\newpage
\section{Lower bounds for intersection graphs of fat objects}\label{sec:lower}
In this section we aim to show that the assumptions of \cref{thm:cliquebased} and \cref{thm:fatssized} cannot be dropped or significantly relaxed, by exhibiting the corresponding lower bounds.

\subsection{Fat, convex objects}
First, we show that the assumption of \cref{thm:fatssized} that the given geometric representation of the input graph consists of similarly-sized objects cannot be dropped. Thus we consider intersection graphs of convex, fat objects, but we do not assume that they are similarly sized.

\begin{figure}[htb]
\centering
\includegraphics[scale=1,page=1]{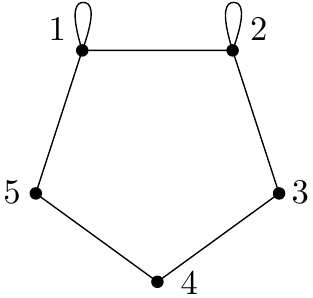}
\caption{The graph $H$ from \cref{thm:convexfat-lower}.} \label{fig:convexfat-lowerH}
\end{figure}

\begin{restatable}{theorem}{thmconvexfatlower}
%\begin{theorem}
\label{thm:convexfat-lower}
Assume the ETH.
There is a graph $H$ with $\mrc{H}=2$, such that $\lhomo{H}$ cannot be solved in time $2^{o(n)}$ in $n$-vertex intersection graphs of equilateral triangles, even if they are given along with a geometric representation.
%\end{theorem}
\end{restatable}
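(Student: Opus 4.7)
The approach is to adapt the reduction of \cref{thm:homo-string-bipartite} from string graphs to intersection graphs of equilateral triangles of very different sizes. The graph $H$ depicted in \cref{fig:convexfat-lowerH} has $\mrc{H}=2$, meaning it contains a reflexive edge $r_1 r_2$ together with irreflexive vertices arranged so that $H$ (or a factor $H'$ in the sense of \cref{thm:factorization}) is non-bi-arc and contains a predator. Combined with \cref{lem:homo-star}, this lets us start from a consistent instance $(G_\Phi, L^*)$ of $\lhomo{H'^*}$ on a grid graph produced by the 3-SAT reduction inside the proof of \cref{thm:homo-string-bipartite}, and it suffices to realize $G_\Phi$ as an intersection graph of equilateral triangles of size $\Oh(N+M)$ for a 3-SAT formula with $N$ variables and $M$ clauses.

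The first step is to implement the grid skeleton. In the string proof, each variable is represented by a long horizontal segment and each clause-literal by a long vertical segment, producing a biclique-like layout. Here I replace each such segment by one large equilateral triangle: all ``variable triangles'' share one orientation and are placed so that each is long and nearly disjoint from the others; all ``clause-literal triangles'' share a second orientation, and are placed so they cross every variable triangle in precisely the desired pattern. Because a large equilateral triangle is close to a segment on the scale of the gadgets placed inside, the resulting crossing pattern matches that of the grid representation in the string proof.

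The second step is to insert the gadgets of \cref{thm:gadgets}: for every variable-clause incidence I attach an $(a_1'/a_2' \to b_1''/b_2'')$- or $(a_1'/a_2' \to b_2''/b_1'')$-gadget, whose underlying graph is an even cycle of constant length, and on top of each clause I attach an $\ork{3}(b_1'', b_2'')$-gadget whose underlying graph is a spider $S_{t_1,t_2,t_3}$. Each such constant-size graph is realized by a bouquet of small equilateral triangles located inside the relevant crossing region. The crucial geometric flexibility is that equilateral triangles of wildly different sizes can realize arbitrary local intersection patterns: a small triangle can enter the interior of a large triangle through one side and exit through another side, intersecting exactly any prescribed subset of other small triangles nearby. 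This is precisely the ``piercing'' capability discussed before the theorem, and it is exactly what is forbidden when triangles are similarly sized, so the obstruction that made \cref{thm:fatssized} possible disappears here.

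Since each gadget uses $\Oh(1)$ triangles and there are $\Oh(N+M)$ gadgets in total, the resulting representation has $n = \Oh(N+M)$ triangles, and by construction its intersection graph together with the lists inherited from the gadgets is equivalent to $(G_\Phi, L^*)$. A hypothetical $2^{o(n)}$ algorithm would therefore solve 3-SAT in $2^{o(N+M)}$ time, contradicting the ETH with the Sparsification Lemma. The main obstacle will be showing that no spurious edges arise between nominally disjoint gadgets placed in geometrically close regions; this is handled by using a sufficiently steep scale hierarchy (variable triangles $\gg$ clause triangles $\gg$ gadget triangles $\gg$ internal gadget triangles) and placing each local gadget deep inside the interior of its crossing cell, so that the only intersections that occur are the ones explicitly prescribed by the gadget's graph.
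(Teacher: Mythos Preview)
Your approach has a genuine geometric gap. You propose to realize the grid graph $G_\Phi$ from \cref{thm:homo-string-bipartite} as an intersection graph of equilateral triangles by replacing each horizontal variable segment and each vertical literal segment with a large equilateral triangle. But $G_\Phi$ contains $K_{N,3M}$ as an induced subgraph with both sides independent, and this cannot be realized with equilateral triangles once $N$ and $M$ are both large. Equilateral triangles are fat, not thin: if a ``variable'' triangle is to intersect $3M$ pairwise-disjoint ``literal'' triangles it must, by convexity, contain a segment of length $\Theta(M)$ near the literals, and being equilateral it then has height $\Theta(M)$ as well. Only a bounded number of such fat triangles can be pairwise disjoint while all hugging the same row of literals, so there is no room for $N$ of them. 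Your remark that ``a large equilateral triangle is close to a segment on the scale of the gadgets'' is precisely the wrong intuition here; the fatness of equilateral triangles is what separates this setting from the segment setting. Moreover, because you route the argument through \cref{lem:homo-star}, the instance must be a \emph{consistent} (hence bipartite) instance of $\lhomo{H'^*}$, so you cannot tolerate any spurious intersections within one side of the bipartition; the phrase ``nearly disjoint'' in your proposal is therefore fatal rather than harmless.

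The paper's proof is completely different and uses $\mrc{H}=2$ constructively rather than via the predator machinery. It fixes $H$ to be the $5$-cycle with two adjacent reflexive vertices $1,2$, builds constant-size variable and clause gadgets out of small triangles, and then connects each clause to its three variables by three \emph{large} equilateral triangles carrying the list $\{1,2\}$. These $3M$ large connector triangles are allowed to intersect one another arbitrarily: since $\{1,2\}$ is a reflexive edge of $H$, every such unwanted adjacency is automatically satisfied by any colouring. Thus the reflexive $K_2$ in $H$ is exactly what makes the long-range wiring immune to crossings --- the mechanism that a bipartite-instance approach through $H'^*$ cannot access.
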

\begin{proof}
Let $H$ be the five-vertex cycle where exactly two adjacent vertices have loops. Clearly $\mrc{H} = 2$. We use vertex names from~\cref{fig:convexfat-lowerH}.

We reduce from 3-\textsc{Sat}. Let $\Phi$ be an instance with $N$ variables and $M$ clauses, each of which contains exactly three variables.
The ETH implies that there is no algorithm solving every such instance in time $2^{o(N+M)}$.
Denote the variables of $\Phi$ by $v_1,v_2,\ldots,v_N$; we will assume that this set is ordered.

Each variable $v_i$ is represented by a 7-vertex variable gadget depicted in \cref{fig:convexfat-gadgets}~(i). We use the notation from the figure.
It is straightforward to verify that in every list homomorphism to $H$, the triangle $x_i$ receives a different color than $y_i$.
We interpret coloring $x_i$ with the color 5 as setting the variable $v_i$ true, and coloring $x_i$ with the color 3 as setting $v_i$ false.
Under this interpretation, $y_i$ represents the value of $\lnot v_i$.

\begin{figure}[h]
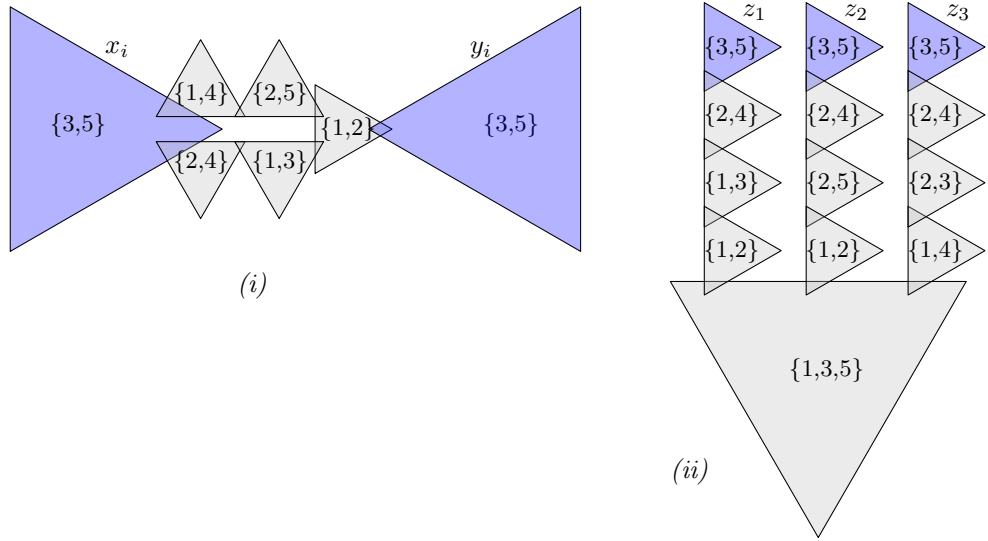

\centering
\raisebox{0.8\height}{
\includegraphics[scale=1,page=2]{figs}}\hspace{30pt}
\includegraphics[scale=1,page=3]{figs}
\caption{ (i) The variable gadget and (ii) the clause gadget from the proof of \cref{thm:convexfat-lower}. Sets indicate the lists.
Triangles marked blue intersect triangles outside the gadget.} \label{fig:convexfat-gadgets}
\end{figure}

For each clause we introduce a 13-vertex clause gadget, depicted in \cref{fig:convexfat-gadgets}~(ii).
Again, it is straightforward to verify that the gadget admits a list homomorphism to $H$ if and only if at least one of vertices $z_1,z_2,z_3$ is colored 3.

The overall arrangement of variable and clause gadgets is depicted in \cref{fig:convexfat-all}. 
So far all introduced triangles are of bounded size.

\begin{figure}[htb]
\centering
\includegraphics[scale=1,page=4]{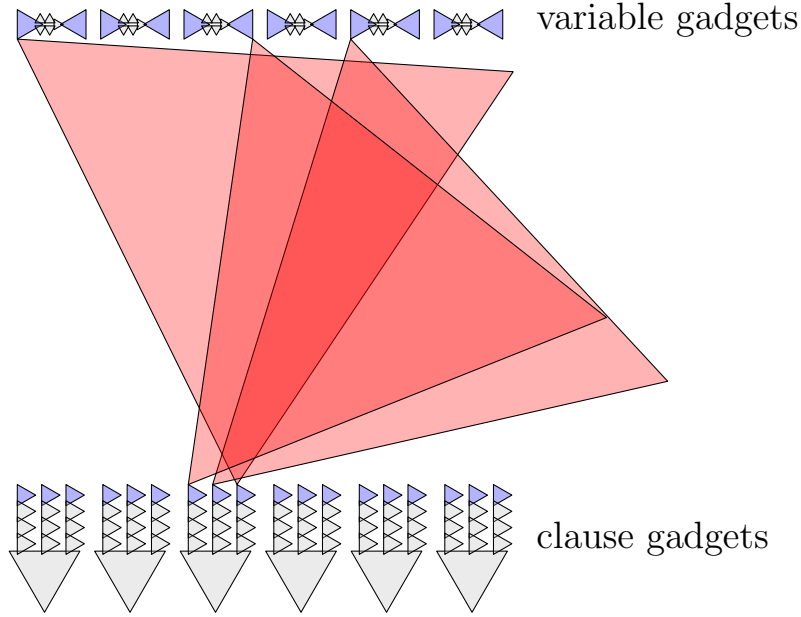}
\caption{The overall arrangement of gadgets in the proof of \cref{thm:convexfat-lower}.
Three triangles $q_{c,1}, q_{c,2}, q_{c,3}$ for a clause $c$ are depicted red.
} \label{fig:convexfat-all}
\end{figure}

The only thing left is to connect variable gadgets with clause gadgets.
Fix a clause $c = (v'_1,v'_2,v'_3)$, where the ordering of literals corresponds to the ordering of variables.
Thus there is a function $\sigma : [3] \to [N]$, such that $v'_i$ is ether $v_{\sigma(i)}$ or $\lnot v_{\sigma(i)}$.
We consider the clause gadget corresponding to $c$.
For each $i \in [3]$, we introduce an equilateral triangle $q_{c,i}$ intersecting $z_i$ and $x_{\sigma(i)}$ (if $v'_i=v_{\sigma(i)}$)
or $y_{\sigma(i)}$ (if $v'_i=\lnot v_{\sigma(i)}$), and no other triangles from vertex and clause gadgets.
This can be done if the triangle $q_{c,i}$ is much larger than the triangles from the gadgets (their diameter depends on $N$ and $M$), see \cref{fig:convexfat-all}.
The list of $q_{c,i}$ is $\{1,2\}$.
Note that this ensures that the color of $z_i$ is the same as the color of the triangle in the vertex gadget intersecting $q_{c,i}$.
Thus, by the properties of the gadgets, we observe that the constructed intersection graph admits a list homomorphism to $H$ if and only if $\Phi$ is satisfiable.

As the total number of vertices in the constructed graph is $7N + 13M + 3M = \Oh(N+M)$, the ETH lower bound follows.
\end{proof}

\subsection{Fat, similarly-sized objects}
Now, instead of focusing on restrictions on the input graph, we focus on the restrictions imposed on the target graph $H$.
We show that the assumption of \cref{thm:fatssized} that $\mrc{H} \leq 2$ cannot be relaxed.

\begin{figure}[htb]
\centering
\includegraphics[scale=1,page=5]{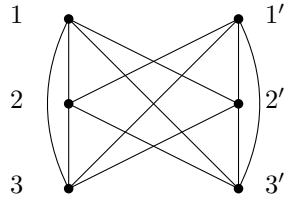}
\caption{The graph $H$ from \cref{thm:fatssized-lower}.
All vertices are reflexive; loops are not drawn to increase readability.
} \label{fig:fatssized-lowerH}
\end{figure}

\begin{restatable}{theorem}{thmfatssizedlower}
%\begin{theorem}
\label{thm:fatssized-lower}
Assume the ETH.
There is a graph $H$ with $\mrc{H}=3$, such that $\lhomo{H}$ cannot be solved in time $2^{o(n)}$ in $n$-vertex intersection graphs of fat similarly-sized triangles, even if they are given along with a geometric representation.
%\end{theorem}
\end{restatable}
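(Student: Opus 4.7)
The plan is to follow the overall template of the proof of \cref{thm:convexfat-lower}: fix the graph $H$ with $\mrc{H}=3$ depicted in \cref{fig:fatssized-lowerH} and reduce from an ETH-hard version of \sat{3} by building variable and clause gadgets out of fat similarly-sized triangles. The main new difficulty is that the ``oversized'' connector triangles $q_{c,i}$ used in \cref{thm:convexfat-lower}, which spanned from a variable gadget all the way to a clause gadget in a single triangle, must now be replaced by chains of similarly-sized triangles without inflating the total instance size beyond $O(N+M)$.

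First I would build variable and clause gadgets using only $O(1)$ fat similarly-sized triangles each, mirroring those in \cref{fig:convexfat-gadgets}. The reflexive triangle in $H$ provides enough structural room to implement both gadgets: a variable gadget exposes two ``output'' triangles with complementary lists $\{a,b\}$, and a clause gadget exposes three ``input'' triangles whose joint list assignment can be extended to a list homomorphism of the whole gadget if and only if at least one input is coloured with the ``true'' value. Correctness reduces to the same style of case analysis as in the proof of \cref{thm:convexfat-lower}.

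Second, I would replace each oversized connector $q_{c,i}$ with a \emph{wire}: a path of similarly-sized triangles with all lists equal to $\{a,b\}$, which by the loop on $b$ in $H$ admits exactly the two constant list homomorphisms and therefore transmits a Boolean value. Crucially, similarly-sized triangles can still pierce one another, so two wires can cross at a single point by paying only $O(1)$ additional triangles to locally separate their lists; this essentially erases the planarity obstruction and lets us route wires even when the variable--clause incidence graph is non-planar.

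The main obstacle is to produce a geometric embedding whose total wire length is $O(N+M)$, because fat similarly-sized objects have bounded local packing and so one cannot simply cluster all gadgets in a tiny region. To overcome this I would reduce from bounded-occurrence \sat{3}, which is $2^{\Omega(N+M)}$-hard under the ETH by the Sparsification Lemma, and lay the $O(N+M)$ constant-size gadgets along a carefully ordered strip so that each variable lies within constant distance of the (constantly many) clauses in which it appears. The few long-range wires forced by the non-planarity of the incidence graph are absorbed by the triangle-piercing property at $O(1)$ cost per crossing, so the total number of triangles is $n = O(N+M)$. A $2^{o(n)}$ algorithm for $\lhomo{H}$ on this class would therefore yield a $2^{o(N+M)}$ algorithm for \sat{3}, contradicting the ETH.
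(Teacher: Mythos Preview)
Your proposal has a genuine gap at the step where you claim the total wire length can be kept to $O(N+M)$. Even restricting to bounded-occurrence \sat{3}, there is no reason the variable--clause incidence graph should admit a linear (or planar) layout in which every variable lies within constant distance of all its clauses: that would amount to the incidence graph having bounded bandwidth, which is simply false in general (it can be an expander). In any strip layout the total edge length is then $\Omega((N+M)^2)$, and with similarly-sized fat triangles each unit of wire costs one triangle, so you would end up with $n = \Theta((N+M)^2)$ and only a $2^{o(\sqrt{n})}$ lower bound. The ``$O(1)$ per crossing'' triangle-piercing trick is orthogonal to this issue: the bottleneck is wire \emph{length}, not the number of crossings.

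The paper avoids wires altogether by a different geometric idea. It \emph{stacks} all $N$ variable gadgets on top of one another so that the corresponding triangles across all gadgets form $O(1)$ large cliques, with each gadget's special triangles leaving an exposed corner at a distinct depth. A single similarly-sized clause triangle can then pierce the stack and touch exactly the three exposed corners it needs, one from each of the three relevant variable gadgets, regardless of how far apart those variables are combinatorially. The reduction is from \textsc{Not-All-Equal-3-Sat} (so that two symmetric clause triangles $p_c,q_c$ suffice to check each clause), and the entire construction uses $13N+2M$ triangles covered by $15$ cliques. This stacking is the idea you are missing; without it, similarly-sized objects cannot bridge non-local variable--clause incidences in $O(1)$ triangles.
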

\begin{proof}
Let $H$ be the graph depicted in~\cref{fig:fatssized-lowerH}; we use the vertex names from the Figure.
%Our reduction works for both $H_1$ and $H_2$, as we will never make use of the (non-)existence of the edge $1'3'$.

We reduce from $\textsc{Not-All-Equal-3-Sat}$. Let $\Phi$ be a formula with $N$ variables $v_1,v_2,\ldots,v_N$ and $M$ clauses, each of which contains precisely three nonnegated variables. The ETH implies that there is no algorithm solving every such instance in time $2^{o(N+M)}$.

Each variable $v_i$ is represented by the variable gadget depicted in \cref{fig:fatssized-gadgets}~(i).
It consists of 13 triangles and six of them (i.e., $x_1,x_2,x_3,y_1,y_2,y_3$, marked in color in \cref{fig:fatssized-gadgets}~(i)) will play a special role. 
It is straightforward to verify that the variable gadget has exactly two list homomorphisms to $H$:
\begin{enumerate}
\item $f_0$, such that $f_0(x_1)=f_0(y_1)=2$ and $f_0(x_2)=f_0(y_2)=1$, and $f_0(x_3)=f_0(y_3)=3$,
\item $f_1$, such that $f_1(x_1)=f_1(y_1)=3'$ and $f_1(x_2)=f_1(y_2)=2'$, and $f_1(x_3)=f_1(y_3)=1'$.
\end{enumerate}
We will interpret the coloring $f_j$ of the variable gadget as assigning the value $j$ to $v_i$.

The variable gadgets corresponding to distinct variables are ``stacked'' on each other, so that the corresponding triangles from different gadgets form a clique, and the triangles from different cliques intersect each other if and only if they belong to the same variable gadget, see \cref{fig:fatssized-gadgets}~(ii). Note that the corner of each special triangle in each variable gadget that points toward the center of the gadget is not covered by other triangles.

\begin{figure}[htb]
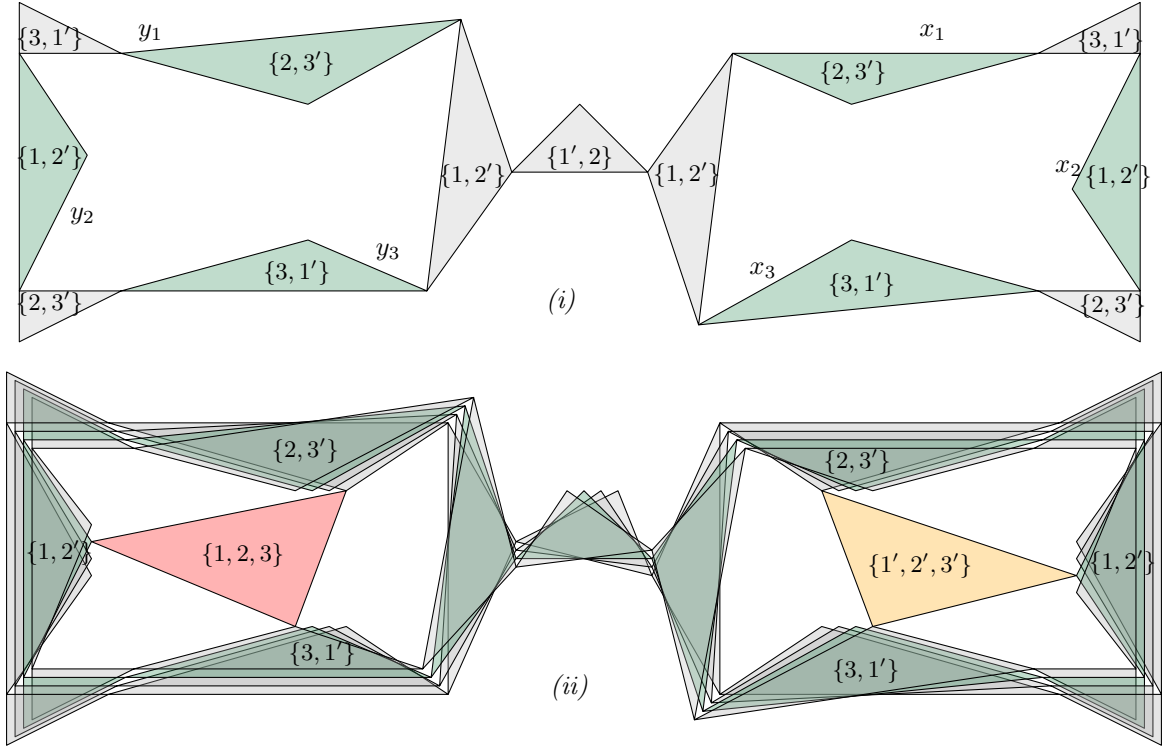

\centering
\includegraphics[scale=1,page=6]{figs}
\vskip 10pt
\includegraphics[scale=1,page=7]{figs}
\caption{(i) The variable gadget in the proof of \cref{thm:fatssized-lower}.
The sets indicate lists. 
(ii) Arrangement of all gadgets. A single variable gadget is colored green. The yellow triangle is $p_c$ and the red triangle is $q_c$, for some clause $c$.
} \label{fig:fatssized-gadgets}
\end{figure}

Now let us consider a clause $c=(v_i,v_j,v_k)$, such that $i < j < k$.
We introduce a triangle $p_c$ intersecting the triangle $x_1$ from the vertex gadget corresponding to $v_i$, 
the triangle $x_2$ from the vertex gadget corresponding to $v_j$, and the triangle $x_3$ from the vertex gadget corresponding to $v_k$.
Similarly, we introduce a triangle $q_c$ intersecting the triangle $y_1$ from the vertex gadget corresponding to $v_i$, 
the triangle $y_2$ from the vertex gadget corresponding to $v_j$, and the triangle $y_3$ from the vertex gadget corresponding to $v_k$, see \cref{fig:fatssized-gadgets}~(ii).
The list of $p_c$ is $\{1',2',3'\}$, and the list of $q_c$ is $\{1,2,3\}$.

Note that we can choose a color for $p_c$ if and only if for at least one variable of $c$, its variable gadget gets colored according to $f_1$ (i.e., this variable is set true).
Similarly, we can choose a color for $q_c$ if and only if for at least one variable of $c$, its variable gadget gets colored according to $f_0$ (i.e., this variable is set false). 
Consequently, both triangles corresponding to $c$ can be colored if and only if the clause $c$ is satisfied.

Note that all triangles are similarly-sized, and the vertex set of the constructed graph can be covered with 15 cliques. Furthermore for each of the cliques, the vertices of $H$ appearing in the lists form a reflexive clique in $H$. Thus the constructed graph can be seen as an instance of $\lhomorc{H}$.

The total number of triangles is $13N + 2M = \Oh(N+M)$, so the ETH lower bound follows.
\end{proof}

%\begin{remark}
%Let us point out that actually graphs in \cref{fig:fatssized-lowerH} are the only graphs $H$ with at most 6 vertices for which $\lhomorc{H}$ is
%\NP-hard. This was verified with quite tedious case analysis using the list of all connected graphs on at most 6 vertices~\cite{DBLP:journals/dm/CvetkovicP84} and the observations about the complexity of $\lhomorc{H}$ that we made in \cref{sec:algo-fatssized}.
%\end{remark}

\subsection{Disks}
In this section we show that the assumption that $\mrc{H} \leq 1$ in \cref{thm:cliquebased} cannot be significantly improved.
Our goal is to prove the following theorem.

\thmdiskslower*

We reduce from 3-\textsc{Sat}. Let $\ell_1,\dots,\ell_t$ be the literals of the formula $\Phi$ on $N$ variables and $M$ clauses, i.e., the $i$-th clause consists of literals $\ell_{3i-2},\ell_{3i-1}$, and $\ell_{3i}$, and $t=3M$. Let $k=1 + \lceil \log t \rceil$ be the number of binary digits required to represent numbers up to $t$.

%\subparagraph*{Construction overview.}
\paragraph{Construction overview.}
The construction will have some variable gadgets placed at the top, consisting of two disks with lists $\{T,F\}$, where the value of the first disk correspond to setting the variable true or false.

The bulk of the construction will consist of large cliques of disks of various sizes, and in each clique the disks will correspond to some specific subsets of literals. All of these cliques will have lists of size $2$, where the assigned colors correspond to the literal being true or false. At the top, the initial clique will have all the literals arranged by the index of the corresponding variable, i.e., starting with the positive literals of $x_1$, then the negative literals of $x_1$, then the positive literals of $x_2$, etc.

Suppose now that each literal index $i$ is represented with a binary number of $t$ digits (with leading zeros as necessary).
Then we will use a so-called \emph{divider gadget} to partition the set of literals to two subsets: the first subset will contain disks for those literals $\ell_i$ where the first binary digit of $i$ is $0$ , and the other subset will contain those where the first binary digit of $i$ is $1$. Using then two smaller copies of the divider gadget, we further partition both sets according to the second, third, etc. binary digits, creating a structure resembling a binary tree of depth $\log N + \Oh(1)$. At the leaves, the cliques contain a single disk, and the leaves are ordered in increasing order of the index $i$, that is, the literals of each clause $c_j$ appear at three consecutive leaves. We can then use a clause gadget for each consecutive triplet to check the clauses.

\begin{figure}[t]
\centering
\includegraphics[page=3,width=\textwidth]{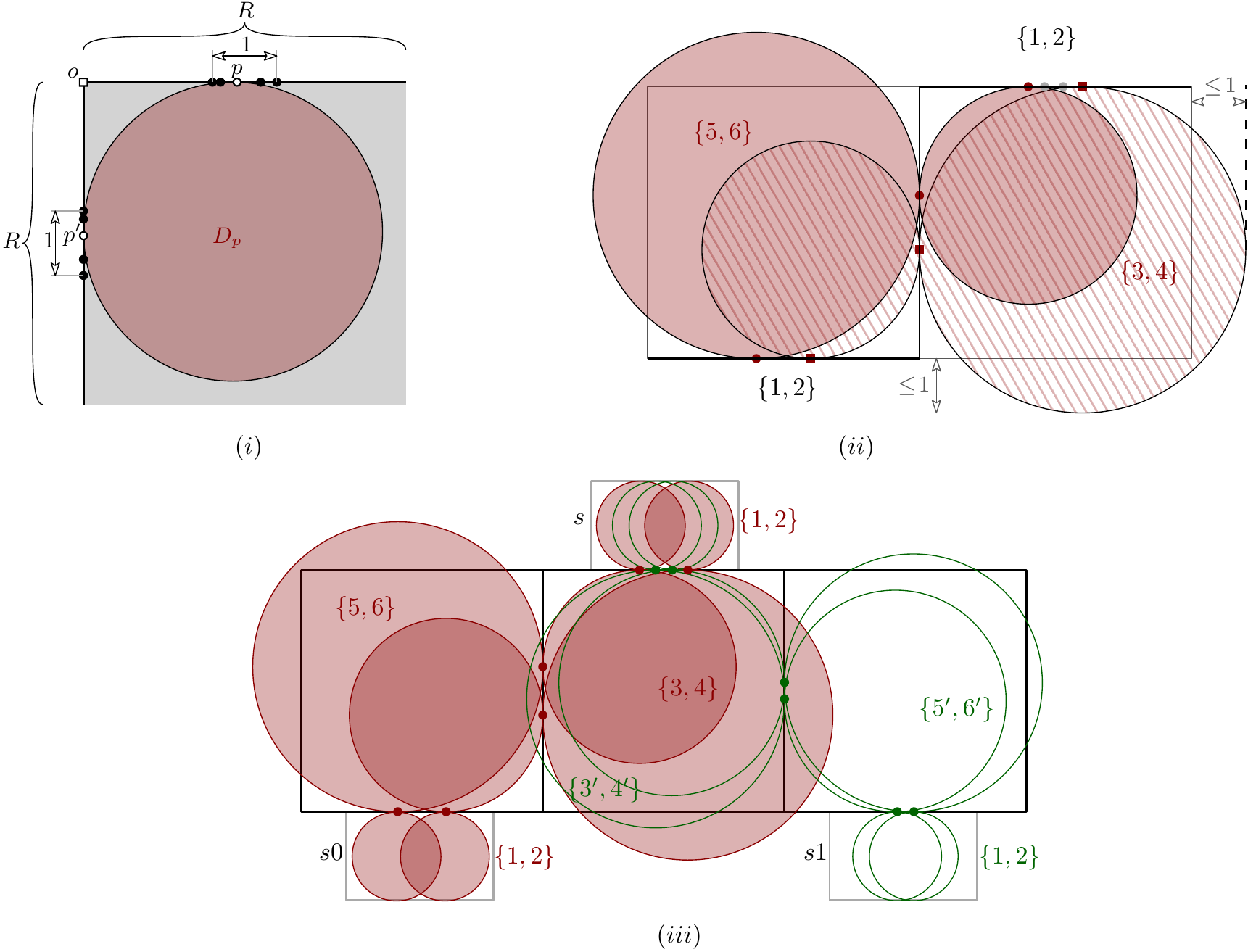}
\caption{Overview of the construction, with the path of the literal with binary index `$011$' highlighted. The lined rectangles are literal cliques, with the common prefix of the literal indices. The triplets of squares represent subset turning and divider gadgets.}\label{fig:disk_overview}
\end{figure}

We will now explain the construction in detail.

\begin{figure}[t]
\centering
\includegraphics[page=4]{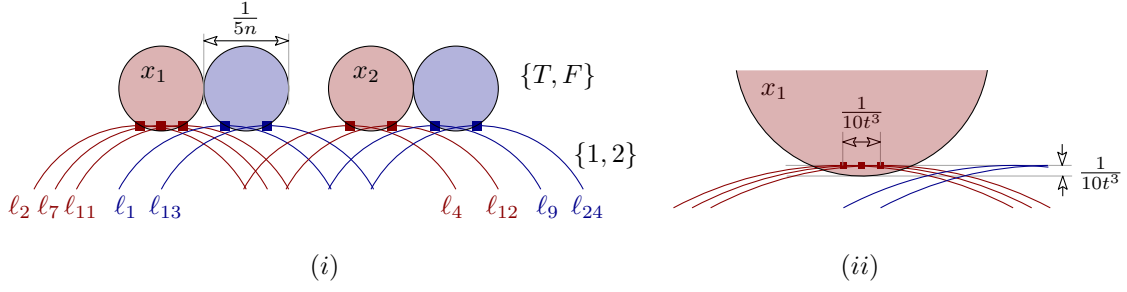}
\caption{(i) Variable gadgets with the top of the connecting literal clique. The first (red) disk of the variable gadget intersects only the disks of the positive literals of the variable, and the second (blue) disk intersects only the disks of the negative literals. (ii) A close-up of the first disk of $x_1$ with the positive intersecting literals. Note that no other literal disk intersects the red disk of $x_1$ because the negative literals and the literals of other variables are relatively far away.}\label{fig:disk_variable}
\end{figure}
\begin{figure}[h]
\centering
\includegraphics[page=5]{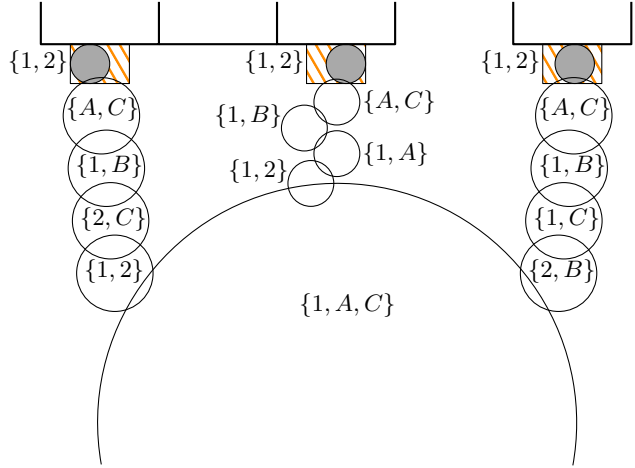}
\caption{A clause gadget attached to three consecutive ``leaves''.}\label{fig:disk_clause}
\end{figure}

%\subparagraph*{Literal cliques, variable and clause gadgets.}
\paragraph{Literal cliques, variable and clause gadgets.}
A literal clique consists of at most $t$ disks of unit radius that are on the same horizontal line. Each literal clique will contain the set of literals whose index starts with some fixed binary prefix $s$ of length at most $k$, and these cliques will be connected by other gadgets, creating a binary tree. Let us denote the set of literals with prefix $s$ by $L_s$. The initial literal clique will have disks for all the literals (the set $L_\emptyset$). In the initial clique, these literals will be ordered from left to right according to the corresponding variable and their sign; more precisely, we first have the disks of the literals where variable $x_1$ is positive, then the literals where $x_1$ is negative. This is then followed by the positive and then the negative literals of $x_2$, etc. The centers of the disks where $x_i$ is positive are placed at equal distances from each other, within an interval of length $\frac{1}{10t^3}$ on the $x$-axis. We then translate this interval to the right by $\frac{1}{5N}$, and place the centers of the disks where $x_i$ is negative. Translating the interval further to the right by $\frac{1}{4N}$ we can place the positive literals of $x_{i+1}$. Notice that at the end of the process, all disk centers are within horizontal distance of at most $N\cdot \frac{1}{5N}+ (N-1)\cdot \frac{1}{4N}+\frac{1}{10t^3}<1$ from each other, i.e., these disks form a clique in the intersection graph.

Each later clique will contain the subset $L_s$ of literals, positioned the same way, just translated somewhere else in the plane. Note that for prefixes $s$ of length $k$, the set $L_s$ is a singleton, it contains the literal of binary index $s$. These literal cliques will correspond to the leaves of our construction. All literal cliques have lists $\{1,2\}$, corresponding to the literal being set to true or false, respectively.

The variable gadgets connect to the top of the initial literal clique that contains all literals. The gadget for $x_i$ consists of two disks of diameter $\frac{1}{5N}$ corresponding to the variable and its negation, see \cref{fig:disk_variable}. The two disks touch each other, and have their centers on the line $y=1+\frac{1}{5N}-\frac{1}{10t^3}$ so that the first disk contains the topmost points of the disks corresponding to the positive literals of $x_i$, while the second contains the topmost points of the disks corresponding to the negative literals of $x_i$. It is routine to check that each disk is only intersected by the corresponding literal disks. The disks of the variable gadget have list $\{T,F\}$, and we interpret these colors on the first disk of $x_i$ as setting $x_i$ to true or false, respectively. The colors $\{1,2,T,F\}$ form a 4-cycle in $H$ with $1,2$ being reflexive vertices, see \cref{fig:disk_H}~(i). It is routine to check that the first disk of $x_i$ has color $T$ if and only if its positive literals get color $\{1\}$ and its negative literals get color~$\{2\}$.

Our clause gadget is depicted in \cref{fig:disk_clause}. Our construction will ensure that consecutive singleton literal cliques at the leaves have a gap of length $9$ between them, therefore the centers of the disks in them have a distance between $11$ and $13$. It is easy to construct a rigid structure from disks that induces the same subgraph regardless of the exact location of each disk within its rectangle. Our clause gadget uses the same idea as already seen in \cref{fig:convexfat-lowerH}, The colors $1,2,A,B,C$ form a $5$-cycle with reflexive vertices at $1,2$. See \cref{fig:disk_H}~(i) for a picture of the relevant part of $H$. Note that the literal disks have lists $\{1,2\}$, and the first gadget disks have lists $\{A,C\}$, i.e., they are colored $C$ if and only if the corresponding literal is true. Thus the gadget has a correct coloring if and only if at least one of the three literal disks have color $1$.

%\subparagraph*{Subset turning and the divider gadget.}
\paragraph{Subset turning and the divider gadget.}
Our task now is to connect a literal clique to its children by dividing its literals into two subsets, keeping the information carried for each individual literal. First, we show how we can create a turn gadget using disks of any size.

\begin{figure}[t]
\centering
\includegraphics[page=1,width=\textwidth]{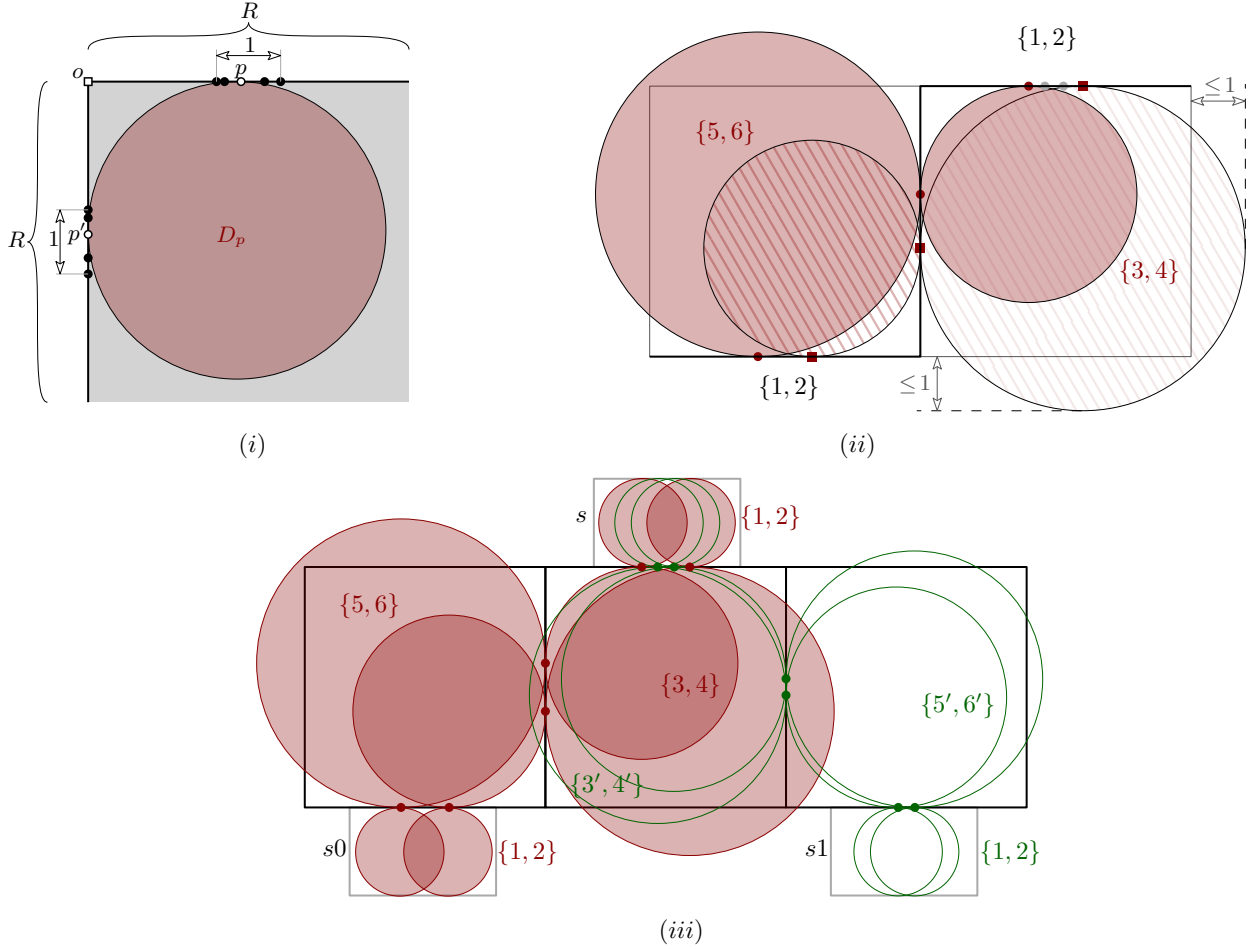}
\caption{(i) Unique disk of radius between $\left[\frac{R-1}{2},\frac{R+1}{2}\right]$ touching two perpendicular lines. (ii) Making two turns with some subset of the literals (iii) Dividing the set of literals into two arbitrary subsets (red versus green disks) using overlaid turns.}\label{fig:disk_divider}
\end{figure}

Consider a horizontal segment of length $R$ with its left endpoint at the  point $o$, and let $p$ be a point where a disk in some literal clique touches the segment from above. Suppose moreover that $p$ is somewhere in the length $1$ interval at the middle of the segment, see \cref{fig:disk_divider}~(i). Then the \emph{turning disk at $p$} is the unique disk $D_p$ that touches the segment from below and has radius $|op|$. Note that if we draw a vertical segment of length $R$ with top endpoint $o$, then it will touch the disk on the left at some point $p'$ where $|op|=|op'|$. The \emph{turning gadget} is simply a collection of turning disks for some custom set of points in the middle length-$1$ interval. We can represent the gadget with a square of side length $R$ whose top side is the initial segment. Note that some turning disks may not be completely covered by the square, but since $p$ is required to be in the middle length-1 interval, the disks can protrude at most distance $1$ beyond the boundary of the square. Also note that we can create an analogous gadget with disks that touch any pair of consecutive sides of the side-length $R$ square.

We can glue two turning gadgets together as depicted in \cref{fig:disk_divider}~(ii). The disks of the first (right) gadget have lists $\{3,4\}$, and the disks of the second (left) have lists $\{5,6\}$. In the graph $H$, we have $1,2,3,4$ as well as $3,4,5,6$ and $5,6,1,2$ form induced $4$-cycles. The connecting literal cliques have disks with lists $\{1,2\}$, and all of the colors $\{1,2,3,4,5,6\}$ are reflexive vertices of $H$, see \cref{fig:disk_H}~(ii). It is routine to check that the turning disks receive odd colors if and only if the corresponding disks in the literal cliques have color $1$.

Finally, we can overlay such a glued turning gadget with its mirror image, as depicted in \cref{fig:disk_divider}~(iii). In the mirror image, the disks of the first turning gadget get the list $\{3',4'\}$, and the disks of the second turning gadget get the list $\{5',6'\}$.
The vertices $1,2,3',4',5',6'$ induce the same graph as vertices $1,2,3,4,5,6$. These four turns together define a \emph{divider gadget} of size $R$. If the literal clique at the top contained the disks of index prefix $s$, then we use the first two turns (going to the left child, red disks in \cref{fig:disk_divider}~(iii)) only on the touching points for literals with prefix $s0$, and the other two turns (going to the right child, green disks in \cref{fig:disk_divider}~(iii)) only for the touching points for literals with prefix $s1$.

Notice however that inside the turning gadgets, there may be arbitrary intersections between red and green disks, therefore $3,4,5,6$ and $3',4',5',6'$ form a complete bipartite graph in $H$, see \cref{fig:disk_H}~(iii). Clearly the two sides of the gadget do not interfere and disks with colors $3',4',5',6'$ have an odd number if and only if the corresponding disk at the top literal clique has color $1$.

\begin{figure}
\centering
\includegraphics[page=2,width=\textwidth]{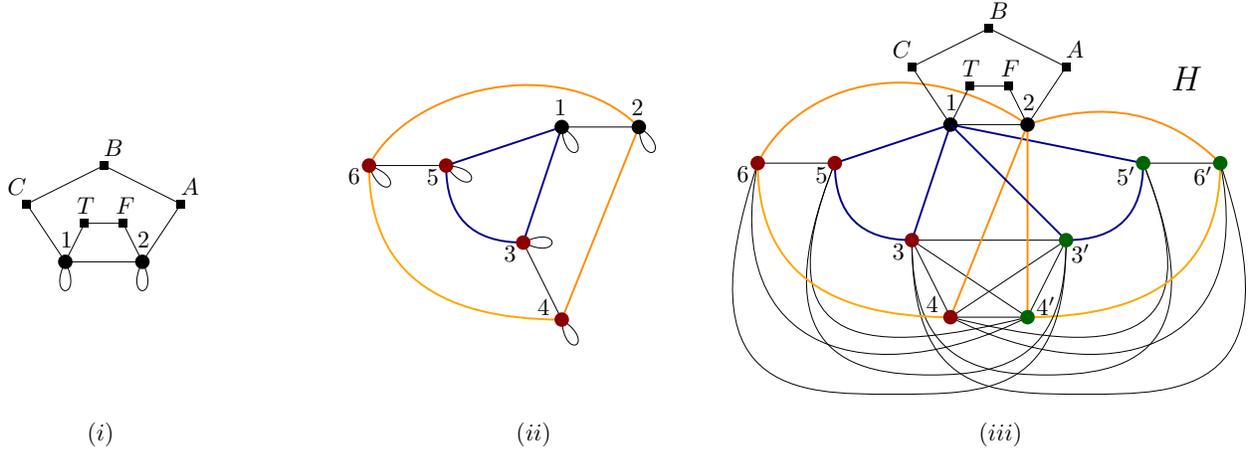}
\caption{(i) The part of $H$ used in the variable and clause gadgets. (ii) The part of $H$ responsible for propagating the truth of literals in the left side of the division. Blue edges propagate true literals, orange edges propagate a false literals. (iii) The graph $H$. Note that all numbered vertices are reflexive, and vertices with a letter are not reflexive.}\label{fig:disk_H}
\end{figure}

%\subparagraph*{The proof of Theorem~\ref{thm:disks-lower}.}
\paragraph{The proof of \cref{thm:disks-lower}.}
\begin{proof}
Recall that our formula has $t$ literals, and each literal index can be represented by a binary string of length $k=\lceil\log_2 t\rceil +1$.

We place the initial literal clique together with the variable gadgets as described in the construction. At the bottom of this literal clique the disks touch a length $1$ interval. We attach a divider gadget of size $R=6\cdot 2^{k-1}$ to this interval (see \cref{fig:disk_overview}). We then use the divider gadget to propagate the values stored in the literals to the children with prefix $s=0$ on the left and $s=1$ on the right. For literal cliques of prefix length $\len(s)$, we attach dividers of size $R=6\cdot 2^{k-1-\len(s)}$. At the bottom, we end up with singleton literal cliques hanging off of literal gadgets of size $R=6$. One can verify that the gaps between literal cliques of consecutive leaves have length $9$ (that is, the right  side of the $2\times 3$ rectangle covering the first leaf and the left side of the rectangle of the next leaf has distance $9$). It is also easy to verify that the turning disks of distinct divider gadgets are disjoint: recall that the disks protrude beyond the boundary of the base square by at most $1$, and the literal cliques have height $2$.

Based on the formula, the described set can clearly be constructed in polynomial time. Each literal has corresponding disks in $\Oh(k)=\Oh(\log t)$ gadgets, and each literal clique and divider has $\Oh(1)$ disks per represented literal. Additionally, the variable and clause gadgets have constant size. Thus for a $3$-CNF formula of $N$ variables and $M$ clauses with $t=3M$ literals, there are $\Oh(t\log t + M + N) = \Oh((M+N)\log(M + N))$ disks in the construction, which implies the desired lower bound under the ETH.
\end{proof}

\newpage
\section{Weighted generalizations of LHom(\emph{H})} \label{sec:weighted}
In this section we consider two weighted generalizations of the $\lhomo{H}$ problem, called \emph{Min Cost Homomorphism}~\cite{DBLP:journals/ejc/GutinHRY08} and the \emph{Weighted Homomorphism}~\cite{DBLP:journals/jcss/OkrasaR20}.
We denote them, respectively, by $\mchomo{H}$ and $\whomo{H}$.

\subsection{{Min Cost Homomorphism}}
For a fixed graph $H$, the instance of $\mchomo{H}$ is a graph $G$ equipped with a weight function $\wei : V(G) \times V(H) \to \QQ_{\geq 0}$, and an integer $k$.
The value of $\wei(v,a)$ is interpreted as a cost of assigning the color $a$ to the vertex $v$.
The cost of a homomorphism $f : G \to H$ is defined as $\sum_{v \in V(G)} \wei(v,f(v))$.
The problem asks whether $G$ admits a homomorphism to $H$ with total cost at most $k$.

Note that $\mchomo{H}$ is indeed a generalization of $\lhomo{H}$.
For an instance $(G,L)$ of $\lhomo{H}$ we can construct an equivalent instance $(G,\wei,0)$ of $\mchomo{H}$
by setting $\wei(v,a)=0$ if $a \in L(v)$, and $\wei(v,a)=1$ if $a \notin L(v)$.

However, the $\mchomo{H}$ is more robust, as in addition to \emph{hard} constraints (edges of $H$) it allows to express \emph{soft} constraints (weights). 
The most prominent special case is when $H$ is the graph depicted in \cref{fig:vertexcover}.
It is straightforward to observe that if $\wei(v,1) = 0$ and $\wei(v,2)=1$ for every vertex $v$ of the instance graph $G$,
then $G$ admits a homomorphism to $H$ with total cost at most $k$ if and only if it admits a vertex cover of size at most $k$ (or, equivalently, an independent set of size at least $|V(G)|-k$).

\begin{figure}[h]
\centering
\includegraphics[scale=1,page=8]{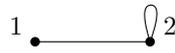}
\caption{The graph $H$, for which $\mchomo{H}$ is equivalent to \textsc{Min Vertex Cover}.} \label{fig:vertexcover}
\end{figure}

It is straightforward to observe that the divide-\&-conquer approach used in \cref{thm:cliquebased-general} can be generalized to the weighted setting.
Thus we immediately obtain the following strengthening of \cref{thm:cliquebased}.

\begin{theorem}\label{thm:algomchomo}
Let $H$ be a graph with $\mrc{H} \leq 1$. Then $\mchomo{H}$ can be solved in time:
\begin{enumerate}[(a)]
\item $2^{\Oh(\sqrt{n})}$ in $n$-vertex intersection graphs of fat convex objects,
\item $2^{\Oh(n^{2/3}\log{n})}$ in $n$-vertex pseudodisk intersection graphs.
\end{enumerate}
provided that the instance graph is given along with a geometric representation.
\end{theorem}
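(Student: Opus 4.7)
The plan is to adapt the divide-and-conquer approach of \cref{thm:cliquebased-general} to the weighted setting. The two structural ingredients that powered that proof carry over unchanged: the graph classes in question admit balanced clique-based separators of weight $\Oh(\sqrt{n})$ and $\Oh(n^{2/3}\log n)$ respectively~\cite{BergBKMZ20,BergKMT21}, and \cref{lem:mapclique} still bounds the number of homomorphisms from a clique $C$ to $H$ by $|C|^{|I(H)|}\cdot|R(H)| \leq 2^{\Oh(\log(|C|+1))}$ whenever $\mrc{H}\leq 1$, since all reflexive vertices of $C$ must collapse to a single reflexive vertex of $H$.

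First I would compute a $\delta$-balanced clique-based separator $C=\{C_1,\dots,C_p\}$ of weight $w(C)=\Oh(f(n))$, where $f(n)$ is the target separator weight for the class, in the promised time. Next I would enumerate all homomorphisms from $G[C_1\cup\cdots\cup C_p]$ into $H$. As in the unweighted case, the product of the per-clique bounds gives at most $2^{\Oh(|V(H)|\log|V(H)|\cdot w(C))} = 2^{\Oh(f(n))}$ candidate separator colorings $\sigma$.

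For each candidate $\sigma$, I would (i) add the local cost $\sum_{v\in S}\wei(v,\sigma(v))$, where $S=\bigcup_i C_i$, and (ii) propagate $\sigma$ to $G-S$ by modifying the weight function: for every neighbor $v\in V(G)\setminus S$ of a separator vertex $u$, I would set $\wei'(v,b)=+\infty$ for every $b\notin N_H(\sigma(u))$ (and keep $\wei$ otherwise). Any finite-cost homomorphism of $G-S$ under $\wei'$ then corresponds exactly to an extension of $\sigma$ to a homomorphism of $G$, and conversely. Since $G-S$ splits into connected components of size at most $\delta n$, I would recurse independently on each, summing the optimal costs, and finally take the minimum over all $\sigma$. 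This yields the recurrence
\[
F(n) \leq 2^{\Oh(f(n))} + 2^{\Oh(f(n))}\, F(\delta n),
\]
which resolves to $F(n)=2^{\Oh(f(n))}$, matching both claimed running times. Comparing the resulting minimum to the input threshold $k$ decides the instance.

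The only subtlety, and the main ``obstacle'' such as it is, lies in correctly encoding the hard adjacency constraints of $H$ as soft weights so that recursive calls remain bona fide $\mchomo{H}$ instances rather than instances of some list-weighted hybrid. Using a sentinel weight that is large enough to dominate any achievable finite cost (e.g.\ $1+\sum_{v,a}\wei(v,a)$) in place of $+\infty$ ensures this works in the standard rational/integer model and does not inflate the input size beyond a polynomial factor. All other steps are direct generalizations of the proof of \cref{thm:cliquebased-general}, so correctness and the running time follow immediately.
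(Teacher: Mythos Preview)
Your proposal is correct and matches the paper's approach exactly; in fact, the paper does not even spell out the argument, stating only that ``the divide-\&-conquer approach used in \cref{thm:cliquebased-general} can be generalized to the weighted setting'' and then immediately asserting \cref{thm:algomchomo}. Your write-up supplies precisely the details one would expect (per-clique coloring enumeration via \cref{lem:mapclique}, summing separator cost, propagating constraints by inflating weights, recursing on components, and the same recurrence), so there is nothing to add.
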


Now let us consider the possibility of generalizing \cref{thm:fatssized} in an analogous way.
Note that the first two steps, i.e., applying \cref{lem:trueseparator} and \cref{lem:reducetorc}, carry over to the more general setting.
However, recall that the last step in the proof of \cref{thm:fatssized} was the application of \cref{thm:edwards}, which solves every instance of $\lhomo{H}$, where each list is of size at most 2, by a reduction to \textsc{2-Sat}.
It is well-known that the weighted variant of \textsc{2-Sat} if \NP-complete and cannot be solved in subexponential time, unless the ETH fails~\cite{DBLP:journals/amai/Porschen07}. However, this does not rule out the possibility of dealing with the last step in some other way.

In the next theorem we show that this is not possible, and the complexity of $\lhomo{H}$ and $\mchomo{H}$, when $\mrc{H}=2$, differs in intersection graphs of fat-similarly sized objects.

\begin{restatable}{theorem}{thmmchomolower}
%\begin{theorem}
\label{thm:mchomo-lower}
Assume the ETH.
There is a graph $H$ with $\mrc{H}=2$, such that $\mchomo{H}$ cannot be solved in time $2^{o(n)}$ in $n$-vertex intersection graphs of fat smilarly-sized triangles, even if they are given along with a geometric representation.
%\end{theorem}
\end{restatable}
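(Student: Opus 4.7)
The plan is to adapt the hardness construction of \cref{thm:fatssized-lower} to the weighted setting, so that hardness persists once $\mrc{H}$ is reduced from $3$ to $2$ by deleting loops in $H$. Concretely, I would take $H$ to be the graph obtained from \cref{fig:fatssized-lowerH} by removing the loops on vertices $3$ and $3'$; the two reflexive triangles of the original now shrink to reflexive edges, so $\mrc{H}=2$.

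Starting from a \sat{3} instance $\Phi$ with $N$ variables and $M$ clauses, I would build essentially the same intersection graph $G_\Phi$ of fat similarly-sized triangles as in \cref{thm:fatssized-lower}, and equip it with a weight function $\wei$. For each variable gadget, weights of $0$ would be assigned to the colors realizing the two canonical homomorphisms $f_0,f_1$, and larger weights to all other color choices in the (possibly expanded) lists, so that the minimum-cost colorings of a variable gadget still correspond to the two truth assignments. For each clause gadget, the weights on $p_c$ and $q_c$ would be chosen so that a zero-cost completion exists if and only if the associated clause is satisfied. Setting the threshold $k$ to the baseline cost of a satisfying configuration then makes $(G_\Phi,\wei,k)$ a yes-instance of \mchomo{H} if and only if $\Phi$ is satisfiable.

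The main obstacle is that, once the loops on $3$ and $3'$ are removed, the original stacking of variable gadgets causes several triangles within a single cell-clique to be simultaneously mapped to the same irreflexive vertex --- a forbidden configuration for a homomorphism. I would handle this in one of two complementary ways: either locally redesign the variable gadget so that no cell-clique contains two triangles sharing an intended irreflexive color (for instance, by pulling apart the $x_3$- and $y_3$-layers into per-gadget-private cliques while preserving the attachment points for the clause gadgets), or widen the lists of the affected triangles and use weights to \emph{penalize} (rather than \emph{forbid}) such conflicts, so that the min-cost homomorphism still picks out exactly the intended truth assignments. In either case, all modifications are local and each gadget remains of constant size, so $G_\Phi$ still consists of $O(N+M)$ fat similarly-sized triangles, which yields the claimed $2^{o(n)}$ lower bound under the ETH.
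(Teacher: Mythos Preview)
Your proposal has a genuine gap in both suggested repairs, and the paper in fact takes a quite different route.

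For fix (2): in $\mchomo{H}$ the weight function is $\wei:V(G)\times V(H)\to\QQ_{\ge 0}$, so only (vertex, color) pairs carry cost. Two adjacent triangles both receiving the same irreflexive color is a hard homomorphism violation, not something vertex weights can ``penalize''. Widening lists to include reflexive escape colors could in principle avoid the violation, but then you would have to argue that the minimum-cost homomorphism still encodes exactly a satisfying assignment, and no such argument is given. For fix (1): after deleting the loops on $3$ and $3'$, the problematic stacks are precisely the $x_1$-, $x_3$-, $y_1$-, and $y_3$-layers, i.e., four of the six attachment layers for the clause triangles $p_c,q_c$. In \cref{thm:fatssized-lower} a clause triangle reaches its three variable gadgets only because the relevant corners of all gadgets are packed into a common central region; pulling those layers apart per gadget is not a local change, and you do not explain how a single similarly-sized triangle could still hit one corner in each of three now-separated locations.

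The paper sidesteps all of this by choosing a different $H$: the reflexive $C_4$ on vertices $1,2,3,4$, so $\mrc{H}=2$ immediately and every two-element list that is an edge of $C_4$ is a reflexive clique. It reduces from \textsc{Min Vertex Cover} rather than satisfiability. Each vertex $v_i$ of the cover instance gets a four-triangle gadget admitting exactly two list homomorphisms $f_0,f_1$; the only nonzero weight is $\wei(x,1)=1$ on one designated triangle $x$ per gadget, so $f_1$ (``select $v_i$'') costs $1$ and $f_0$ costs $0$. Each edge $v_iv_j$ contributes a single extra triangle with list $\{2,3\}$ that is colorable iff at least one endpoint's gadget uses $f_1$. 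All lists used are edges of the reflexive $C_4$, so stacking the gadgets into five cliques creates no irreflexive conflicts whatsoever, and the construction has $4N+M$ fat similarly-sized triangles.
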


\begin{proof}
Let $H$ be the reflexive $C_4$ with consecutive vertices 1, 2, 3, 4.

We reduce from $\textsc{Min Vertex Cover}$, let $(G',k')$ be an instance with $N$ vertices $v_1,v_2,\ldots,v_N$ and $M$ edges.
It is well-known that the existence of an algorithm solving every such instance in time $2^{o(N+M)}$ would contradict the ETH~\cite{DBLP:books/sp/CyganFKLMPPS15}.

We will construct an equivalent instance $(G,\wei,k)$ of $\mchomo{H}$.
To simplify the description, we will also prescribe lists $L : V(G) \to 2^{\{1,2,3,4\}}$. Note that this does not really change the problem,
as the fact that some $a$ is not in the list of some vertex $x \in V(G)$ can be expressed by setting $\wei(x,a)$ to some large value (larger that our budget $k$). 

Each vertex $v_i \in V(G')$ is represented by the vertex gadget depicted in \cref{fig:mchomo-gadgets}~(i). It consists of four triangles, two of each (denoted by $x$ and $y$ in the figure) will interact with the other gadgets.
It is straightforward to observe that the vertex gadget admits exactly two homomorphisms to $H$ that satisfy lists:
\begin{enumerate}
\item $f_0$, such that $f_0(x) = 1$ and $f_0(y)=4$, and
\item $f_1$, such that $f_1(x) = 2$ and $f_0(y)=3$.
\end{enumerate}
We will interpret the coloring $f_0$ as not selecting $v_i$ to the vertex cover, and $f_1$ as selecting it.
Thus we set $\wei(x,1)=1$ and the remaining weights are set to 0 (recall that in the final step we will modify the weights to get rid of lists, this step is not included in the definition of the weights above).

The vertex gadgets corresponding to distinct vertices of $G'$ are ``stacked'' on each other, so that the corresponding triangles from different  gadgets form a clique, and the triangles from different cliques intersect each other if and only if they belong to the same gadget, see \cref{fig:mchomo-gadgets}~(ii). Note that the corner of each special triangle in each vertex gadget that points toward the center of the gadget is not covered by other triangles.

Now we need to ensure that for each edge at least one of its endvertices must be selected to the vertex cover.
Consider an edge $v_iv_j$, where $i<j$.
We introduce a triangle $z$ intersecting the vertex $x$ from the vertex gadget corresponding to $v_i$ and the vertex $y$ from the vertex gadget corresponding to $v_j$ (see \cref{fig:mchomo-gadgets}~(ii)) and no other triangles from vertex gadgets.
The list of $z$ is $\{2,3\}$. It is straightforward to verify that $z$ can be colored if and only if at least one of the vertex gadgets corresponding to $v_i$ and $v_j$ is colored according to the coloring $f_1$.

Consequently, $G$ admits a (list) homomorphism to $H$ with cost at most $k'$ if and only if $G'$ has a vertex cover of size at most $k'$.

Note that all triangles are similarly-sized, and the vertex set of the constructed graph can be covered with five cliques. Furthermore for each of the cliques, the vertices of $H$ appearing in the lists form a reflexive clique in $H$.
The total number of triangles is $4N + M = \Oh(N+M)$, so the ETH lower bound follows.
\end{proof}

\begin{figure}[htb]
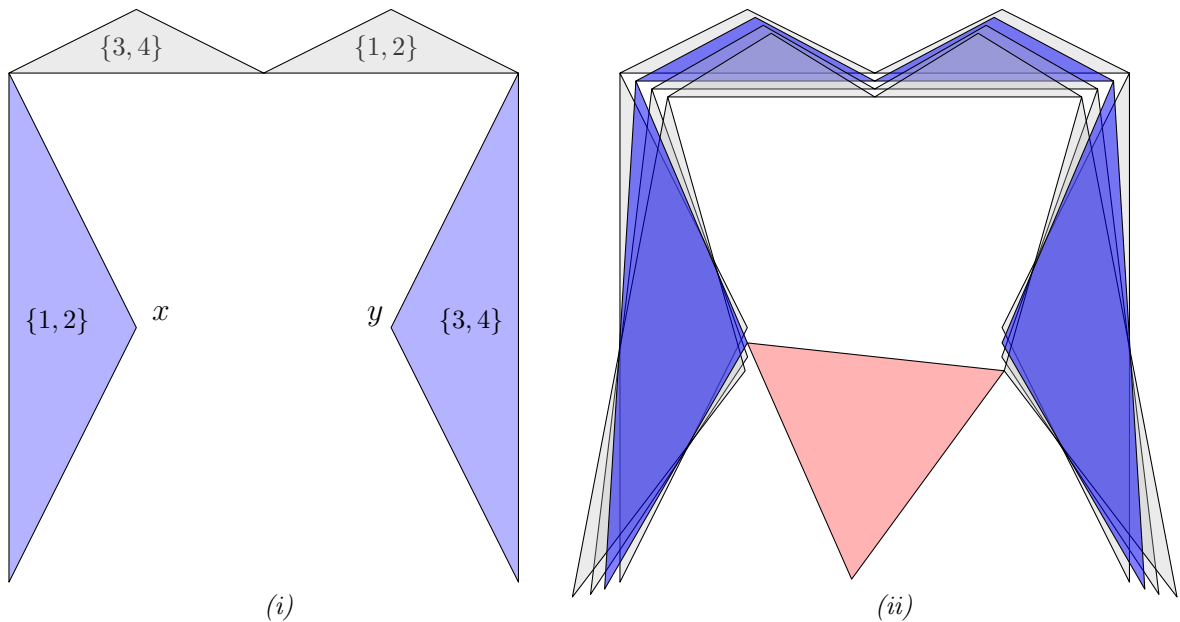

\centering
\includegraphics[scale=1,page=9]{figs}%
\hskip 20pt
\includegraphics[scale=1,page=10]{figs}
\caption{(i) The vertex gadget in the proof of \cref{thm:mchomo-lower}.
The sets indicate lists. 
(ii) Arrangement of all gadgets. A single vertex gadget is colored blue. The red triangle represents an edge of $G'$.
} \label{fig:mchomo-gadgets}
\end{figure}

\subsection{{Weighted Homomorphism}} \label{sec:whomo}
In the $\whomo{H}$ problem in addition to vertex weights we also have edge weights.
An instance of $\whomo{H}$ is a triple $(G,\wei,k)$ where $G$ is a graph, $\wei : V(G) \times V(H) \cup E(G) \times E(H) \to \QQ_{\geq 0}$, and $k$ is an integer.
The \emph{weight} of a homomorphism $f : G \to H$ is defined as $\sum_{v \in V(G)} \wei(v,f(v)) + \sum_{uv \in E(G)} \wei(uv,f(u)f(v))$.
Again we ask for the existence of a homomorphism of weight at most $k$.

Edge weights allow us to express even more problem. For example, consider the graph $H$ from \cref{fig:maxcut}.
Let $(G,\wei,|E(G)|-k)$ be an instance of $\whomo{H}$, where all vertex weights are 0, and for every edge $e \in E(G)$ the weights are $\wei(e,12)=0$ and $\wei(e,11)=\wei(22)=1$.
It is straightforward to observe that a homomorphism of weight at most $|E(G)|-k$ corresponds to a (non-necessarily induced) bipartite subgraph of $G$ with at least $k$ edges. This is equivalent to the classic \textsc{Simple Max Cut} problem.
By adjusting edge weights, we can also encode the \textsc{Weighted Max Cut} problem, where every edge can have its own weight and we look for a cut of maximum weight.

\begin{figure}[h]
\centering
\includegraphics[scale=1,page=11]{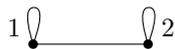}
\caption{The graph $H$, for which $\whomo{H}$ is equivalent to \textsc{Max Bipartite Subgraph}.} \label{fig:maxcut}
\end{figure}

We observe that the argument in \cref{thm:algomchomo} generalizes to the $\whomo{H}$ problem.
The only difference is that after guessing the colors of the vertices in the separator $S$,
we need take care of the weights of edges joining the separator with the rest of the graph, as these edges are not present in $G-S$.
However, it is easy to transfer these edges to their endpoints in $V(G) \setminus S$ by updating the vertex weights.
Thus we obtain the following strengthening of \cref{thm:algomchomo}.

\begin{theorem}
Let $H$ be a graph with $\mrc{H} \leq 1$. Then $\whomo{H}$ can be solved in time:
\begin{enumerate}[(a)]
\item $2^{\Oh(\sqrt{n})}$ in $n$-vertex intersection graphs of fat convex objects,
\item $2^{\Oh(n^{2/3}\log{n})}$ in $n$-vertex pseudodisk intersection graphs.
\end{enumerate}
provided that the instance graph is given along with a geometric representation.
\end{theorem}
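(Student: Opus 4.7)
My plan is to lift the divide-and-conquer argument of \cref{thm:cliquebased-general} from $\lhomo{H}$ to $\whomo{H}$, following the hint given in the paragraph preceding the statement. Given an instance $(G,\wei,k)$ with $G$ an $n$-vertex graph from the relevant class, I would first compute a balanced clique-based separator $C=\{C_1,\dots,C_p\}$ of weight $f(n)$, where $f(n)=\Oh(\sqrt n)$ for fat convex objects and $f(n)=\Oh(n^{2/3}\log n)$ for pseudodisks, using the separator theorems of De~Berg et al. invoked in the proof of \cref{thm:cliquebased}.

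The counting of separator colorings is unchanged from \cref{thm:cliquebased-general}: because $\mrc{H}\le 1$, at most $|I(H)|$ vertices of any clique $C_i$ can be mapped into $I(H)$ and all remaining vertices must collapse onto a single reflexive vertex, so each $C_i$ admits at most $|C_i|^{\Oh(1)}$ valid colorings, giving $2^{\Oh(f(n))}$ colorings of $S:=\bigcup_i C_i$ overall.

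For each coloring $\phi$ of $S$, I would account for every weight that becomes fixed by $\phi$: the vertex weights $\wei(v,\phi(v))$ for $v\in S$, and the edge weights $\wei(uv,\phi(u)\phi(v))$ for edges with both endpoints in $S$. The only weights not yet handled are those on edges between $S$ and $V(G)\setminus S$, which would otherwise be lost once we pass to components of $G-S$. I would fold these into the vertex weights on the outside: for every $u\in V(G)\setminus S$ and every $a\in V(H)$, update
\[
\wei'(u,a)\;:=\;\wei(u,a)+\sum_{v\in N(u)\cap S}\wei\bigl(uv,a\phi(v)\bigr),
\]
where $\wei(uv,a\phi(v))$ is taken to be a value larger than $k$ whenever $a\phi(v)\notin E(H)$, so that the hard homomorphism constraints between $u$ and $S$ are still enforced. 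Under $\wei'$, the components of $G-S$ become fully independent subinstances whose optimal weights simply add, and we recurse into each. The resulting recursion is the same as in \cref{thm:cliquebased-general} and solves to $2^{\Oh(f(n))}$.

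The only thing to be careful about, rather than a serious obstacle, is bookkeeping: each edge of $G$ must contribute its weight exactly once to the final objective. This is guaranteed by partitioning $E(G)$ into three disjoint groups (both endpoints in $S$, crossing $S$, both endpoints outside $S$) and handling them respectively via the fixed cost of the guess $\phi$, via the update to $\wei'$, and via the recursive calls on the components of $G-S$. Once this is verified, the running-time analysis is identical to the $\lhomo{H}$ case and yields the claimed bounds.
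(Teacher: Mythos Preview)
Your proposal is correct and follows essentially the same route as the paper: the paper also lifts the clique-based separator divide-and-conquer of \cref{thm:cliquebased-general}/\cref{thm:algomchomo} to $\whomo{H}$ by, after guessing the separator coloring, transferring the weights of separator-crossing edges onto the vertex weights of their endpoints in $V(G)\setminus S$. Your explicit update rule for $\wei'$ and the three-way edge partition are exactly the bookkeeping the paper alludes to when it says ``it is easy to transfer these edges to their endpoints in $V(G)\setminus S$ by updating the vertex weights.''
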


On the other hand, as $\whomo{H}$ is a generalization of $\mchomo{H}$, the hardness from \cref{thm:mchomo-lower} transfers to $\whomo{H}$.
However, in this case we can get a much stronger lower bound. Indeed, recall that whenever $\mrc{H} \geq 2$, then \textsc{Weighted Max Cut} is a special case  of $\whomo{H}$ (by adjusting vertex weights we can ensure that the only vertices that can be used are the ones that form a reflexive $K_2$). Since \textsc{Weighted Max Cut} is \NP-hard and, assuming the ETH, cannot be solved in subexponential time in \emph{complete graphs}~\cite{DBLP:journals/njc/BodlaenderJ00}, which can be realized as intersection graphs of any geometric objects, we obtain an analogous lower bound for $\whomo{H}$.

\newpage
\section{Conclusion and open problems}\label{sec:conclusion}
Let us conclude the paper with pointing out some direction for further research.

\paragraph{Algorithms for unit disk graphs.}
One of the best studied classes of intersection graphs are \emph{unit disk intersection graphs}.
They are known to admit many nice structural properties that can be exploited in the construction of algorithms~\cite{DBLP:journals/dm/ClarkCJ90,DBLP:journals/jacm/BonamyBBCGKRST21}.
However, we were not able to obtain any better results that the general ones for (pseudo)disk intersection graphs given by \cref{thm:cliquebased}.
On the other hand we were not able to show that subexponential algorithms for this class cannot exist. We believe that obtaining improved bounds for unit disk graphs is an interesting and natural problem. 

Let us point out that there are three natural places where one could try to improve our hardness reduction in \cref{thm:disks-lower}:
(a) to avoid using disks of unbounded size,
(b) to show hardness for some $H$ with $\mrc{H} \in \{2,3\}$, and
(c) to improve the lower bound to $2^{\Omega(n)}$ (instead of $2^{\Omega(n / \log n)}$).

\paragraph{Robust algorithms.}
Recall that the algorithmic results from \cref{thm:cliquebased} and \cref{thm:fatssized} necessarily require that the instance graph is given with  a geometric representation. This might be a serious drawback, as recognizing many classes of geometric intersection graphs is \NP-hard~\cite{DBLP:journals/dm/HlinenyK01} or even $\exists\R$-hard~\cite{Schaefer2017,DBLP:journals/sigact/Cardinal15}.

On the other hand, the algorithm from \cref{thm:strings} can be made \emph{robust}~\cite{DBLP:journals/jal/RaghavanS03}:
the input is just a graph, and it either returns a correct solution, or (also correctly) concludes that the instance graph is not in our class.
Such a conclusion can be reached if the exhaustive search for balanced separators of given size fails.
Note that it might happen that the instance graph is not a string graph, but still has balanced separators of the right size -- then the algorithm returns the correct solution.

We believe it is interesting to investigate if the algorithm from \cref{thm:cliquebased} and \cref{thm:fatssized} can be made robust.
In particular, can one find clique-based separators of weight $\Oh(n^\alpha)$ for $\alpha<1$ in time $2^{n^\alpha \log^{\Oh(1)}n}$?
Note that such separators might be of size linear in the number of vertices.

\paragraph{Complexity of Simple Max Cut in (unit) disk graphs.}
Recall from section \cref{sec:whomo} that \textsc{Simple Max Cut} is a special case of the Weighted Homomorphism problem.
It is known that \textsc{Simple Max Cut} is \NP-hard in unit disk graphs~\cite{DBLP:journals/tcs/DiazK07}. However, the hardness reduction introduced a quadratic blow-up in the instance size and thus only excludes a $2^{o(\sqrt{n})}$-algorithm (under the ETH).
We believe it is interesting to study whether the problem can indeed be solved in subexponential time in unit disk graphs.
Let us point out that an $2^{\Oh(\sqrt{n})}$ algorithm is known for (a slight generalization of) \emph{unit interval graphs}, which form a subclass of unit disk graphs, but have much simpler structure and it is not even known if the problem is \NP-hard in this class~\cite[Section 3.2]{DBLP:journals/algorithmica/KratochvilMN21}.

\paragraph{Improving the running time.}
Using the $\ork{3}(a,b)$-gadget from \cref{thm:gadgets} and a reduction from \textsc{Planar 3-Sat}~\cite{DBLP:journals/dam/Kratochvil94},
one can easily show that for every non-bi-arc graph $H$ that the $\lhomo{H}$ problem is \NP-hard in planar graphs,
which form a subclass of disk intersection graphs~\cite{koebe1936kontaktprobleme} and of segment graphs~\cite{ChalopinG09,DBLP:conf/soda/GoncalvesIP18}.
Thus in particular the algorithm from \cref{thm:string-non-predacious} cannot be improved to a polynomial one, unless \P=\NP.
However, the reduction above only excludes a $2^{o(\sqrt{n})}$-algorithm, under the ETH.
We actually believe that $2^{\widetilde{\Oh}(\sqrt{n})}$ should be the right complexity bound for $\lhomo{H}$ in string graphs, where $H$ is non-predacious.

\newpage
\bibliographystyle{plain}
\bibliography{main}

\end{document}